\def\R{\mathbb{R}}
\def\F{\mathbb{F}}
\def\E{\mathbbm{E}}
\def\N{\mathbb{N}}
\def\P{\mathbb{P}}
\def\P{\mathbb{P}}
\def\Q{\mathbb{Q}}
\def\Z{\mathbb{Z}}
\def\AA{\mathcal{A}}
\def\BB{\mathcal{B}}
\def\CC{\mathcal{C}}
\def\FF{\mathcal{F}}
\def\XX{\mathcal{X}}
\def\NN{\mathcal{N}}
\def\GG{\mathcal{G}}
\def\c{\mathscr{C}}
\def\NA{\mathrm{NA}}
\def\hypo{\mathrm{hypo}\,}
\def\ri{\mathrm{ri}}
\def\indic{\mathbf{1}}
\DeclareMathOperator*{\esssup}{ess\,sup}
  \newtheorem{theorem}{Theorem}[section]
  \newtheorem{lemma}      [theorem]{Lemma}
  \newtheorem{corollary}  [theorem]{Corollary}
  \newtheorem{definition}[theorem]{Definition}
  \newtheorem{example}   [theorem]{Example}
  \newtheorem{remark}    [theorem]{Remark}
\newenvironment{proof}
                {\noindent\textit{Proof.\;}}
                {%
                    \nopagebreak
                    \hfill{\vrule width 1ex height 1ex depth 0ex}
                    \medskip
                    \goodbreak
                }
\title{Financial market models in discrete time beyond the concave case
\author{
  Mario \v{S}iki\'c%
  \thanks{
    Department of Mathematics, ETH Zurich, \texttt{mario.sikic@math.ethz.ch}.
    }
 }}
\begin{document}

\maketitle

\begin{abstract}
  In this article we propose a study of market models starting from a set of axioms, as one does in the case of risk measures. We define a market model simply as a mapping from the set of adapted strategies to the set of random variables describing the outcome of trading. We do not make any concavity assumptions. The first result is that under sequential upper-semicontinuity the market model can be represented as a normal integrand. We then extend the concept of no-arbitrage to this setup and study its consequences as the super-hedging theorem and utility maximization. Finally, we show how to extend the concepts and results to the case of vector-valued market models, an example of which is the Kabanov model of currency markets.
\end{abstract}

\section{Introduction}
Thinking in the field of mathematical finance and, more broadly,  economics consists usually of two steps. The first step is postulating a model for the effect that is under study and justification of the model. This consists of connecting the action of the trader to his outcome of trading. From that point on the study focuses on the particular model at hand and analyzes the consequences. 

In mathematical finance a model is a description of the behavior of some indicator of performance in the market as it depends on the action of the agent or trader. The three main problems in mathematical finance are study of whether the model makes sense, i.e. arbitrage theory, hedging, and the study of how to achieve optimal performance. In general the methods themselves depend on the particular model under study.

The ubiquitous assumption of concavity of financial market models comes into the argument due to two things: first, one usually aims at obtaining the dual representation of the set of superhedgeable claims; and second, the space of admissible strategies $\AA$ in the market is not included in a locally convex space, but due to convexity, one can use the trick of passing to 'subsequences of convex combinations'. This line of argument was introduced by Schachermayer~\cite{schachermayer1992hilbert}.

Staying with the concavity assumption, it was in  Pennanen~\cite{pennanen2011convex} that was shown that the questions of no-arbitrage and closure of the set of superhedgeable claims can be embedded into the convex duality framework, thus yielding those results for a much larger set of market models. However, the problem with general models is that with more generality one loses the interpretation. For instance, in the fricionless market model one can state the fundamental theorem of asset pricing as being equivalent to the existence of an equivalent martingale measure for the stock price process. Only in the case of market models with proportional transaction costs is such a clean interpretation still possible (see Kabanov et al.~\cite{kabanov2002no}).
  
Market models with transaction costs that are not convex were considered in Lobo et al.~\cite{lobo2007portfolio}, where proportional and fixed costs are considered in a computation framework. They propose a heuristic, i.e. an iterative scheme, for calculating the optimal strategy and present numerical experiments of those. See also the recent publication by Dolinsky and Kifer~\cite{dolinsky2014risk}.

In this paper we will study models of financial markets that do not satisfy the concavity property. There are two levels of non-concavity in the market model. On one side, it may happen that the set of admissible strategies is not convex. A simple example of this situation is a probably realistic case where one can hold in the portfolio only an integer number of any one asset; this one can model with strategies $\vartheta$ taking values in the set of integers $\Z^d$. On the other side, as noted in F\"ollmer and Schied~\cite{follmer2011stochastic}, transaction costs might not be convex in the sense that a bigger trader could be able to procure a better price than the smaller one per unit volume. Fixed transaction costs is a notable example of this situation. Also, one may notice that some models of market illiquidity are not convex, see e.g. Roch and Soner~\cite{roch2011resilient}.

In this paper we consider the set of strategies available to the trader $\AA$ as a basic object. On a filtered probability space $(\Omega,\FF,\F,P)$ the financial market model will be defined as a mapping from the set of adapted sequences
  $\AA 
    = \{(\vartheta_t)_{t=0,\ldots,T-1}
      \,|\,
      \vartheta_t\in L^0(\FF_t;\R^d)\}$ 
that represents the strategies of the agent in the market to the set $L^0(\FF)$ of measurable random variables representing the outcomes of trading. The values of the market model can, in general, be in any set. For simplicity we consider only two cases: when the market model takes values in the set of random variables $L^0(\FF;\R)$ and when it takes values in the set of random vectors $L^0(\FF;\R^n)$. The canonical variable one considers as a market model is mark-to-market value of the final position, liquidated final wealth, or utility from consumption.
  
  Comparing the research of financial market models with that of risk measures, the difference in approaches is immediately apparent. In the latter one starts with axiomatics: a (conditional) risk measure is a mapping from the space of financial positions in the space $L^\infty(\FF;\R)$ into the space of $L^\infty(\GG;\R)$, which describes the required wealth to make the position acceptable. Here, of course, one has $\GG\subset\FF$. One approaches it this way because it is a general. Interestingly enough, the market models in the financial literature are analyzed on the case by case basis. The main result in the risk measures stream of literature is that under lower semicontinuity property, i.e. the Fatou property, one also has a Fenchel--Moreau dual representation (See F\"ollmer and Schied~\cite{follmer2011stochastic}). 

  One additional comparison of our work is with the recently developed field of $L^0$ modules (see~\cite{cheridito2012conditional}). The $L^0$ module theory developed in the mentioned paper can be equaled directly to our one step market model. We postpone discussion about this connection for later research.
  
  %%%%%
  The paper is structured as follows: in Section~2 we define the market model, i.e. lay out the axioms, and explain the reasons for studying those through examples; in Section~3 we state a representation result for the market model; in Section~4 we define the no-arbitrage condition and state its consequences like the closure of the set of superhedgeable claims and utility maximization; Section~5 gives an extension of the definitions and results to the vector valued market models. In the appendix, we will collect some definitions and ideas about measurable correspondences.

\subsection{A word about notation}
On the set $\R^n$ we will denote the scalar product with $\langle\cdot,\cdot\rangle$. The corresponding norm will be denoted by $\|\cdot\|$. Relation on $\R^n$ will be induced with a closed, convex cone $K$; for $x,\,y\in\R^n$ we write $x\succeq_K y$ to mean $x-y\in K$.

The usual spaces of random variables and vectors are denoted by $L^0(\Omega,\FF,\P;\R^n)$ and $L^\infty(\Omega,\FF,P;\R^n)$. The norm on the latter space we denote by $\|\cdot\|_\infty$, which is defined $\|X\|_\infty = \esssup\|X\|$ for any $X\in L^0(\Omega,\FF,\P;\R^n)$. This notation will be abreviated in the text where no confusion can arise. The same notation will be used also for the set of measurable selections of a corresponcence $K:\Omega\rightrightarrows\R^n$, i.e. we will denote it by $L^0(\Omega,\FF,P;K)$.

Stochastic processes are time indexed collections of random variables $(X_t)_{t=0,\ldots,T}$ for some time horizon $T$. The increment of the will be denoted by $\Delta X_i = X_i-X_{i-1}$. We will sometimes need to consider stochastic processes as random vectors, i.e. elements of $L^0$ or $L^\infty$. We can then talk about their $L^\infty$ norm.

%  _   _   _   _   _   _   _   _   _   _   _   _  
%   \_/ \_/ \_/ \_/ \_/ \_/ \_/ \_/ \_/ \_/ \_/ \_/ 
% \_/ \_/ \_/ \_/ \_/ \_/ \_/ \_/ \_/ \_/ \_/ \_/ \
%   \_/ \_/ \_/ \_/ \_/ \_/ \_/ \_/ \_/ \_/ \_/ \_/
\section{Model of the financial market}
We will consider models of financial markets in finite discrete time with time horizon $T>0$. A trader can rebalance his or her position only at a finite number of time instances $\{0,1,\ldots,T-1\}$. We assume that the portfolio of the trader can be described with a vector in a finite dimensional vector space $\R^d$.

On a probability space $(\Omega,\FF,\P)$ the information available to the investor is given by a filtration $\F=(\FF_t)_{t=0,\ldots,T}$, i.e. an increasing sequence of sub-sigma algebras of $\FF$. We denote a sequence of adapted strategies with respect to the filtration $\F$ by
  $$
    \AA =\big\{(\vartheta_t)\ \big|\ 
      \vartheta_t\in L^0(\Omega,\FF_t,\P;\R^d)\,\,\,\forall t=0,\ldots,T-1\big\}.
  $$
Elements of the set $\AA$ will be called strategies; those are the actions available to investors. 

In financial mathematics one usually works with predictable strategies as there are technical reasons for that in continuous time. In the discrete time setup this distinction is irrelevant, as the classes of processes are equivalent up to re-indexation. We, thus, work with adapted strategies, as this makes notation more transparent.

In models of financial mathematics one describes the portfolio of the trader by specifying his or her holdings in each of the assets. So, if there are $d$ stocks in the market, it is sufficient to specify the number of shares of each asset that the trader is holding. 

\begin{definition}\label{def:market model}
A market model, which we will also call gains from trading, is a mapping 
\begin{align*}
  \widehat V:\AA \rightarrow L^0(\FF;\R\cup\{-\infty\})
\end{align*}
satisfying the following two axioms:
\begin{enumerate} 
\item [{\textbf{A1:}}] (normalization) 
if the agent does not participate in the market, then his or her gains from trading are zero, i.e.
  \begin{align*}
    \widehat V(0) = 0;
  \end{align*}
\item [{\textbf{A2:}}] (locality)
the gains from trading $\widehat V$ depend only locally on the strategy, i.e. for any $t$, any set $A\in \FF_t$, and strategy $\vartheta \in \AA$, we have the following:
  \begin{align*}
    \widehat V(\vartheta_0,\ldots,\vartheta_{T-1}) \indic_A 
  = \widehat V(\vartheta_0,\ldots,\vartheta_{t-1},\vartheta_t\indic_A,\vartheta_{t+1},\ldots,\vartheta_{T-1})\indic_A \quad a.s.
  \end{align*}
\end{enumerate} 
\end{definition}

Before proceeding we first briefly comment on the definition above. First note that the value $-\infty$ is allowed for the market model to indicate that the strategy is not feasible; it is effectively there in order to describe constraints in the model. 

Axiom {\textbf{A1}} is there to say that there is at least one strategy that is feasible in the market, i.e. that gives finite gains from trading. If $\widehat V:\AA \rightarrow L^0(\FF;\R\cup\{-\infty\})$ would be a mapping satisfying {\textbf{A2}} but not {\textbf{A1}} and if $\vartheta\in\AA$ would be a feasible strategy, then the translated model $\bar V(z) = \widehat V(z + \hat z) - \widehat V(\hat z)$ is a market model by Definition~\ref{def:market model}.

Axiom {\textbf{A2}} is an axiom that is always implicit in the definition of the market model. One should compare it with the fork-convexity condition of, e.g., \v{Z}itkovi\'c~\cite{vzitkovic2002filtered}. The intuition is that the gains from trading should depend only on the sequence of positions taken by the strategy and the realization $\omega$.

\begin{remark}
  Compare these axioms with the axioms for a conditional risk measure. The fundamental difference is in the measurability of the domain and codomain of the mapping. Let $\GG\subset\FF$ be a sub-sigma algebra. Then the conditional risk measure maps $\rho:L^0(\FF)\rightarrow L^0(\GG)$, where the locality property is $\rho(\indic_A X) = \indic_A\rho(X)$ for each set $A\in\GG$, i.e. measurable with respect to the codomain sigma algebra.
\end{remark}

\begin{remark}
  The choice of the codomain of the market model is somewhat arbitrary. We could have also taken as a domain any space $L^0(\Omega,\FF,P;\XX)$ where $\XX$ is some topological space. We will see that in order to define the concepts used below, we only need a topology on the space $\XX$ and a partial relation $\succeq$. The main advantage of choosing $\R$ is that one has a canonical topology and a canonical complete binary relation $\geq$ on it. Also, on this space there is a canonical minimal element $-\infty$ which we append to the space $\R$ in order to model constraints. In Section~5 we will show how the theory transfers to the codomain $\R^n$.
\end{remark}

We now provide a few canonical examples of a market model. This will, hopefully, elucidate the argument a little.

%%%%%%%%%%%%%%%%%%%%%% EXAMPLES
\begin{example}\label{ex:frictionless}
The basic model of a financial market is the frictionless market. The market model is described completely by an adapted process $(S_t)_{t=0,\ldots,T}$, with $S_t$ representing the price of the stock at time $t$. We tacitly assume existence of an additional bank account with zero interest rate and value 1. The final gains from trading, using a self-financing strategy, is
\begin{align*}
  \widehat V(\vartheta)
    = \sum_{t=0}^{T-1} 
       \left\langle \vartheta_t, S_{t+1} - S_t\right\rangle.
\end{align*}
Note that $\langle\cdot,\cdot\rangle$ denotes the standard scalar product on $\R^d$, which is in this case, with $d=1$, simply a product. For the market model $\widehat V$ to be 
$[-\infty,\infty)$ valued, the stock price process needs also to have values in $(-\infty,\infty)$. Then also every strategy $\vartheta\in\AA$ attains a finite value $\widehat V(\vartheta)$. 

It is easy to see how to extend the model to more then one stock. Further extension would be to add static options $\{f_i\}_{i=1}^n$ that can be purchased at initial time and held in the portfolio until the end of the trading period. It is easy to see how to extend the model to accommodate this extension.
\end{example}

%%%%%
\begin{example}\label{ex:additive}
To the basic frictionless market model, described by the stock price process $(S_t)_{t=0,\ldots,T}$ we add transaction costs. The capital gains are still described as in the previous example, however for every change from $\vartheta_{t-1}$ to $\vartheta_t$ of number of stocks in the portfolio, the trader incurs costs $g_t(\vartheta_t-\vartheta_{t-1})$. Those need to be paid from the bank account. The following is the model of~\cite{dolinsky2013duality} with one stock $(S_t)_{t=0,\ldots,T}$ and a bank account with zero interest rate. The market model is written here as
\begin{align*}
   \widehat V(\vartheta) = \sum_{t=0}^{T-1}\Big[
     \vartheta_t(S_{t+1}-S_t)
     - g_t(\vartheta_t - \vartheta_{t-1})\Big],
\end{align*}
with the convention $\vartheta_{-1}=0$. The transaction costs $g_t$ are in general mappings $g_t:\Omega\times\R\rightarrow\R_+\cup\{\infty\}$ such that $g_t(\Delta \vartheta_t)\in L^0(\FF;\R)$ for any random variable $\Delta \vartheta_t\in L^0(\FF_t;\R)$.
Here are a few prominent examples from mathematical finance:
\begin{description}
  \item[{\textbf{proportional transaction costs:}}] the costs are $g_t(\Delta \vartheta_t) = \lambda_t|\Delta \vartheta_t|$ for some sequence of random variables $\lambda_t>0$ a.s. One often encounters the choice $\lambda_t = \lambda S_t$ for some constant $\lambda>0$, where also the stock price process $S$ is non-negative.
  \item[{\textbf{fixed transaction costs:}}] a trader needs to pay a fixed fee $\lambda$ for every transaction, irrespective of its size. The costs $g_t$ one writes as follows 
$$ 
  g_t(x) = \left\{\begin{array}{ll}
     \lambda & x\not=0\\
     0       & \textrm{otherwise}.
  \end{array}\right.
$$
  \item[{\textbf{portfolio constraints:}}] the possible position in the stock $\vartheta_t$ at time $t$ is constrained to lie in some set $D_t\subseteq\R$. This gives rise to a somewhat different model of the form
\begin{align*}
   \widehat V(\vartheta) = \sum_{t=0}^{T-1}\Big[
     \vartheta_t(S_{t+1}-S_t)
     - \indic_{D_t}(\vartheta_t)\Big],
\end{align*}
with $\indic_{A}(x)$ is $0$ if $x\in A$ and $\infty$ otherwise.
\end{description}
\end{example}

\begin{remark}
  When trading in the market model incurs transaction costs, then it is not clear what a concept of portfolio value should be. There are two canonical candidates. The mark-to-market value is the value one obtains by multiplying the number of stocks with their 'mid-quote price' and summing those up. Of course, what is a mid-quote price depends on the model. The liquidated portfolio value is the value one would see on the bank account were one to change his or her holdings to the bank account, i.e. having $\vartheta_t=0$. In the mathematical finance literature one differentiates between the two in order to get a desired interpretation of the dual variables.
\end{remark}

\begin{example}
  Here we give a more involved example of a limit order book. More information about modelling considerations can be found in~\cite{roch2011resilient}. The 'equilibrium stock price' process $(S_t)_{t=0,\ldots,T}$ represents the price when there is no trading; with trading, after the trading period, the price is given with $S_t+\ell_t$. If a big trader wants to execute the trade $\Delta\vartheta_t$ at time $t$, he or she moves the price, so the price after trade changes by $m_t \Delta\vartheta_t$. The model is the following
\begin{align*} 
 \ell_{t+1}  &=  \kappa \ell_t + 2m_{t+1} \Delta\vartheta_{t+1}  \\
 V_{t+1} &= V_t + \vartheta_t(\Delta S_{t+1} + \kappa\Delta\ell_t) - m_t(\Delta\vartheta_t)^2
\end{align*}
for some constant $\kappa\in(0,1)$, capturing the decay of price impact $\ell$, and a strictly positive process $(m_t)$, encoding the 'depth' of the limit order book of $\frac12m_t$. We set as the market model in our $\widehat V(\vartheta) = V_T(\vartheta)$ where one can convince oneself that $V_T$ is the mark-to-market value of the portfolio $(\vartheta_t)_{t=0,\ldots,T-1}$. A particular feature of the model is that the gains from trading do not depend on the strategy in a convex manner.
\end{example}

\begin{example}
  The final example will be the one in which the market model cannot be identified with capital gains. The agent trades in the frictionless market and can also consume a part of the money. Starting with the amount $V_0 >0$ on the bank account, the agent decides on the pair $(\vartheta_t,c_t)$ where $\vartheta$ is, as above, a number of shares in the portfolio and $c_t$ is a process representing consumption; it is adapted and positive. The market model we consider in this case is
  $$
    V_T(\vartheta,c) 
      = \sum_{t=0}^{T-1}\Big[ 
         \langle \vartheta_t, S_{t+1} - S_t\rangle
         - c_{t+1}\Big].
  $$
  What one is interested in here is the utility from consumption. So, if we denote by $U:\R^T_+\rightarrow\R\cup\{-\infty\}$ the utility one achieves from the consumption stream $c$, we can define our market model as $\widehat V(\vartheta,c) = U(c)$. In order for the model to be well defined, we need to assume that the position is solvent at the end of trading, i.e. $V_T(\vartheta,c)\geq 0$ a.s.
\end{example}

One can come up with an infinite number of additional examples of market models. Let us just mention optimal stopping, which one can handle by appropriately reducing the space of strategies.

%  _   _   _   _   _   _   _   _   _   _   _   _  
%   \_/ \_/ \_/ \_/ \_/ \_/ \_/ \_/ \_/ \_/ \_/ \_/ 
% \_/ \_/ \_/ \_/ \_/ \_/ \_/ \_/ \_/ \_/ \_/ \_/ \
%   \_/ \_/ \_/ \_/ \_/ \_/ \_/ \_/ \_/ \_/ \_/ \_/

\section{Representation of market models}

It is customary in financial mathematics to think of a market as a rule assigning to the strategy a capital gains process. This dynamic view of the market through the assigned gains process mainly arises from the technicalities one needs to deal with when defining the stochastic integral in continuous time, i.e. one cannot talk about $\omega$-wise stochastic integral. 

In the examples we gave above, the market model is given in an $\omega$-wise fashion: fixing $\omega$, the market model is just a deterministic function of the path. We can write this observation as $\widehat V(\vartheta)(\omega) = V(\omega,\vartheta(\omega))$ for some function $V:\Omega\times\R^{dT}\rightarrow\R\cup\{-\infty\}$. All such mappings $V$ define market models, as the following statement shows.
  
\begin{lemma}
Let $V$ be a mapping 
  $V:\Omega\times\R^{dT}\rightarrow\R\cup\{-\infty\}$ 
and assume that it is $\FF\otimes\BB(\R^{dT})-\BB(\R\cup\{-\infty\})$ measurable. Then the mapping 
  $\widehat V:\AA\rightarrow L^0$ defined by $\widehat V(\vartheta)(\omega) := V(\omega,\vartheta(\omega))$
defines a market model when $\widehat V(0)= 0$.
\end{lemma}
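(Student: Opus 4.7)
The plan is to check, in turn, each of the three things needed: (a) that $\widehat V$ actually lands in $L^0(\FF;\R\cup\{-\infty\})$, (b) the normalization axiom A1, (c) the locality axiom A2. Item (b) is literally a hypothesis in the statement, so essentially there is only (a) and (c) to verify, and neither is deep—the lemma is really a bookkeeping statement that pathwise market models trivially satisfy the axioms.

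For (a), fix $\vartheta\in\AA$. Each component $\vartheta_t$ is $\FF_t$-measurable and hence $\FF$-measurable, so the product map $\vartheta=(\vartheta_0,\ldots,\vartheta_{T-1})\colon\Omega\to\R^{dT}$ is $\FF$-$\BB(\R^{dT})$-measurable. Then $\omega\mapsto(\omega,\vartheta(\omega))$ is $\FF$-$\FF\otimes\BB(\R^{dT})$-measurable (this is the standard fact that the graph map of a measurable function is measurable into the product space). Composing with the jointly measurable $V$ yields $\FF$-measurability of $\widehat V(\vartheta)$.

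For (c), fix $t\in\{0,\ldots,T-1\}$, $A\in\FF_t$, and $\vartheta\in\AA$. Set $\vartheta'=(\vartheta_0,\ldots,\vartheta_{t-1},\vartheta_t\indic_A,\vartheta_{t+1},\ldots,\vartheta_{T-1})$; since $A\in\FF_t$, each component of $\vartheta'$ is adapted, so $\vartheta'\in\AA$ and $\widehat V(\vartheta')$ is defined. For $\omega\in A$ we have $\vartheta_t(\omega)\indic_A(\omega)=\vartheta_t(\omega)$, hence $\vartheta(\omega)=\vartheta'(\omega)$; applying $V(\omega,\cdot)$ gives
\[
  \widehat V(\vartheta)(\omega)=V(\omega,\vartheta(\omega))=V(\omega,\vartheta'(\omega))=\widehat V(\vartheta')(\omega)\quad\text{for every }\omega\in A.
\]
Multiplying by $\indic_A$ kills the contribution from $A^c$ on both sides (with the customary convention $0\cdot(-\infty)=0$ needed because $V$ may take the value $-\infty$), and A2 follows.

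The only mildly delicate point—if one wants to flag an obstacle—is the handling of the value $-\infty$ in the locality identity: one must either interpret the product $\indic_A\cdot(-\infty)$ by the usual convention $0\cdot(-\infty)=0$, or equivalently phrase the identity as $\widehat V(\vartheta)=\widehat V(\vartheta')$ on $A$. Apart from that, the proof is a direct, pointwise verification and the measurability of $V$ is only used in step (a).
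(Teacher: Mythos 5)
Your proof is correct and follows the same route as the paper, which simply notes that A1 is assumed and that A2 and the codomain property are clear from the joint measurability of $V$; you have merely spelled out the pointwise verification that the paper leaves implicit. The remark about the convention $0\cdot(-\infty)=0$ is a reasonable flag but does not change anything.
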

\begin{proof}
  Since we assumed in the statement of the lemma that {\textbf{A1}} is satisfied, we only need to show {\textbf{A2}} and that $\widehat V$ maps into the right space. But this is clear from the assumed measurability of $V$.
\end{proof}

The purpose of this section is to show that the converse also holds under some conditions, i.e. for any $\widehat V$ we may find a mapping $V$, such that they are related as in the lemma above. 

Starting with a market model $\widehat V$, a natural thing to do would be do assign to it a mapping $V$ defined as follows
\begin{equation}\label{definitionV}
\begin{aligned}
  V:\Omega\times\R^{dT} &\rightarrow \R\cup\{-\infty\}\\
             (\omega,x) &\mapsto \widehat V(x)(\omega).
\end{aligned}
\end{equation}
Constants in $\R^{dT}$ represent deterministic, i.e. nonrandom, strategies. Those are, of course, adapted. Therefore, the definition makes sense. Hence, by Axiom {\textbf{A2}} we have also that $V(\omega,\vartheta(\omega))=\widehat{V}(\vartheta)(\omega)$ for any simple strategy, i.e. the one which takes only a finite number of values.
 However, it is not clear why the identification above would be valid beyond this class of strategies.

 One of the obstacles is that the mapping $V$ thus defined need not be measurable or have any regularity properties at all. In fact, it is easy to come up with such pathological examples; see the discussion after Proposition 14.39 in~\cite{rockafellar1998variational}. The best one could hope for is that we can modify the mapping $V$ to get a regular enough version.

We proceed in the spirit of Proposition 14.40 in~\cite{rockafellar1998variational} and define the following set of random variables
$$
  p_{x,r} = \esssup\big\{\widehat V(\vartheta)\,\big|\,
    \vartheta\in\AA, \,\|\vartheta-x\|_\infty< r\big\}  \qquad (x,r)\in \R^{dT}\times\R_+.
$$
Remember that the set $\AA$ contains all adapted processes, hence the family over which we are taking the essential supremum is not empty.
By definition of $\widehat V$ and of the essential supremum, these random variables are $\FF$ measurable. The following implication is immediate
$$
  s\geq r + \|x-y\|\quad\Longrightarrow\quad p_{x,r}\leq p_{y,s}\,\,a.s. 
$$
The condition on the left means that the ball of radius $r$ with center in $x$ is included in a ball of radius $s$ centered in $y$.

Using this preparation, we now define the candidate for the map $V$
\begin{align}\label{sec2:normal}
    V:\Omega\times\R^{dT}&\longrightarrow\R\cup\{-\infty\}\nonumber\\
     (\omega,x)          &\mapsto   \inf\big\{p_{q,r}(\omega)\, \big|\, \|x-q\|<r,\,\,q\in\Q^{dT},\,r\in\Q\cap(0,1)\big\}.
\end{align}
Note that the set over which we are taking the infimum is countable, hence it makes sense to define it in an $\omega$-wise fashion. 

As the next step, we show that this definition is good. But before going further, let us introduce some definitions. As the concept of a normal integrand is fairly standard, we chose not to invent new terms, but rather to direct the reader to~\cite{rockafellar1998variational}, Chapter 14, where the theory is expounded in great detail. Here we recall only the definition, with a warning that we change the sign in everything that follows as compared to~\cite{rockafellar1998variational}. We are interested in maximization and the book~\cite{rockafellar1998variational} deals with minimization.

\begin{definition}\label{def:normal}
A mapping $V:\Omega\times\R^{dT}\rightarrow \R\cup\{-\infty\}$ is called an $\FF$ normal integrand if its hypographical correspondence
  $$
    \omega \mapsto \hypo V(\omega) = 
      \big\{(x,r)\in\R^{dT}\times\R\ \big|\ r\leq V(\omega,x)\big\}
  $$
is closed valued and measurable with respect to $\FF$. A set valued mapping, or a correspondence, $\omega\mapsto C(\omega)$ is measurable with respect to $\FF$ if for every open ball $B\subset\R^{dT}$, we have 
$$
  \{\omega\,|\,B\cap C(\omega)\not=\varnothing\}\in\FF.
$$
Correspondences $\omega\mapsto C(\omega)$ we also denote with $C:\Omega\rightrightarrows\R^n$ to indicate that each $C(\omega)$ is a subset of $\R^n$.
\end{definition}

An overview of the standard results from the theory of measurable correspondences can be found in Appendix~\ref{ap:prvi}.

In the remainder of the chapter, we will usually drop the reference to the sigma algebra $\FF$.

A well-known representative of this class of normal integrands is the class of Carath\'eo\-dory mappings; a mapping $V:\Omega\times\R^{dT}\rightarrow \R\cup\{-\infty\}$ is Carath\'eodory if the map $x\mapsto V(\omega,x)$ is continuous for each fixed $\omega\in\Omega$ and $\omega\mapsto V(\omega,x)$ is $\FF$ measurable for each fixed $x\in\R^{dT}$; see Example~14.29 in~\cite{rockafellar1998variational}. Let us also mention that almost all the models of financial markets, with the exception of models with portfolio constraints, are represented by Carath\'eodory integrands.

The basic consequence of the definition of the normal integrand $V$ is that for all $\FF$ measurable random vectors $x\in L^0(\FF;\R^{dT})$ the random variable $\omega\mapsto V(\omega,x(\omega))$ is also $\FF$ measurable; see Proposition 14.28 in~\cite{rockafellar1998variational}.

\begin{lemma}\label{lem:normal}
  The mapping $V$, defined in~\eqref{sec2:normal} is a normal integrand.
\end{lemma}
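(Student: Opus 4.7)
The plan is to verify the two conditions of Definition~\ref{def:normal} separately: closedness of each fiber $\hypo V(\omega)$, and $\FF$-measurability of the correspondence $\omega\mapsto \hypo V(\omega)$. The construction of $V$ as a countable infimum over rational balls is engineered precisely to deliver both, and the argument parallels that of Proposition~14.40 in~\cite{rockafellar1998variational}.

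First I would prove that $V(\omega,\cdot)$ is upper semicontinuous for every $\omega\in\Omega$, which is equivalent to $\hypo V(\omega)$ being closed. Fix $\omega$ and a sequence $x_n\to x$ in $\R^{dT}$. For any pair $(q,r)\in\Q^{dT}\times(\Q\cap(0,1))$ with $\|x-q\|<r$, continuity of the norm yields $\|x_n-q\|<r$ for all $n$ sufficiently large, so by the defining infimum $V(\omega,x_n)\leq p_{q,r}(\omega)$ eventually. Passing to the limit superior and then to the infimum over all admissible $(q,r)$ gives $\limsup_n V(\omega,x_n)\leq V(\omega,x)$. The argument is pointwise in $\omega$ (using a once-and-for-all choice of representatives of the countably many random variables $p_{q,r}$), so no exceptional null set needs to be excised.

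Next I would establish joint $\FF\otimes\BB(\R^{dT})$-measurability of $V$. Enumerate $\Q^{dT}\times(\Q\cap(0,1))$ as $\{(q_n,r_n)\}_{n\in\N}$ and set
\begin{align*}
V_n(\omega,x) =
\begin{cases}
p_{q_n,r_n}(\omega) & \text{if }\|x-q_n\|<r_n,\\
+\infty             & \text{otherwise.}
\end{cases}
\end{align*}
Each $V_n$ is jointly measurable, since $p_{q_n,r_n}$ is $\FF$-measurable and $B(q_n,r_n)\subset\R^{dT}$ is Borel. The definition of $V$ now rewrites as $V=\inf_n V_n$, and a countable pointwise infimum of jointly measurable functions is jointly measurable.

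Combining the two steps, the graph of the hypographical correspondence,
\begin{align*}
\big\{(\omega,x,\alpha)\in\Omega\times\R^{dT}\times\R \,:\, \alpha\leq V(\omega,x)\big\},
\end{align*}
belongs to $\FF\otimes\BB(\R^{dT}\times\R)$ and has closed $\omega$-sections. To conclude, I would invoke the standard equivalence for closed-valued correspondences in a Euclidean space (Theorem~14.8 in~\cite{rockafellar1998variational}, recalled in Appendix~\ref{ap:prvi}): for such correspondences, measurability of the graph is equivalent to the open-ball measurability appearing in Definition~\ref{def:normal}. The delicate point is precisely this last invocation, which implicitly relies on completeness of $(\Omega,\FF,\P)$—the standing convention behind the $L^0$-framework of the paper; without it, graph measurability need not imply correspondence measurability via projection, and one would have to produce a Castaing representation of $\hypo V$ directly from the countable family $\{p_{q_n,r_n}\}$.
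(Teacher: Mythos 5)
Your proof is correct, and it reaches the conclusion by a genuinely different route from the paper's on the measurability half. For closedness of the fibers you argue sequentially that $V(\omega,\cdot)$ is upper semicontinuous; the paper instead writes $\hypo V(\omega)$ explicitly as a countable intersection of sets of the form $\R^{dT}\times(-\infty,p_{q,r}(\omega)]\cup B_r(q)^c\times\R$ and observes each is closed --- these are two phrasings of the same fact. The real divergence is in the measurability step: you establish joint $\FF\otimes\BB(\R^{dT})$-measurability of $V$ as a countable infimum of the elementary jointly measurable functions $V_n$, and then pass from measurability of the graph of $\hypo V$ to open-ball measurability via the equivalence of Theorem~14.8 in~\cite{rockafellar1998variational}, correctly flagging that this direction needs completeness of $(\Omega,\FF,\P)$ (the appendix, incidentally, does not actually recall that theorem, only the definition and Castaing representations). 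The paper stays entirely inside the set-valued calculus: it checks the open-ball criterion by hand for the single nontrivial building block $\omega\mapsto\R^{dT}\times(-\infty,p_{q,r}(\omega)]$ and then invokes Proposition~14.11 of~\cite{rockafellar1998variational} to close under finite unions and countable intersections. Your approach is arguably more transparent about where joint measurability of $V$ comes from (which is also what one ultimately wants when substituting random arguments into $V$), while the paper's avoids any explicit appeal to graph measurability; note, however, that the countable-intersection part of Proposition~14.11 for closed-valued correspondences is itself proved in~\cite{rockafellar1998variational} through the same graph characterization, so the completeness hypothesis you single out is equally present, if less visible, in the paper's argument. Neither route has a gap.
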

\begin{proof}
  Let us write and expand the definition of the hypograph of the mapping $V$. We have 
\begin{align*}
  \hypo V(\omega)
&= \left\{(x,\beta)\in\R^{dT}\times\R\,|\,\beta\leq V(\omega,x)\right\}\\
&= \left\{(x,\beta)\in\R^{dT}\times\R\,\Big|\,
                                      \beta\leq p_{q,r}(\omega);\,\,
                                      \begin{array}{l}\forall q\in\Q^{dT},\ r\in\Q\cap(0,1)\\
                                      \textrm{with } \|x-q\|< r
                                      \end{array}
                    \right\}\\
&= \bigcap_{\substack{q\in\Q^{dT}\\r\in\Q\cap(0,1)}}\{(x,\beta)\in\R^{dT}\times\R\,|\,\beta\leq p_{q,r}(\omega)\,
                                                    \textrm{ if } \|x-q\|< r\}\\
&= \bigcap_{\substack{q\in\Q^{dT}\\r\in\Q\cap(0,1)}}
      \Big(
         B_r(q)\times(-\infty,p_{q,r}(\omega)]\cup B_r(q)^c\times\R
      \Big)\\
&= \bigcap_{\substack{q\in\Q^{dT}\\r\in\Q\cap(0,1)}}
      \Big(
         \R^{dT}\times(-\infty,p_{q,r}(\omega)]\cup B_r(q)^c\times\R
      \Big),
\end{align*}
where $B_r(q)$ denotes an open ball around $q$ with radius $r$. Observe first that for a fixed $\omega$ the last expression is an intersection of closed sets, therefore closed. This implies the upper-semicontinuity of the mapping $V$ for each $\omega$. When we show that each $B_r(q)\times(-\infty,p_{q,r}]\cup B_r(q)^c\times\R$ is a measurable correspondence, we will be done, by Proposition 14.11 in~\cite{rockafellar1998variational}. By the same proposition, also a finite union of measurable correspondences is measurable, hence we need to check only that the assignment $\omega\mapsto\R^{dT}\times(-\infty,p_{q,r}(\omega)]$ is a measurable correspondence.

So, let $B = B_s(y,\alpha)$ be an open ball. Then we have 
  $$
    \big\{ \omega\in\Omega\ \big|\ \R^{dT}\times(-\infty,p_{q,r}(\omega)]\cap B\not=\varnothing\big\}
      = \big\{\omega\in\Omega\ \big|\ |p_{q,r}(\omega)-\alpha|<s\big\},
  $$
which is a measurable set, by the $\FF$ measurability of $p_{q,r}$.
\end{proof}

\begin{lemma}\label{lem:bigger estimate}
  Let $\widehat V$ be a market model. Then for the normal integrand $V$ constructed above, we have
$$
  \widehat V(\vartheta)(\cdot) \leq V(\cdot,\vartheta(\cdot))\ \ a.s. \qquad\textrm{for all }\,\vartheta\in\AA.
$$
\end{lemma}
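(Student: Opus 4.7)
The plan is to reduce the global pointwise estimate $\widehat V(\vartheta) \leq V(\cdot,\vartheta(\cdot))$ a.s.\ to a family of local ones, indexed by the countable set of pairs $(q,r) \in \Q^{dT}\times (\Q\cap(0,1))$. Specifically, since $V(\omega,\vartheta(\omega))$ is by construction the infimum of $p_{q,r}(\omega)$ over all pairs with $\omega \in A_{q,r}:=\{\|\vartheta-q\|<r\}$, it suffices to prove that for each such $(q,r)$ one has $\widehat V(\vartheta)\leq p_{q,r}$ a.s.\ on $A_{q,r}$. A countable union of null sets then handles all pairs at once.

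So I fix $(q,r)$ and, for every rational $\eps\in(0,r)$, aim to exhibit a strategy $\vartheta^\eps\in\AA$ with $\|\vartheta^\eps-q\|_\infty<r$ (so that $p_{q,r}\geq\widehat V(\vartheta^\eps)$ a.s.\ by definition of the essential supremum) and $\widehat V(\vartheta^\eps)=\widehat V(\vartheta)$ a.s.\ on the slightly smaller set $A^\eps_{q,r}:=\{\|\vartheta-q\|<r-\eps\}$. Combining these two facts gives $\widehat V(\vartheta)\leq p_{q,r}$ a.s.\ on $A^\eps_{q,r}$, and letting $\eps\downarrow 0$ along the rationals extends the inequality to $A_{q,r}=\bigcup_\eps A^\eps_{q,r}$.

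The construction I propose uses partial sums of squared distances. I define
\begin{align*}
  A^{(t)} := \Big\{\omega:\sum_{s\leq t}\|\vartheta_s(\omega)-q_s\|^2<(r-\eps)^2\Big\},
\end{align*}
which is $\FF_t$-measurable (each $\vartheta_s$ with $s\leq t$ is $\FF_t$-measurable), decreases in $t$ because one adds nonnegative terms, and whose terminal member is exactly $A^{(T-1)}=A^\eps_{q,r}$. I then set $\vartheta^\eps_t:=\vartheta_t\indic_{A^{(t)}}+q_t\indic_{(A^{(t)})^c}$, which is $\FF_t$-measurable hence $\vartheta^\eps\in\AA$. A direct pointwise check (splitting $\omega$ by the largest $t$ with $\omega\in A^{(t)}$) shows $\|\vartheta^\eps(\omega)-q\|^2<(r-\eps)^2$ for every $\omega$, so $\|\vartheta^\eps-q\|_\infty\leq r-\eps<r$.

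The equality $\widehat V(\vartheta)=\widehat V(\vartheta^\eps)$ a.s.\ on $A^\eps_{q,r}$ is obtained by iterating Axiom A2 along the nested family $A^{(0)}\supseteq A^{(1)}\supseteq\dots\supseteq A^{(T-1)}$: starting from $t=T-1$ and descending, one swaps $\vartheta_t\mapsto\vartheta^\eps_t$ under $\widehat V$ after multiplying by $\indic_{A^{(T-1)}}$, using that $\vartheta^\eps_t=\vartheta_t$ on $A^{(t)}\supseteq A^{(T-1)}$ together with A2 applied twice (to pull the indicator inside and then outside). The main obstacle is precisely the construction of $\vartheta^\eps$: the natural set $A_{q,r}$ lies only in $\FF_{T-1}$, so a bare truncation $\vartheta\indic_{A_{q,r}}+q\indic_{A_{q,r}^c}$ fails to be adapted. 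The partial-sums-of-squared-distances trick produces an adapted nested surrogate which simultaneously enforces the strict $L^\infty$-bound against $q$ and, in the limit $\eps\downarrow 0$, exhausts $A_{q,r}$—the rest is locality and a countable-union argument.
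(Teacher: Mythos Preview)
Your proof is correct and follows the same overall route as the paper: reduce to the countable family of local estimates $\widehat V(\vartheta)\leq p_{q,r}$ on $\{\|\vartheta-q\|<r\}$, and establish each one by constructing an adapted strategy lying in the $r$-ball around $q$ that coincides with $\vartheta$ on that set, then invoke Axiom~\textbf{A2}. The difference lies only in the truncation device. The paper simply puts
\[
  \vartheta_{q,r,t}=\vartheta_t\indic_{\{|\vartheta_t-q_t|\leq r\}}+q_t\indic_{\{|\vartheta_t-q_t|>r\}},
\]
a one-line coordinatewise construction (implicitly suited to a max-type norm on $(\R^d)^T$), which avoids both the $\eps$-margin and the induction over a nested family of sets. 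Your partial-sums-of-squares construction is more elaborate, but it is tailored to the Euclidean norm declared in the paper's notation section and handles the strict inequality in the definition of $p_{q,r}$ cleanly; in that sense it is the more careful of the two.
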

\begin{proof}
  The statement of the lemma is equivalent to the statement that for every $\vartheta\in\AA$, the random variable $(\vartheta,\widehat V(\vartheta))$ is a measurable selection of the correspondence $\hypo V$. In the proof of Lemma~\ref{lem:normal} we represented this correspondence as an intersection of correspondences 
$$
  \omega\mapsto \R^{dT}\times(-\infty,p_{q,r}(\omega)]\cup B_r(q)^c\times\R
$$
over $(q,r)\in\Q^{dT}\times\Q$ with $r>0$. 

So, fix $q$ and $r$. We need to show the following: on $A_{q,r}:=\{\omega\ |\ |\vartheta(\omega)-q|\leq r\}$ we have $\widehat V(\vartheta)\indic_{A_{q,r}} \leq p_{q,r}\indic_{A_{q,r}}$. Define the following strategy
$$
  \vartheta_{q,r} = \left(\ldots, \vartheta_t\indic_{|\vartheta_t - q_t|\leq r} + q_t\indic_{|\vartheta_t - q_t|> r} ,\ldots \right),
$$
where we have conveniently denoted $q = (q_0,\ldots,q_{T-1})$ to represent the time-indexed deterministic strategy $q$. Note now that by construction we have that $\vartheta_{q,r}\in\AA$ and also $|\vartheta_{q,r}-q|\leq r$ a.s. and $\vartheta_{q,r} = \vartheta$ on $A_{q,r}$. This finishes the proof, since by Axiom {\textbf{A2}} we have
$$
   \widehat V(\vartheta)\indic_{A_{q,r}}
   = \widehat V(\vartheta_{q,r})\indic_{A_{q,r}}
  \leq p_{q,r}\indic_{A_{q,r}}.
$$
The last inequality holds, since the strategy $\vartheta_{q,r}$ is in the set over which we take the essential supremum in the definition of $p_{q,r}$.
\end{proof}

\begin{definition}
  We say that the market model $\widehat V$ is upper semicontinuous if for every sequence of strategies $\vartheta_n\in\AA\cap L^\infty$ converging in $L^\infty$ to a strategy $\vartheta\in\AA$, we have that $\limsup_{n\rightarrow\infty} \widehat V(\vartheta_n)\leq \widehat V(\vartheta)$.
\end{definition}

The following is the main result of this section. 
\begin{theorem}\label{thm:rep1D}
  Let $\widehat V$ be the market model. If $\widehat V$ is upper-semicontinuous, then there exists a normal integrand $V$ with $\widehat V(\vartheta)(\omega) = V(\omega,\vartheta(\omega))$ for every $\vartheta\in\AA$.
\end{theorem}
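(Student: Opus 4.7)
Since Lemma~\ref{lem:bigger estimate} already supplies $\widehat V(\vartheta)\leq V(\cdot,\vartheta(\cdot))$ a.s., only the reverse inequality $V(\cdot,\vartheta(\cdot))\leq \widehat V(\vartheta)$ a.s.\ remains. The plan is to produce, for each $\vartheta\in\AA$, a sequence $(\vartheta^k)\subset\AA\cap L^\infty$ with $\vartheta^k\to\vartheta$ in $L^\infty$ and $\liminf_k\widehat V(\vartheta^k)\geq V(\cdot,\vartheta(\cdot))$ a.s., so that the upper-semicontinuity hypothesis yields
\[
V(\cdot,\vartheta(\cdot))\leq\liminf_k\widehat V(\vartheta^k)\leq\limsup_k\widehat V(\vartheta^k)\leq \widehat V(\vartheta)\quad\text{a.s.}
\]
The first, routine step is to reduce to bounded strategies: iterating axiom~\textbf{A2} with the $\FF_t$-measurable sets $\{|\vartheta_t|\leq M\}$ shows that on $\bigcap_t\{|\vartheta_t|\leq M\}$ both sides agree with those of the truncation $\vartheta_t^{(M)}:=\vartheta_t\indic_{\{|\vartheta_t|\leq M\}}\in L^\infty(\FF_t)$, and these sets exhaust $\Omega$ as $M\uparrow\infty$. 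Henceforth I assume $\vartheta\in\AA\cap L^\infty$.

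To construct $\vartheta^k$, fix $k$, set $r_k=1/k$, and discretize $\vartheta$ along the filtration: for each $t$ partition $\R^d$ into finitely many rational cubes $C_{i,t}^k$ of diameter below $r_k/3$ with centers $q_{i,t}^k\in\Q^d$, set $A_{i,t}^k:=\vartheta_t^{-1}(C_{i,t}^k)\in\FF_t$, and let $\vartheta_t^{[k]}:=\sum_i q_{i,t}^k\indic_{A_{i,t}^k}\in L^\infty(\FF_t;\R^d)$, so that $\|\vartheta-\vartheta^{[k]}\|_\infty<r_k/3$. For each multi-index $\vec{\imath}=(i_0,\ldots,i_{T-1})$ write $A_{\vec{\imath}}^k:=\bigcap_t A_{i_t,t}^k\in\FF_{T-1}$ and $q_{\vec{\imath}}^k:=(q_{i_0,0}^k,\ldots,q_{i_{T-1},T-1}^k)\in\Q^{dT}$; then $\|\vartheta(\omega)-q_{\vec{\imath}}^k\|<r_k$ on $A_{\vec{\imath}}^k$, so by the construction of $V$ in~\eqref{sec2:normal}, $V(\cdot,\vartheta(\cdot))\leq p_{q_{\vec{\imath}}^k,r_k}$ on $A_{\vec{\imath}}^k$. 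The standard countable representation of the essential supremum furnishes sequences $(\tilde\vartheta^{\vec{\imath},k,n})_n\subset\AA$ with $\|\tilde\vartheta^{\vec{\imath},k,n}-q_{\vec{\imath}}^k\|_\infty<r_k$ and $p_{q_{\vec{\imath}}^k,r_k}=\sup_n\widehat V(\tilde\vartheta^{\vec{\imath},k,n})$ a.s. I would then paste these $\tilde\vartheta^{\vec{\imath},k,n}$ across the atoms $A_{\vec{\imath}}^k$ using~\textbf{A2} to obtain $\vartheta^k\in\AA$ with $\|\vartheta^k-\vartheta\|_\infty<2r_k$ and $\widehat V(\vartheta^k)\geq V(\cdot,\vartheta(\cdot))-1/k$ off a set of probability at most $2^{-k}$; Borel--Cantelli along a subsequence then delivers the a.s.\ $\liminf$ estimate.

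The main obstacle is this last pasting step. The event on which any single $\tilde\vartheta^{\vec{\imath},k,n}$ attains the essential supremum within tolerance $1/k$ is merely $\FF$-measurable, so a naive $\omega$-wise selection of the best $n$ breaks adaptedness; indeed, the family $\{\widehat V(\tilde\vartheta):\|\tilde\vartheta-q\|_\infty<r\}$ need not be upward directed in $\AA$, since lattice pasting of two adapted strategies would require the switching set to be $\FF_t$-measurable for every $t$, not merely $\FF$-measurable. The remedy is to carry out the construction recursively along the filtration from $t=T-1$ down to $t=0$, performing the switch at each stage only on the $\FF_t$-measurable atoms $A_{i,t}^k$ and absorbing the leftover $\FF$-measurable mismatch into a small exceptional event whose total probability, summed over $k$, is finite. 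Once this recursion is set up, the local axiom~\textbf{A2} guarantees that the pasted strategy lies in $\AA$ and realizes the required lower bound, and upper-semicontinuity closes the argument as outlined above.
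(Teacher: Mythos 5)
Your reduction to bounded strategies, the adapted rational discretization of $\vartheta$ along the filtration, and the use of the countable families realizing $p_{q,r}$ as in~\eqref{eq:defining} all match the paper's Step~2, and you correctly isolate the real difficulty: the index $n$ for which $\widehat V(\tilde\vartheta^{\vec{\imath},k,n})$ comes within $1/k$ of $p_{q_{\vec{\imath}}^k,r_k}$ is determined only $\FF$-measurably, so it cannot be selected in an adapted way. The gap is that your proposed remedy does not repair this. Running the construction ``recursively from $t=T-1$ down to $t=0$'' cannot help, because the offending information lives in $\FF_T$ and no recursion along the filtration pushes it down to the earlier $\sigma$-algebras; and the leftover mismatch cannot be ``absorbed into a small exceptional event,'' since for any \emph{single} adapted candidate the set on which it fails to approximate the essential supremum may have probability bounded away from zero, so no Borel--Cantelli argument applies. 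Consequently the object your argument needs --- one strategy $\vartheta^k\in\AA$ per level $k$ with $\widehat V(\vartheta^k)\geq V(\cdot,\vartheta(\cdot))-1/k$ off a set of probability $2^{-k}$ --- need not exist, and the $\liminf$ inequality at the heart of your plan is unavailable.

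The paper's resolution is to give up on selecting a single good candidate altogether. For each $n$ it forms the \emph{finite} family of adapted strategies obtained by pasting $\vartheta^\nu_k$ ($k\le K$) with the simple approximation $\theta$ outside $\Omega(\nu)$ (this pasting is adapted because $\Omega(\nu_0,\ldots,\nu_t)\in\FF_t$), and appends \emph{all} of them as consecutive terms of one sequence converging to $\vartheta$ in $L^\infty$. The $\limsup$ along that sequence then performs the $\omega$-wise selection of the best index automatically, dominating $\sup_{k\le K}\widehat V(\vartheta^\nu_k)$ on each $\Omega(\nu)$, and hence exceeding $\widehat V(\vartheta)+\varepsilon$ with probability at least $\varepsilon$ whenever $\P[V(\cdot,\vartheta)>\widehat V(\vartheta)+2\varepsilon]>2\varepsilon$; this contradicts upper-semicontinuity. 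If you reorganize your blocks so that each level $k$ contributes all finitely many candidates rather than a single pasted strategy, and replace your $\liminf$ bound by a $\limsup$ bound, your argument becomes essentially the paper's and closes.
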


The normal integrand $V$ was constructed above and Lemma~\ref{lem:bigger estimate} shows that it is bigger then the functional $\widehat V$. So, it remains to show that upper-semicontinuity of $\widehat V$ implies $\widehat V(\vartheta)(\omega) \geq V(\omega,\vartheta(\omega))$. The proof revolves around Lemma 1.3 in~\cite{peskir2006optimal}, which states that for
$$
  p_{q,r} = \esssup\big\{ \widehat V(\vartheta) \,\big|\,\vartheta\in\AA,\,\, \|\vartheta-q\|_\infty\leq r\big\}
$$
there exists a countable subset $\{\vartheta^{q,r}_k\}_{k\in\N}$ with $\|\vartheta^{q,r}_k-q\|_\infty\leq r$ for each $k$ and 
\begin{align}\label{eq:defining}
  p_{q,r} = \sup_{k\in\N} \widehat V(\vartheta^{q,r}_k).
\end{align}
We use these sequences and argue by contradiction. The proof follows.
\medskip

\begin{proof}
Let us first show that it is enough to prove the theorem for bounded strategies. So, assume that the theorem is true for all bounded strategies and let $\vartheta\in\AA$ be arbitrary. To the strategy $\vartheta$ we assign a sequence of strategies
$$
  \vartheta^n = \left(
    \vartheta_0\indic_{|\vartheta_0|\leq n},\ldots,
    \vartheta_{T-1}\indic_{|\vartheta_{T-1}|\leq n}\right).
$$
Obviously, $\vartheta^n\in\AA\cap L^\infty$ and also $\P[\vartheta^n=\vartheta]\rightarrow 1$. Denoting $A^n = \{\vartheta^n=\vartheta\}$, by the locality axiom {\textbf{A2}} we have
$$
  \widehat V(\vartheta)\indic_{A^n} 
    = \widehat V(\vartheta\indic_{A^n})\indic_{A^n} 
    = \widehat V(\vartheta^n\indic_{A^n})\indic_{A^n} 
    = V(\vartheta^n)\indic_{A^n} 
    = V(\vartheta)\indic_{A^n}.
$$
Sending $n\rightarrow\infty$ we get the equality also for $\vartheta$. Henceforth we argue with strategies bounded in $L^\infty$.
\medskip

If the statement of the theorem is not satisfied, there exists a $\vartheta\in\AA\cap L^\infty$ such that $\P[V(\cdot,\vartheta)>\widehat V(\vartheta) + 2\varepsilon]>2\varepsilon$ for some $\varepsilon>0$. We will first outline the argument in the case when the strategy $\vartheta$ is deterministic and then in Step~2 extend this argument to the general case.
\medskip

{\textbf{Step 1:}} For every deterministic strategy, which is represented simply with a vector $x\in\R^{dT}$, we claim that the theorem holds true.\medskip

Argue by contradiction. Let $(q_n)$ be a sequence in $\Q^{dT}$, satisfying $|q_n-x|<2^{-n-1}$. As noted above, since $p_{x,r}\leq p_{y,s}$ a.s. whenever $s\geq r + \|x-y\|$, it follows that 
$$
  V(\omega,x) = \lim_{n\rightarrow\infty}p_{q_n,2^{-n}}
$$
and the sequence in the limit above is decreasing.
So, we assume that the theorem is false, i.e. 
$\P[V(\cdot,x)>\widehat V(x) + 2\varepsilon]>2\varepsilon$, thus also $\P\big[p_{q_n,2^{-n}}>\widehat V(x) + 2\varepsilon\big]>2\varepsilon$ for each $n$.
By the comment we made above the proof, for each $n$ there exists a constant $k_n$ such that 
  $\P\big[\sup_{k<k_n}\widehat V(\vartheta^{q_n,2^{-n}}_k) > \widehat V(x) + \varepsilon\big]>\varepsilon$.
Denote by $(\vartheta_n)$ the sequence 
$$
\vartheta^{q_1,2^{-1}}_1,\ldots,\vartheta^{q_1,2^{-1}}_{k_1},\vartheta^{q_2,2^{-2}}_1,\ldots,\vartheta^{q_2,2^{-2}}_{k_2},\vartheta^{q_3,2^{-3}}_1,\ldots,
$$ 
By construction the sequence $\vartheta_n$ converges to $x$ in $L^\infty$. However, also by construction of the sequence, we have
$$
  \P\left[\limsup_{n\rightarrow\infty}\widehat V(\vartheta_n) \geq \widehat V(x)+\varepsilon\right]\geq\varepsilon.
$$
This contradicts the upper-semicontinuity of the market model $\widehat V$, so the assumption that $V(\omega,x) \not= \widehat V(x)(\omega)$ with positive probability was wrong.
\medskip

{\textbf{Step 2:}} The theorem holds true for every bounded strategy $\vartheta\in\AA\cap L^\infty$.\medskip

For this general case, we will follow the same strategy as in the previous step. Assuming 
$\P[V(\cdot,\vartheta)>\widehat V(\vartheta) + 2\varepsilon]>2\varepsilon$ 
for some $\varepsilon>0$ we need to construct a sequence $(\vartheta_m)$ converging to $\vartheta$ in $L^\infty$ and $\P\big[\limsup_m\widehat V(\vartheta_m) \geq \widehat V(x)+\varepsilon\big]\geq\varepsilon.$ We construct this sequence in segments, as we did in Step~1. The proof proceeds by successively (1) approximating $\vartheta$ in $L^\infty$ with a simple strategy $\theta$ with values in $\Q^d$ and $\|\vartheta-\theta\|_\infty\leq 2^{-n}$; and (2) approximating the value of $V(\theta)$ from above using defining sequences for $p_{q,r}$, given in equation~\eqref{eq:defining}. We show only how to define a finite sequence $(\vartheta_i^n)$, such that $\|\vartheta_i^n-\vartheta\|_\infty\leq 2^{2-n}$ and $\P\big[\sup_{i<k_n}\widehat V(\vartheta^n_i) > \widehat V(\theta) + \varepsilon\big]>\varepsilon$. The proof is then finished by appending those finite segments as in Step~1. The precise notation used will be laid out in the proof.

Fix an $n\in\N$. We will first construct a strategy $\theta$ such that $\|\vartheta - \theta\|_\infty<2^{-n}$. The construction is straightforward and proceeds by approximating the strategy in succession:  approximate first $\vartheta_0$ with a simple random variable
$$
 \theta_0 = \sum_{i=1}^{N_0}q(i)\indic_{\Omega(i)}
$$
such that $\|\vartheta_0 - \theta_0\|_\infty<2^{-n}$ a.s. Of course, we need $\Omega(i)\in\FF_0$. It is assumed that the sets $\Omega(i)$ are disjoint and that $\bigcup_{i=1}^{N_0}\Omega(i) = \Omega$. Proceed to the next time instance and on each $\Omega(j)$ from the previous step approximate $\vartheta_1$ with a simple random variable 
$$
\theta_1\indic_{\Omega(j)} 
 = \sum_{i=1}^{N_1(j)}q(j,i)\indic_{\Omega(j,i)}
$$
such that, after doing this for all $j$ we have $\|\vartheta_1 - \theta_1\|_\infty\leq 2^{-n}.$ It is again assumed that $\Omega(j,i)\in\FF_1$ are disjoint and $\bigcup_{i=1}^{N_1(j)}\Omega(j,i) = \Omega(j)$. We proceed in an analogous manner until we get the approximation of the last step.
Then we write 
$$
  \theta 
  = \sum_{i_0=1}^{N_0}\cdots
    \sum_{i_{T-1}=1}^{N_{T-1}(i_0,\ldots,i_{T-2})} 
    \big(q(i_0),\ldots,
    q(i_0,\ldots,i_{T-1})\big)
    \indic_{\Omega(i_0,\ldots,i_{T-1})}.
$$ 
Of course, by construction we have $\|\vartheta - \theta\|_\infty\leq 2^{-n}$. In order to further simplify the notation, we denote by $\nu$ a tuple $\nu := (i_0,\ldots,i_{T-1})$ representing a summand in the above expression and denote by $\NN$ the collection of all such indices, i.e.
$$
  \NN = \{(i_0,\ldots,i_{T-1})\subset\N^T\ |\ 
     i_0\leq N_0,\ldots, i_{T-1}\leq N_{T-1}(i_0,\ldots,i_{T-2})\}.
$$
For each $\nu\in\NN$ write the corresponding vector as $q^\nu = \big(q(\nu_0),\ldots,q(\nu_0,\ldots,\nu_{T-1})\big)$. 

Now, by the observation before the proof, for each $\nu\in\NN$ there exists a sequence $\{\vartheta_k^\nu\}_{k\in\N}\subset\AA,$ such that $\|\vartheta_k^\nu-q^\nu\|_\infty\leq 2^{1-n}$ and
$$
  p^\nu := p_{q^\nu,2^{-n+1}} = \sup_{k\in\N}\widehat V(\vartheta_k^\nu).
$$
As in the proof of Lemma~\ref{lem:bigger estimate} we now notice that
$$
  \widehat V(\vartheta)\leq V(\vartheta)\leq \sum_{\nu\in\NN}
    p^\nu\indic_{\Omega(\nu)}\qquad \textrm{a.s}.
$$
Therefore, we may also find a constant $K$ such that
$$
  \P\left[\sup_{k(\nu)\in\{1,\ldots,K\}^{|\NN|}}\sum_{\nu\in\NN}
    \widehat V\left(\vartheta^\nu_{k(\nu)}\right)\indic_{\Omega(\nu)} >\widehat V(\vartheta)+\varepsilon \right]>\varepsilon,
$$
where we look at $k(\nu)\in\{1,\ldots,K\}^{|\NN|}$ as a mapping $k:\NN\rightarrow\{1,\ldots,K\}$.

Now, if the random vectors $\sum_\nu\vartheta^\nu_{k(\nu)}\indic_{\Omega(\nu)}$ were actually adapted for each $k(\nu)\in\{1,\ldots,K\}^{|\NN|}$, we would be done. Indeed, this set of strategies is finite, hence we can just append those to our sought for sequence $(\vartheta_m)$ as in Step~1. Then, repeating the procedure for each $n$ we could conclude the proof as in Step~1.

In this last part of the proof we will show that the same can be achieved with adapted strategies. In essence, we show how to extend a strategy $\vartheta^\nu_{k}\indic_{\Omega(\nu)}$ to a strategy $\varphi$ that is adapted, an approximation for $\vartheta$, i.e. $\|\varphi - \vartheta\|_\infty\leq 2^{-n+1}$, and such that $\widehat V(\vartheta^\nu_k)\indic_{\Omega(\nu)} = \widehat V(\varphi)\indic_{\Omega(\nu)}$. Fix a $\nu\in\NN$ and a $k\in\{1,\ldots,K\}$. One can readily see that the strategy
$$
  \varphi = \left(\ldots,
    \vartheta^\nu_{k,t}
    \indic_{\Omega(\nu_1,\ldots,\nu_t)} 
    + \theta_t\indic_{\Omega(\nu_1,\ldots,\nu_t)^c}  ,\ldots\right)
$$
satisfies all the above criteria. Indeed, the last point follows from the observation that $\Omega(\nu) = \bigcap_{t=0}^{T-1}\Omega(\nu_0,\ldots,\nu_{t+1})$.
\end{proof}

\begin{remark}
  Note that in the proof we did not go on to prove that the normal integrand is finite-valued. But the proof of the theorem above shows that this is indeed the case. Assume that for some constant $q\in\Q^{dT}$ and all $r\in\Q\cap(0,1)$ we have that $\P[p_{q,r}=\infty]>\varepsilon$. Then, by the same argument as above, we could construct a sequence $\vartheta_n\rightarrow q$ uniformly, with
   $\P[\limsup_{n\rightarrow\infty}\widehat V(\vartheta_n)=\infty]\geq\varepsilon$. But this is impossible, by the assumption that $\widehat V$ takes values in 
   $L^0(\FF;\R\cup\{-\infty\})$ and the upper-semicontinuity condition.
\end{remark}

\begin{corollary}
  The semi-continuity condition of the above theorem is equivalent to the following one: for every sequence of strategies $(\vartheta^n)$, converging to a strategy $\vartheta$ a.s. we have $\limsup_{n\rightarrow\infty}\widehat V(\vartheta^n) \leq \widehat V(\vartheta)$.
\end{corollary}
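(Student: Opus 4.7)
The plan is to leverage the representation result just obtained, Theorem~\ref{thm:rep1D}, and reduce the corollary to the standard equivalence between closedness of the hypograph and sequential upper-semicontinuity of a deterministic function.

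The direction ``a.s.\ upper-semicontinuity implies $L^\infty$ upper-semicontinuity'' is immediate: convergence $\|\vartheta^n - \vartheta\|_\infty \to 0$ means, after discarding a single null set, that $\vartheta^n(\omega) \to \vartheta(\omega)$ uniformly in $\omega$, and in particular almost surely, so the assumed property applies verbatim.

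For the non-trivial direction, I would suppose that $\widehat V$ is upper-semicontinuous in the $L^\infty$ sense of the definition preceding the theorem. By Theorem~\ref{thm:rep1D}, there exists a normal integrand $V:\Omega\times\R^{dT}\to\R\cup\{-\infty\}$ with $\widehat V(\vartheta)(\omega) = V(\omega,\vartheta(\omega))$ for every $\vartheta\in\AA$. The crucial additional ingredient is that in the proof of Lemma~\ref{lem:normal} it was shown that $\hypo V(\omega)$ is a closed subset of $\R^{dT}\times\R$ for each fixed $\omega$; equivalently, the function $x\mapsto V(\omega,x)$ is upper-semicontinuous on $\R^{dT}$ for every $\omega$. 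Given any sequence $\vartheta^n\to\vartheta$ a.s., at every $\omega$ outside the exceptional null set we have $\vartheta^n(\omega)\to\vartheta(\omega)$ in $\R^{dT}$, and pointwise upper-semicontinuity yields
\begin{align*}
\limsup_{n\to\infty}\widehat V(\vartheta^n)(\omega)
  = \limsup_{n\to\infty} V(\omega,\vartheta^n(\omega))
  \leq V(\omega,\vartheta(\omega))
  = \widehat V(\vartheta)(\omega),
\end{align*}
which is the a.s.\ inequality claimed.

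The bulk of the work is done by Theorem~\ref{thm:rep1D}; once the $\omega$-wise representation $\widehat V(\vartheta)(\omega)=V(\omega,\vartheta(\omega))$ is in hand, the corollary is essentially a transfer to the probabilistic setting of the deterministic fact that closedness of the hypograph is equivalent to sequential upper-semicontinuity. The only conceptual point worth flagging is that the sequences in the corollary are not assumed to lie in $L^\infty$ nor to converge uniformly, but the pointwise argument via the normal integrand requires no such boundedness, so no additional approximation step is needed.
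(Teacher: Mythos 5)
Your proof is correct, but it takes a different route from the paper's. You deduce the nontrivial direction from Theorem~\ref{thm:rep1D}: once the representation $\widehat V(\vartheta)(\omega)=V(\omega,\vartheta(\omega))$ is available with $x\mapsto V(\omega,x)$ upper-semicontinuous for each $\omega$ (as established in Lemma~\ref{lem:normal}), almost sure convergence of $\vartheta^n$ reduces the claim to the deterministic fact that a function with closed hypograph is sequentially upper-semicontinuous, applied $\omega$ by $\omega$ outside the countable union of null sets on which the representation identities for $\vartheta$ and the $\vartheta^n$ might fail. The paper instead argues without invoking the representation: it applies Egorov's theorem to get sets $A_t\in\FF_t$ of probability at least $1-\varepsilon$ on which the convergence is uniform (intersecting further with $\{|\vartheta_t|\le N\}$ to land in $L^\infty$), uses the locality axiom \textbf{A2} to replace $\vartheta^n$ by the truncated strategies $\hat\vartheta^n$ on $\bigcap_t A_t$, applies the $L^\infty$ definition of upper-semicontinuity there, and lets $\varepsilon\downarrow 0$. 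Your argument is shorter and cleaner given that the representation theorem has already been proved; the paper's argument has the merit of resting only on \textbf{A2} and the $L^\infty$ definition, so it does not route the corollary through the full normal-integrand construction. Both are valid; the only point worth making explicit in your write-up is that the a.s.\ identity $\widehat V(\vartheta^n)=V(\cdot,\vartheta^n(\cdot))$ must be invoked simultaneously for the countably many strategies involved, which costs only a single null set.
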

\begin{proof}
  Since uniform convergence implies almost sure convergence, we only need to show that the above theorem also holds for almost sure convergence of strategies. 
  So, let $\vartheta^n$ be a sequence converging almost surely to a strategy $\vartheta$. By Egorov theorem, the convergence is almost uniform, i.e. for each $\varepsilon>0$ there exist sets $A_0\in\FF_0,\,\ldots,A_{T-1}\in\FF_{T-1}$ such that $\P[A_t]>1-\varepsilon$ for each $t$ and $\vartheta_t^n\indic_{A_t}$ converges uniformly to $\vartheta_t\indic_{A_t}$. We may also assume, that $\vartheta_t\indic_{A_t}$ is in $L^\infty$, otherwise intersect each $A_t$ with $\{|\vartheta_t|\leq N\}$ for large enough $N$. Denote now $\hat \vartheta^n = 
  (\vartheta_0^n\indic_{A_0},\ldots,\vartheta_{T-1}^n\indic_{A_{T-1}})$.
  
  From the axiom {\textbf{A2}}, we know that 
  $\widehat V(\hat \vartheta^n)\indic_{\bigcap_t A_t} = 
   \widehat V(     \vartheta^n)\indic_{\bigcap_t A_t}$. 
  Taking the limit:
  $$
    \limsup_{n\rightarrow\infty} 
         \widehat V(\vartheta^n)\indic_{\bigcap_t A_t}
  = \limsup_{n\rightarrow\infty} 
         \widehat V(\hat \vartheta^n)\indic_{\bigcap_t A_t}
  \leq \widehat V(\hat \vartheta)\indic_{\bigcap_t A_t}
  = \widehat V(\vartheta)\indic_{\bigcap_t A_t}.
  $$
  Since $\P\big[\bigcap_t A_t\big] > (1-\varepsilon)^T$ and $\varepsilon>0$ was arbitrary, the result follows.
\end{proof}

\begin{remark}
  One could now ask what is the significance of this representation result. One way to see it is to note that for any market model $\widehat V$, not necessarily upper-semicontinuous, it states that its upper-semicontinuous envelope has a particular form. Note also, that the theorem assumes sequential upper-semicontinuity of the market model $\widehat V$, which is then shown to be equivalent to semicontinuity for all $\omega$.

  Also, it allows us to extend the market model and use strategies that are not adapted, by just plugging the strategy into the function $V$. Indeed, using locality property {\textbf{A2}}, one could extend the functional $\widehat V$ beyond adapted strategies. This is true, but this extension is then possible only for 'finite pasting', i.e. one could paste a finite number of strategies. In the language of~\cite{cheridito2012conditional}, the extension is only possible to a stable hull of $\AA$. The representation theorem states that under upper-semicontinuity conditions, one can extend the functional $\widehat{V}$ to the $\sigma$ stable hull of $\AA$.
\end{remark}

\begin{example}
  The following is an example from~\cite{rockafellar1998variational} that shows that even in the case when the market model $\widehat V$ is defined as a mapping $V:\Omega\times\R^{dT}\rightarrow\R$ satisfying the property that $\widehat V(\vartheta)\in L^0(\FF;\R)$ for each strategy $\vartheta\in\AA$, the mapping $V$ might still lack the required measurability. Let $\Omega = [0,1]$ with Lebesgue measure $\P$ and Borel sigma algebra. Let $D\subset\Omega$ be a set that is not measurable. The (one step) market model is given by
  $$
    \widehat V(x)(\omega) = 
       \left\{\begin{array}{ll}
         1 & x = \omega\in D,\\
         0 & \textrm{otherwise.} 
       \end{array}\right.
  $$
  Taking the initial sigma algebra $\FF_0$ trivial, one has $\widehat V(\vartheta)=0$ a.s. for each strategy $\vartheta\in\AA.$ Notice also, that for each fixed $\omega$, the mapping $x\mapsto\widehat{V}(x)(\omega)$ is upper-semicontinuous. The problem is that the mapping $V$ is not a normal integrand; indeed, it follows immediately from the definition that the hypograph of $V$ is not measurable. However, one candidate of the construction given above would give $V(\omega,x)=0$ identically, which is measurable, i.e. a normal integrand. Also it is equal to $\widehat V$ a.s. for every $\FF_0$ measurable strategy.
\end{example}

\begin{remark}
  Semicontinuity of the market model is needed to obtain for fixed $\omega$ semicontinuity of the market representation. A natural question is whether sequential continuity of the market model implies the same for the representation. We will show by a counterexample, that the answer to this question is no in general. There is, however, a case where the result is true. This answers a question in~\cite{drapeau2013brouwer} about the connection between local (i.e. stable in their terminology), sequentially continuous maps $\widehat f:L^0(\FF;\R^d)\rightarrow L^0(\FF;\R^d)$ and Carath\'eodory integrands $f:\Omega\times\R^d\rightarrow\R^d$. In the following lemma we will consider only maps $\widehat f:L^0(\FF;\R^d)\rightarrow L^0(\FF;\R)$, which can be considered as coordinate maps. 
We say that a map $\widehat f:L^0(\FF;\R^n)\rightarrow L^0(\FF;\R)$ is sequentially continuous if for every sequence $(\vartheta^k)\subseteq L^0(\FF;\R^n)$ converging almost surely to $\vartheta\in L^0(\FF;\R^n)$, also $\widehat f(\vartheta^k)\rightarrow \widehat f(\vartheta)$ almost surely.
\end{remark}

\begin{lemma}
  Let $\widehat f:L^0(\FF;\R^d)\rightarrow L^0(\FF;\R)$ be a one step market model. If $\widehat f$ is sequentially continuous, then it can be represented by a Carath\'eodory map $f:\Omega\times\R^d\rightarrow\R$.
\end{lemma}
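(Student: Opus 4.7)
The plan is to apply Theorem~\ref{thm:rep1D} twice---once to $\widehat f$ and once to $-\widehat f$---and reconcile the two resulting representations, exploiting the fact that in the one-step setting every $\FF$-measurable function is an admissible strategy. Sequential continuity of $\widehat f$ is equivalent to upper-semicontinuity of both $\widehat f$ and $-\widehat f$, so the theorem supplies normal integrands $V^+$ and $W$ with $\widehat f(\vartheta)(\omega)=V^+(\omega,\vartheta(\omega))$ and $-\widehat f(\vartheta)(\omega)=W(\omega,\vartheta(\omega))$ a.s.\ for every $\vartheta\in\AA=L^0(\FF;\R^d)$. Setting $V^-:=-W$, the fibers $V^+(\omega,\cdot)$ are upper-semicontinuous, $V^-(\omega,\cdot)$ are lower-semicontinuous, and both functions represent $\widehat f$ on strategies; the goal is to show $V^+(\omega,\cdot)=V^-(\omega,\cdot)$ for a.e. $\omega$, whereupon the common fiber is both USC and LSC, hence continuous, and a harmless modification on the exceptional null set gives the desired Carath\'eodory map $f$.

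For the easier inequality $V^+\geq V^-$ I would argue on rationals: for each $q\in\Q^d$ the constant strategy $q$ belongs to $\AA$, so $V^+(\omega,q)=\widehat f(q)(\omega)=V^-(\omega,q)$ a.s., and a countable union over $\Q^d$ produces a null set $N$ outside of which $V^+(\omega,\cdot)$ and $V^-(\omega,\cdot)$ agree on all of $\Q^d$. For arbitrary $x\in\R^d$, picking rationals $q_n\to x$ and combining upper-semicontinuity of $V^+$ with lower-semicontinuity of $V^-$ yields, for $\omega\notin N$,
$$
  V^+(\omega,x)\,\geq\,\limsup_n V^+(\omega,q_n)\,=\,\limsup_n V^-(\omega,q_n)\,\geq\,V^-(\omega,x).
$$

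For the reverse inequality the main step is a measurable-selection argument that relies crucially on the one-step assumption. Fix $\varepsilon>0$ and consider
$$
  E_\varepsilon=\big\{(\omega,x)\in\Omega\times\R^d\ \big|\ V^+(\omega,x)\geq V^-(\omega,x)+\varepsilon\big\},
$$
which is $\FF\otimes\BB(\R^d)$-measurable since $V^\pm$ are jointly measurable as normal integrands, and whose $\omega$-sections are closed in $\R^d$ as superlevel sets of the upper-semicontinuous function $V^+(\omega,\cdot)-V^-(\omega,\cdot)$. If $D_\varepsilon:=\{\omega\,:\,E_{\varepsilon,\omega}\neq\varnothing\}$ had positive probability, the Kuratowski--Ryll-Nardzewski theorem (cf.\ Appendix~\ref{ap:prvi}) would furnish a measurable selection $x:D_\varepsilon\to\R^d$ with $x(\omega)\in E_{\varepsilon,\omega}$; extending it arbitrarily off $D_\varepsilon$ produces a strategy in $L^0(\FF;\R^d)=\AA$, and the two representations then force $V^+(\omega,x(\omega))=\widehat f(x)(\omega)=V^-(\omega,x(\omega))$ a.s., contradicting the $\varepsilon$-gap on the positive-measure set $D_\varepsilon$. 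Hence $\P(D_{1/n})=0$ for every $n$, and combining with the first direction gives $V^+=V^-$ on a set of full measure in $\omega$. The main obstacle---and precisely the reason the analogous multistep statement fails, as foreshadowed in the preceding remark---is this selection step: in a multistep model $\AA$ consists only of $\F$-adapted strategies, and the selector produced from $E_\varepsilon$ cannot in general be arranged to be adapted to the filtration.
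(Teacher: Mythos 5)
Your proof is correct and follows essentially the same route as the paper's: apply Theorem~\ref{thm:rep1D} to $\widehat f$ and $-\widehat f$ to get an upper- and a lower-semicontinuous representing integrand, and rule out an $\varepsilon$-gap between them by taking a measurable selection from the (closed-valued, measurable) gap correspondence, exploiting that in the one-step model every measurable selector is an admissible strategy. The only cosmetic differences are that you select directly from $\{x \mid V^+(\omega,x)\geq V^-(\omega,x)+\varepsilon\}$ rather than from the sandwich correspondence in $\R^d\times\R$ used in the paper, and that you spell out the easy inequality $V^+\geq V^-$ via rational points and semicontinuity, which the paper simply reads off the construction.
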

\begin{proof}
  Since sequential continuity implies sequential upper and lower-semi\-con\-ti\-nuity, Theorem~\ref{thm:rep1D} applies. It yields two representation integrands: $f^+$, that is upper-semicontinuous, and $f^-$, that is lower-semicontinuous. Observing the construction of those representations, we know that $f^+(\omega,\cdot)\geq f^-(\omega,\cdot)$ as functions for all $\omega$. As $f^+$ and $f^-$ are both representations for the mapping $f$, it also follows that $f^+(\omega,\vartheta(\omega))= f^-(\omega,\vartheta(\omega))$ a.s. for all $\vartheta\in L^0(\FF;\R^d)$. Consider the closed valued, $\FF$ measurable correspondences
$$
  A^\varepsilon(\omega) = \big\{(x,y)\in\R^d\times\R\,\big|\,f^-(\omega,x)+\varepsilon\leq y\leq f^+(\omega,x)\big\}.
$$
  If we can show that $A^\varepsilon=\varnothing$ almost surely for all $\varepsilon>0$, then it follows that for almost all $\omega$ the functions $f^+(\omega,\cdot)$ and $f^-(\omega,\cdot)$ coincide, i.e. outside a null-set the function $f^+(\omega,\cdot)$ is continuous, proving the claim. Assume that $\P[A^\varepsilon\not=\varnothing]>0$  for some $\varepsilon>0$. Then there exists an $\FF$ measurable selection $\vartheta$ of $A^\varepsilon$ on the set $\Omega^\varepsilon = \{\omega\in\Omega\,|\,A^\varepsilon\not=\varnothing\}$, that we set to 0 on $(\Omega^\varepsilon)^c$. But, by the representation result this implies that
$$
  \widehat f(\vartheta)(\omega) = f^-(\omega,\vartheta(\omega)) \leq f^+(\omega,\vartheta(\omega)) +\varepsilon = \widehat f(\vartheta)(\omega)+\varepsilon
$$
for $\omega\in\Omega^\varepsilon$, i.e. with positive probability. This is a contradiction, proving that $A^\varepsilon(\omega)=\varnothing$ a.s.
\end{proof}

\begin{example}
  Sequential continuity does not, in general, imply that the mapping can be represented by a Carath\'eodory map. Note, in particular, the following example. Let $\Omega=\R$ with $\GG$ trivial and a map $\widehat f:L^0(\GG;\R)\rightarrow L^0(\FF;\R)$ given with
  $$
    \widehat f(x)(\omega)=\left\{\begin{array}{ll}1&x\geq\omega\\0&\textrm{otherwise.}\end{array}\right.
    $$
 It is easy to see that the mapping $\widehat f$ is sequentially continuous with respect to any measure having no atoms. It is, however, not representable with a Carath\'eodory map.
\end{example}

%  _   _   _   _   _   _   _   _   _   _   _   _  
%   \_/ \_/ \_/ \_/ \_/ \_/ \_/ \_/ \_/ \_/ \_/ \_/ 
% \_/ \_/ \_/ \_/ \_/ \_/ \_/ \_/ \_/ \_/ \_/ \_/ \
%   \_/ \_/ \_/ \_/ \_/ \_/ \_/ \_/ \_/ \_/ \_/ \_/

\section{No-arbitrage and consequences}

The classical no-arbitrage condition in the frictionless market model of Example~\ref{ex:frictionless} is the following
  $$ 
    \sum_{t=1}^T \langle\vartheta_t,S_{t+1}-S_t\rangle\geq 0 
      \,\textrm{ a.s. }
          \quad\Longrightarrow\quad
    \sum_{t=1}^T \langle\vartheta_t,S_{t+1}-S_t\rangle = 0 
      \,\textrm{ a.s.}
  $$
The definition says that if the trading strategy does not incur losses, then it gives no gains either. There are two main reasons for considering such a definition of no-arbitrage, one technical and one philosophical. First, the definition is independent of the measure under which the condition is defined; it depends only on the null-sets of the measure $P$, and, as such, the no-arbitrage property of the market model is 'algebraic'. The second reason is that if a strategy $\vartheta\in\AA$ would be an arbitrage strategy, so would $n\vartheta$ for any $n>0$ and models having this property are not realistic, i.e. the trader cannot make his position arbitrarily large in the real world.

It is this second argument that we base the no-arbitrage condition on. The idea is that one cannot increase his or her position without increasing also the downside. To describe the market model for large values of $\vartheta$, the no-arbitrage condition is defined in terms of the recession model. It was introduced in financial mathematics framework in~\cite{pennanen2010hedging}. See also~\cite{pennanen2011convex} for a more general presentation. The recession arguments can be found also much earlier in~\cite{bertsekas1974necessary} in the context of portfolio optimization. 

In the remainder of the chapter mapping $V:\Omega\times\R^{dT}\rightarrow\R\cup\{-\infty\}$, defined in the representation theorem will be called a market model. That is, we are assuming that the market model $\widehat{V}$ is upper-semicontinuous.

\begin{definition}
  Let $V:\Omega\times\R^{dT}\rightarrow\R\cup\{-\infty\}$ be a market model. The recession market model 
  $V^\infty:\Omega\times\R^{dT}\rightarrow
      \R\cup\{\pm\infty\}$ is defined in an $\omega$-wise manner as
\begin{align*}
  V^\infty(\omega,z) = \lim_{\lambda\rightarrow\infty}\,\, \sup_{\substack{\delta>\lambda,\\|x-z|<\frac1\lambda}}\, \frac1\delta V(\omega,\delta x).
\end{align*}
\end{definition}

The mapping $V^\infty$ is well defined, as the limit in the expression above is of a decreasing sequence. By Exercise 14.54 in~\cite{rockafellar1998variational}, the mapping $V^\infty$ is also 
$\FF\otimes\BB(\R^{dT})$ measurable, such that the function $x\mapsto V^\infty(\omega,x)$ is upper-semicontinuous and positively homogeneous for all $\omega$. Note that when $V$ is a positively homogeneous market model, then $V = V^\infty$, i.e. the recession mapping is just an upper-semicontinuous regularization of the market model $V$.

\begin{remark}
  There are two reasons for restricting our attention to the upper-semicontinuous market models $\widehat{V}$ from now on. The first is that we want our market model to be upper-semicontinuous in order to prove closure of the set of the superhedgeable claims. There is, however, a more subtle reason. First, let us note that one can define the recession market model also in the general case as
$$
  \widehat V^\infty (\vartheta) = \lim_{\lambda\rightarrow\infty} \esssup_{\substack{\delta>\lambda,\\ \|\varphi-\vartheta\|_\infty<\frac1\lambda}}\, \frac1\delta \widehat V(\delta\varphi).
$$
This can be seen as a recession map of the upper-semicontinuous envelope of $\widehat V$. So, here is the subtle point: it might happen that the upper-semicontinuous envelope is not $\R\cup\{-\infty\}$ valued, i.e. takes the value $+\infty$ for some $\vartheta\in\AA$ with positive probability. By assuming upper-semicontinuity, we exclude this situation.
\end{remark}

We now define the no-arbitrage condition. The idea should be compared to conditions of the Theorem 3.10 in~\cite{rockafellar1998variational}.

\begin{definition}
  A market model satisfies the no-arbitrage ($\NA$) condition, if 
  $$
    \big\{\vartheta\in\AA\,\big|\,V^\infty(\vartheta)\geq0\,\, a.s.\big\} = \{0\}.
  $$
\end{definition}

As a simple example, one can check that the frictionless market model satisfies the $\NA$ condition exactly when it satisfies the classical no-arbitrage condition and there are no redundant assets. 

  A slightly more general market model is the one of Example~\ref{ex:additive}. One can easily check that the recession market model is given with 
  $$
    V^\infty(\vartheta) = \sum_{t=0}^{T-1}\Big[
      \vartheta_t(S_{t+1}-S_t)
     +(- g_t)^\infty(\vartheta_t - \vartheta_{t-1})\Big].
  $$ 
So, this market model satisfies the $\NA$ condition in particular when $(S_t)$ is an arbitrage free price process and $g^\infty_t(x)>0$ a.s. for all $x\not=0$; such a price process $(S_t)$ is called a strictly consistent price system in the theory of markets with proportional transaction costs.

\begin{remark}
  This condition of no-arbitrage appears already in~\cite{pennanen2011arbitrage} under the name of \emph{no scalable arbitrage} and is defined for convex transaction costs and additive structure of the market. A similar idea of the no-arbitrage condition can be found in~\cite{bouchard2011no} under the name of \emph{no marginal arbitrage of the second kind for high production regimes}. They assume that the market is arbitrage free if it can be dominated by an affine market model, satisfying the no-arbitrage condition. This also implies that it is arbitrage free under our definition.
\end{remark}

\begin{remark}  
  Here is a trivial observation, mentioned in the previous remark: if $V_1$ and $V_2$ are two market models, and $V_1(\omega,x)\leq V_2(\omega,x)$ for all $(\omega,x)$ and if $V_2$ satisfies the no-arbitrage condition, then also $V_1$ satisfies the no-arbitrage condition. One can understand the theory of proportional transaction costs through this observation: the fundamental theorem of asset pricing with proportional transaction costs states that the market model $V_1$ with proportional transaction costs is arbitrage free if and only if one can find a dominating $V_2$ model that is frictionless and also arbitrage free. To be precise, $V_2$ needs to satisfy the efficient no-arbitrage condition and admit no redundant assets.
The following example shows that this equivalence is not true anymore in non-convex market models.
\end{remark}

\begin{example}
  In a one step model with two states
   $\Omega = \{\omega_1,\omega_2\}$ and trivial initial sigma algebra, we define a market model as follows:
   $$
     V(\omega,x) = \left\{
       \begin{array}{ll}
         |x| & : \,\omega = \omega_1\\
        -|x| & : \,\omega = \omega_2
       \end{array}
       \right.
   $$
   The market model is positively homogeneous and its recession cone is just equal to $V$. It is clear that the market model satisfies the no-arbitrage condition. However, we cannot dominate it with any frictionless market model.
\end{example}

Let $f\in L^0(\FF;\R\cup\{-\infty\})$ be a random variable. Denote by $\CC_f$ the set of all superhedgeable claims that dominate $f$, i.e.
  \begin{align}
    \CC_f = \big\{g\in L^0(\FF;\R)\,\big|\, \exists z\in\AA:\,\,V(z)\geq g\geq f\,\,\textrm{a.s.}\big\}.
  \end{align}
Also, define by $\CC$ the set of all superhedgeable claims in the market, i.e. denote $\CC = \CC_{-\infty}$. 

We now define market viability. What we mean by that is the minimal assumption on the market model which makes it sensible to talk about expected utility maximization. 

\begin{definition} 
  We say that the market model is viable if the set $\CC_f$ is bounded in probability for every finite random variable $f\in L^0(\FF;\R)$.
\end{definition}

\begin{lemma}\label{lem:efficient_liability}
   Market models satisfying the $\NA$ condition are viable.
\end{lemma}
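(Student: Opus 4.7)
The plan is to argue by contradiction and distill, from a sequence witnessing the unboundedness of $\CC_f$, a non-trivial adapted $z_* \in \AA$ with $V^\infty(z_*) \geq 0$ a.s., which would violate the $\NA$ condition. Concretely, if $f \in L^0(\FF;\R)$ is such that $\CC_f$ is unbounded in probability, then since every $g \in \CC_f$ dominates $f$ (so $g^-$ is controlled by $f^-$ and automatically bounded in probability), the unboundedness must occur on the positive side: there exist $\delta > 0$, $g_n \in \CC_f$, and $z_n \in \AA$ with $f \leq g_n \leq V(z_n)$ and $\P[V(z_n) > n] \geq \delta$ for all $n$.

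First I would show that the strategies $z_n$ themselves must blow up on a set of positive measure. If $(z_n)$ were bounded in probability, a standard measurable-diagonal extraction for sequences in $L^0(\FF_t;\R^d)$ would produce a subsequence converging a.s.\ to an adapted $\hat z \in \AA$. Upper-semicontinuity of $V$, which is available via Theorem~\ref{thm:rep1D}, would then force $V(\hat z) \geq \limsup_n V(z_n) = +\infty$ on $\limsup_n \{V(z_n) > n\}$, a set of measure at least $\delta$ by Fatou, contradicting $V \in \R\cup\{-\infty\}$.

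With $(z_n)$ unbounded, the next step is a Schachermayer--Kabanov backward-in-time normalization. At $t=0$, isolate the $\FF_0$-measurable set where $|z_{n,0}|$ diverges and rescale there by the $\FF_0$-measurable scalar $|z_{n,0}|$; extract an a.s.\ convergent subsequence of $z_{n,0}/|z_{n,0}|$ from the $L^0$-compactness of the unit sphere of $\R^d$. Iterate for $t=1,\ldots,T-1$, each time restarting on the residual $\FF_t$-measurable set where the scalar just used stayed bounded. The output is an adapted $z_* \in \AA$, together with positive random scalars $\alpha_n$ such that $\alpha_n \to \infty$ on some $B \in \FF$ with $\P[B] > 0$, $z_n/\alpha_n \to z_*$ a.s.\ on $B$, and $z_* \neq 0$ on $B$; on $B^c$ one declares $z_* = 0$ by locality axiom \textbf{A2}, since the construction can be carried out set by set.

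Finally, the recession definition applied on $B$ yields
\[
  V^\infty(z_*) \;\geq\; \limsup_n \frac{V(z_n)}{\alpha_n} \;\geq\; \limsup_n \frac{f}{\alpha_n} \;=\; 0,
\]
and on $B^c$ one has $V^\infty(0) \geq 0$ from $V(0)=0$ (axiom \textbf{A1}), so $V^\infty(z_*) \geq 0$ globally while $z_*$ is non-trivial---contradicting $\NA$. I expect the real obstacle to be the rescaling step: unlike in the convex theory we cannot pass to convex combinations \`a la Schachermayer's Hilbert-space trick, so adaptedness of the cluster point must be forced through the step-by-step normalization above, which relies on $L^0$-compactness of bounded closed subsets of $L^0(\FF_t;\R^d)$ together with measurable selection at each time stage.
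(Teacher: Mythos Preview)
Your global strategy coincides with the paper's: argue by contradiction, show the superhedging strategies $(z_n)$ cannot be bounded in probability, then run the step-by-step rescaling to manufacture a non-zero $z_*$ with $V^\infty(z_*)\geq 0$. The rescaling part is essentially what the paper packages as Lemma~\ref{lem:closure}, and your outline of it is acceptable (modulo the harmless slip of calling it ``backward-in-time'' while actually starting at $t=0$).

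The genuine gap is in your argument that $(z_n)$ must be unbounded. You write that, after extracting an a.s.\ convergent subsequence $z_{\tau(k)}\to\hat z$, upper-semicontinuity ``would force $V(\hat z)\geq\limsup_n V(z_n)$''. Upper-semicontinuity only gives $V(\hat z)\geq\limsup_k V(z_{\tau(k)})$ along the \emph{extracted} subsequence; it says nothing about the full sequence. The Fatou set $\limsup_n\{V(z_n)>n\}$ records indices where the \emph{original} sequence blows up, and there is no reason the random indices $\tau(k)(\omega)$ should ever visit those events. Indeed, boundedness in probability is compatible with $\limsup_n|z_n(\omega)|=\infty$ for every $\omega$ (take $z_n=n\cdot\indic_{I_n}$ for intervals $I_n\subset[0,1]$ of length $1/n$ cycling through $[0,1]$); in such a situation any convergent subsequence necessarily avoids the large values, so $V(z_{\tau(k)})$ need not explode at all.

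The paper closes this step without any subsequence extraction: since $x\mapsto V(\omega,x)$ is upper-semicontinuous, it is bounded above on each ball $B_i(0)$ by a finite $m_i(\omega)$. If $(z_n)$ were bounded in probability one could pick $M$ with $\sup_n\P[|z_n|>M]<\varepsilon/2$ and then $N$ with $\P[\max_{i\leq M}m_i>N]<\varepsilon/2$, giving $\P[V(z_n)>N]<\varepsilon$ uniformly in $n$ --- contradicting $\P[V(z_n)>n]>\delta$ once $n>N$. Replacing your subsequence paragraph by this bounded-on-compacts estimate repairs the proof; the rest of your outline then goes through as in the paper.
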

\begin{proof}
   In the proof, we consider $f\in L^0(\FF;\R)$ given and fixed. Assume that $\CC_f$ is not bounded in probability and let $(\vartheta_n)\subset \AA$ be a sequence of strategies such that $P[V(\vartheta_n)\geq n]>\delta$ for some $\delta>0$. We will show that the sequence $(\vartheta_n)$ is unbounded in probability and the rest follows by Lemma~\ref{lem:closure} of the subsequent section. 

  For almost all $\omega$, the functions $x\mapsto V(\omega,x)$ are upper-semicontinuous and $[-\infty,\infty)$ valued, hence they are bounded above on compacts. Hence, there exists a sequence of finite random variables $(m_i)$ such that $V(\omega,x)\leq m_i(\omega)$ for each $x\in B_i(0)$ in the ball with radius $i$ and centered in $0$. 
   
   Assume that the sequence of strategies $(\vartheta_n)$ is bounded in probability. By definition, for each $\varepsilon>0$ we can find an $M>0$, such that $\sup_{n\in\N}\P[|\vartheta_n|\geq M]< \frac{\varepsilon}{2}$. Next, choose an $N>0$ such that also $\P[\sup_{i=1,\ldots,M}m_i\geq N]<\frac{\varepsilon}{2}$. We estimate
\begin{align*}
   \P[|V(\vartheta_k)|>N]&=
\P[|V(\vartheta_k)|>N,\,|\vartheta_k|>M]+\P[|V(\vartheta_k)|>N,\,|\vartheta_k|\leq M]\\
&\leq \P[|\vartheta_k|>M] + \P\left[\sum_{i=1}^M m_i\indic_{|\vartheta|\in(i-1,i]}>N\right]\\
&\leq \frac{\varepsilon}{2} + \P\left[\sup_{i=1,\ldots,M}m_i >N\right]\\
&\leq \frac{\varepsilon}{2}+\frac{\varepsilon}{2} = \varepsilon.
\end{align*}
This implies that also the set of strategies $(\vartheta_n)$ is unbounded in probability. In turn, this implies, by Lemma~\ref{lem:closure}, that the market model cannot satisfy $\NA$.
\end{proof}

\begin{remark}
  The reverse implication is in general not true. The $\NA$ condition is only sufficient, but not necessary condition for viability. Already the market model $V(z)=0$, which is not arbitrage free by our definition, does not satisfy the condition, although the model is clearly viable.
\end{remark}

For a random variable $f$, define a subset $\AA_f$ of strategies that superhedge $f$, i.e.
   $$
     \AA_f = \big\{\vartheta\in\AA\,\big|\,V(\vartheta)\geq f\,\,\textrm{ a.s.}\big\}.
   $$ 

\begin{lemma}\label{lem:closure}
   Let $f\in L^0(\FF;\R)$ be a random variable and let $V$ be a market model satisfying the\, $\NA$ condition. Then $\AA_f$ is bounded in probability.
\end{lemma}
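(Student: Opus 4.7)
The plan is to argue by contradiction via the recession mapping $V^\infty$. Suppose $\AA_f$ is not bounded in probability; then there exist $\alpha>0$ and a sequence $(\vartheta^n)\subset\AA_f$ with $\P[\max_t\|\vartheta^n_t\|>n]>\alpha$ for every $n$. Set $\lambda_n:=1\vee\max_t\|\vartheta^n_t\|$ and $\eta^n:=\vartheta^n/\lambda_n$, so $\|\eta^n\|\leq 1$ a.s. By Fatou's lemma the set $\Omega_\infty:=\{\limsup_n\lambda_n=\infty\}$ has probability at least $\alpha$, and on $\Omega_\infty$ the equality $\|\eta^n\|=1$ holds for infinitely many $n$.

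The crux of the proof is to extract from $(\eta^n)$ an almost surely convergent subsequence whose limit $\eta$ lies in $\AA$, and this is the main obstacle: a single normalization by the $\FF_{T-1}$-measurable factor $\lambda_n$ destroys adaptedness. I would proceed by a forward induction in time, combining coordinate-wise rescaling with the Kabanov--Stricker measurable-subsequence lemma. At stage $t\in\{0,\ldots,T-1\}$, either the $t$-th coordinate of the partially normalized sequence is bounded in probability and one extracts an a.s.\ convergent subsequence in $L^0(\FF_t;\R^d)$ via Kabanov--Stricker, or else one divides by its own $\FF_t$-measurable norm $1\vee\|\vartheta^n_t\|$, affecting only the current and future coordinates and hence preserving $\FF_s$-measurability of each coordinate $s$. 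After $T$ rounds and a diagonal extraction one obtains an adapted limit $\eta\in\AA$ with $\|\eta\|\leq 1$. Because the accumulated rescaling blows up precisely on $\Omega_\infty$, the limit satisfies $\|\eta\|>0$ on a set of positive probability.

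With $\eta$ in hand, the recession inequality is direct. Dividing the defining inequality $V(\vartheta^n)\geq f$ by $\lambda_n$ gives $\tfrac{1}{\lambda_n}V(\cdot,\lambda_n\eta^n(\cdot))\geq f/\lambda_n$, whose right-hand side tends to $0$ on $\Omega_\infty$. Because $V^\infty(\omega,z)$ is defined as a decreasing limit of suprema $\sup_{\delta>\lambda,\,|x-z|<1/\lambda}\delta^{-1}V(\omega,\delta x)$, and because along our subsequence $\lambda_n\to\infty$ together with $\eta^n\to\eta$ on $\Omega_\infty$, one gets
$$
V^\infty(\cdot,\eta(\cdot))\;\geq\;\limsup_n\tfrac{1}{\lambda_n}V\big(\cdot,\lambda_n\eta^n(\cdot)\big)\;\geq\;0\qquad\text{on }\Omega_\infty.
$$
Since $V^\infty$ is positively homogeneous with $V^\infty(\cdot,0)=0$, a final localization---using axiom $\mathbf{A2}$ applied iteratively in $t$ to replace $\eta$ by its restriction to an adapted intersection $\bigcap_t A_t$ with $A_t\in\FF_t$ carrying $\Omega_\infty$---yields a non-zero strategy in $\AA$ with $V^\infty\geq 0$ almost surely, contradicting $\NA$. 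The principal difficulty, as indicated, is the joint preservation of adaptedness and non-triviality of the limit under the iterated normalization; once that is settled, the recession manipulation is routine from the definitions of $V^\infty$ and $\NA$.
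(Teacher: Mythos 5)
Your overall architecture---contradiction, stage-wise measurable normalization with random subsequences, then the recession bound---is the same as the paper's, but there is a genuine gap exactly at the junction of your two normalizations. The recession inequality you invoke, $V^\infty(\cdot,\eta)\geq\limsup_n\lambda_n^{-1}V(\cdot,\lambda_n\eta^n)$, only applies when the original strategies are recovered from the normalized ones by a \emph{single} scalar per $(\omega,n)$: one needs $\vartheta^n=\delta_n x_n$ with $\delta_n\to\infty$ and $x_n\to\eta$. Your adaptedness-preserving procedure divides coordinate $s$ by the product of the factors $1\vee\|\vartheta^n_t\|$ over blow-up stages $t\leq s$ only, so distinct coordinates are rescaled by distinct amounts. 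The adapted limit $\eta$ it produces is therefore \emph{not} the limit of $\vartheta^n/\lambda_n$ for any scalar $\lambda_n$, and the chain $f/\lambda_n\leq\lambda_n^{-1}V(\cdot,\lambda_n\eta^n)\to V^\infty(\cdot,\eta)$ never connects to the hypothesis $V(\vartheta^n)\geq f$. Concretely: if only coordinate $t_0>0$ blows up, your $\eta$ retains the (generally nonzero, finite) limits of the unrescaled coordinates $s<t_0$, whereas the vector actually controlled by the recession bound is $\lim_n\vartheta^n/\delta_n$ with $\delta_n$ the full product, whose coordinates $s<t_0$ vanish. There is no reason why $V^\infty(\eta)\geq0$ should hold for your $\eta$.

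The paper resolves this tension the other way around: at each blow-up stage it multiplies the \emph{entire} strategy by the $\FF_t$-measurable factor $\lambda_t(k)$, so that in the end $\hat\vartheta^k=\lambda(k)\vartheta^{\tau(k)}$ for a single scalar $\lambda(k)$ and the recession computation goes through $\omega$-wise verbatim. Adaptedness of the limit is then recovered not termwise but structurally: every coordinate preceding the last blow-up stage is driven to $0$ (a trivially adapted limit), while the coordinate at the last blow-up stage converges to a vector of norm one on a set of positive probability, which also delivers the non-triviality of the limit. To repair your argument you should adopt this single-scalar normalization and supply those two observations; the remaining steps (reverse Fatou for $\P[\Omega_\infty]\geq\alpha$, the localization via \textbf{A2} and the fact that $V^\infty(\cdot,0)=0$ to pass to an almost sure arbitrage) are then routine as you indicate.
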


   This is essentially a variation on Theorem 3.10 of~\cite{rockafellar1998variational}; compare also to~\cite{pennanen2011convex}. The proof is based on the ideas of random subsequences: 

\begin{lemma}
  Let $(f_n)\subset L^0(\FF;\R^n)$ be a sequence of random vectors such that  $\P\left[\liminf_n|f_n|<\infty\right]=1$. Then, there exists an increasing sequence of random variables $(\tau(n))\subset L^0(\FF;\N)$ such that $(f_{\tau(n)})$ converges almost surely to a random variable in $L^0(\FF;\R)$.
\end{lemma}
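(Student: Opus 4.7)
The plan is to pass through two reductions: first extract a random subsequence along which the sequence is $\omega$-wise bounded, then extract a further subsequence---one coordinate at a time---along which it actually converges. Throughout, the key measurability mechanism is that ``the $k$-th smallest $m\in\N$ satisfying countably many measurable conditions'' is an $\FF$-measurable $\N$-valued random variable, provided the defining $\omega$-dependent set of indices is a.s.\ infinite. Concretely, set $\ell(\omega)=\liminf_{m}|f_m(\omega)|$, which is $\FF$-measurable and finite a.s.\ by hypothesis; enumerate in increasing order the a.s.\ infinite set $\{m:|f_m(\omega)|\leq\ell(\omega)+1\}$ to obtain $\FF$-measurable indices $\sigma_1<\sigma_2<\cdots$. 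The subsequence $g_k:=f_{\sigma_k}$ then satisfies $|g_k|\leq\ell+1$ a.s.\ for every $k$.

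Iterate the following coordinate extraction for $i=1,\dots,n$, starting from $(h_k):=(g_k)$. Assuming $(h_k)$ is an $\FF$-measurable random subsequence of $(f_m)$ along which coordinates $1,\dots,i-1$ already converge a.s.\ and which is still dominated by $\ell+1$, set $a(\omega):=\liminf_k h_k^{(i)}(\omega)$; this is a finite, $\FF$-measurable random variable since $|h_k|\leq\ell+1$. For each $k$ the set $\{m:h_m^{(i)}(\omega)<a(\omega)+1/k\}$ is a.s.\ infinite by definition of $\liminf$, so $\rho_k(\omega):=\min\{m>\rho_{k-1}(\omega)\,:\,h_m^{(i)}(\omega)<a(\omega)+1/k\}$ is a well-defined, $\FF$-measurable, strictly increasing integer-valued random variable. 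Along $(h_{\rho_k})$ one has
\begin{align*}
 a(\omega)\ \leq\ \liminf_k h_{\rho_k}^{(i)}(\omega)\ \leq\ \limsup_k h_{\rho_k}^{(i)}(\omega)\ \leq\ a(\omega)\qquad\text{a.s.},
\end{align*}
so the $i$-th coordinate converges a.s.\ to $a$, while the already convergent coordinates $1,\dots,i-1$ remain convergent since passing to a subsequence preserves convergence. Replace $(h_k)$ by $(h_{\rho_k})$ and continue with $i+1$.

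After $n$ iterations, composing the resulting $\FF$-measurable, strictly increasing index maps yields the required $\tau(\cdot)\in L^0(\FF;\N)$ along which every coordinate of $(f_{\tau(k)})$, and hence $(f_{\tau(k)})$ itself, converges a.s.\ to a random vector in $L^0(\FF;\R^n)$. The main obstacle the argument has to overcome is the measurability bookkeeping under $\omega$-dependent ``$k$-th element'' choices, which is precisely what makes the preliminary boundedness reduction essential: without it the candidate $a$ in the coordinate step could equal $-\infty$ on a set of positive probability, destroying both its finiteness and the infiniteness of the set $\{h_m^{(i)}<a+1/k\}$ used to define $\rho_k$.
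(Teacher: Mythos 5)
Your proof is correct. The paper actually states this lemma without proof (it is the standard Kabanov--Stricker random-subsequence lemma), and your argument --- a preliminary boundedness reduction via $\ell=\liminf_m|f_m|$ followed by coordinatewise extraction of a measurable subsequence converging to the coordinate $\liminf$, with measurability secured by ``first index satisfying a measurable condition'' constructions --- is precisely the standard argument for this result, so there is nothing to compare against and nothing missing.
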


\begin{proof}\emph{of Lemma~\ref{lem:closure}.}
The argument is by contradiction: assume that $\AA_f$ is not bounded in probability. We may choose a sequence $(\vartheta^k)_{k\in\N}\subset\AA_f$ that is already unbounded in probability. Then, by using the idea of random subsequences, we want to show that there exists a strategy $\hat \vartheta$ with $V^\infty(\hat \vartheta)\geq0$ a.s.  

We repeat the following algorithm for every step $t$, starting with $t=0$ and proceeding in increasing order. By $(\vartheta^k)$ we denote the original sequence above and work instead with a modified sequence $(\hat\vartheta^k)$; the first version of this sequence, i.e. before starting the algorithm, we define as $\vartheta^k = \hat\vartheta^k$. 

If the set $\{\omega\,|\,\liminf_{k\rightarrow\infty} |\hat\vartheta_t^k|=\infty\}$ has non-zero probability, denote it by $A$ and define a sequence of random variables $\lambda_t(k) = \frac{1}{1+|\hat\vartheta_t^k|}$ on $A$ and zero outside; otherwise set $\lambda_t(k) = 1$. We pass to an $\FF_t$ measurable subsequence $(\tau_t(k))$, such that, by the above Lemma, the sequence of random variables $\lambda_t(\tau_t(k))\hat\vartheta_t^{\tau_t(k)}$ converges almost surely. Before continuing with the next step $t+1$, replace the original sequence $\hat\vartheta^k$ by $\lambda_t(\tau_t(k))\hat\vartheta^{\tau_t(k)}$.

  Denoting 
$$
  \tau(k) = \tau_0(\ldots\tau_{T-2}(\tau_{T-1}(k))\ldots)
$$ 
and 
$$
  \lambda(k) = \prod_{t=0}^{T-1}\lambda_t(\tau_t(\ldots\tau_{T-2}(\tau_{T-1}(k))\ldots)),
$$ 
we may readily see that those sequences were constructed in such a way that $\hat\vartheta^k$, after finishing the procedure above, satisfies $\hat\vartheta^k = \lambda(k)\vartheta^{\tau(k)}$ and also $\hat\vartheta^k\rightarrow\varphi$ a.s. for some adapted strategy $\varphi\in\AA$. 

   Now we show that $V^\infty(\varphi)\geq0$ almost surely. We have
\begin{align*}
0 &= \lim_{k\rightarrow\infty} 
   \lambda(k)f\\
  &\leq
    \lim_{k\rightarrow\infty} 
       \lambda(k) V(\vartheta^{\tau(k)})\\ 
  &=
    \lim_{k\rightarrow\infty} 
     \lambda(k)
        V\left(\frac{\hat\vartheta^{\tau(k)}}{\lambda(k)}\right)\\
  &\leq \lim_{k\rightarrow\infty} 
     \sup_{\substack{
        \lambda\in(0,\frac{1}{\lambda(k)})\\
        |x-\varphi(\omega)|\leq \lambda(k)}} 
          \lambda V\Big(\frac{x}{\lambda}\Big) \\
  &= 
    V^\infty(\varphi).
\end{align*}
This is a contradiction, since, by construction, the strategy $\varphi$ is non-zero and we assumed that the model satisfies the no-arbitrage condition.
 Note that the above calculation is done in an $\omega$-wise fashion.
\end{proof}

\begin{lemma}\label{lem:von}
Let $(\vartheta^n)\subset\AA$ be a sequence that is bounded in probability. Then, there exists an increasing sequence of random variables $(\tau(n))\subset L^0(\FF;\N)$ such that $(\vartheta^{\tau(n)})$ converges almost surely to an adapted strategy $\vartheta\in\AA$.
\end{lemma}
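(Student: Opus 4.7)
The plan is to iteratively apply the previous lemma coordinate by coordinate in time, extracting at each step a random subsequence that preserves convergence of the earlier coordinates while making the next one converge to an $\FF_t$-measurable limit.

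First I would observe that boundedness in probability of $(\vartheta^n)\subset L^0(\FF;\R^{dT})$ entails boundedness in probability of each coordinate sequence $(\vartheta^n_t)\subset L^0(\FF_t;\R^d)$, and in particular $\liminf_n |\vartheta^n_t|<\infty$ almost surely for every $t$; indeed, on $\{\liminf_n |\vartheta^n_t| = \infty\}$ of positive probability $\delta$, the event $\{|\vartheta^n_t|>M\}$ would occur eventually for every $M$, forcing $\sup_n \P[|\vartheta^n_t|>M]\geq \delta$ uniformly in $M$, contradicting boundedness in probability.

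I would then run an induction on $t=0,1,\ldots,T-1$. At $t=0$ I apply the previous lemma to the $\FF_0$-measurable sequence $(\vartheta^n_0)$; inspection of the construction used to prove that lemma shows that the extracted subsequence can in fact be taken in $L^0(\FF_0;\N)$ rather than merely $L^0(\FF;\N)$, because the selection depends only on the sigma-algebra making the given sequence measurable. This produces a strictly increasing $\tau_0\in L^0(\FF_0;\N)$ with $\vartheta^{\tau_0(n)}_0\to \vartheta_0$ a.s.\ for some $\vartheta_0\in L^0(\FF_0;\R^d)$. Inductively, assume a composed random subsequence $\sigma_t(n):=\tau_0(\tau_1(\cdots\tau_t(n)\cdots))\in L^0(\FF_t;\N)$ has been constructed such that $\vartheta^{\sigma_t(n)}_s\to \vartheta_s$ a.s.\ for every $s\leq t$. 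Apply the previous lemma to the $\FF_{t+1}$-measurable sequence $(\vartheta^{\sigma_t(n)}_{t+1})_n$, which still satisfies $\liminf_n |\vartheta^{\sigma_t(n)}_{t+1}|<\infty$ a.s., to obtain $\tau_{t+1}\in L^0(\FF_{t+1};\N)$ along which it converges a.s.\ to some $\vartheta_{t+1}\in L^0(\FF_{t+1};\R^d)$. The earlier convergences survive because any random sub-subsequence of an a.s.\ convergent sequence converges to the same limit.

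Setting $\tau(n):=\sigma_{T-1}(n)$ yields $\vartheta^{\tau(n)}\to\vartheta:=(\vartheta_0,\ldots,\vartheta_{T-1})$ a.s., and the limit lies in $\AA$ because each $\vartheta_t$ was constructed at the $t$-th inductive step as the a.s.\ limit of $\FF_t$-measurable random variables. The main technical obstacle is the measurability bookkeeping — inspecting the proof of the previous lemma to confirm that the extracted subsequence can be taken in $L^0(\FF_t;\N)$ at each step — combined with the careful verification that $\liminf<\infty$ is preserved under the successive $\FF_t$-measurable extractions, a point that is delicate because random subsequences of bounded-in-probability sequences need not themselves be bounded in probability in full generality.
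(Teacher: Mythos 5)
Your argument is, in structure, exactly the paper's proof: extract an $\FF_0$-measurable random subsequence making $\vartheta_0$ converge, then an $\FF_1$-measurable sub-subsequence for $\vartheta_1$, and so on, with adaptedness of the limit coming from the measurability of each extraction and the stability of a.s.\ limits under further sub-subsequences. All of that bookkeeping is fine. The point you single out at the end as ``delicate'' is, however, a genuine gap rather than a formality, and the paper's own three-line proof does not address it either. After the first extraction $\tau_0$, chosen only so that $\vartheta_0^{\tau_0(k)}$ converges, there is no reason why $\liminf_k|\vartheta_1^{\tau_0(k)}|<\infty$ should still hold almost surely, so the hypothesis of the random-subsequence lemma can fail at the very next step. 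Concretely, take $\Omega=[0,1)$ with Lebesgue measure and intervals $A_n$ of length $1/n$ rotating around the circle, so that a.e.\ $\omega$ lies in infinitely many and outside infinitely many $A_n$; set $\vartheta_0^n=\tfrac1n\indic_{A_n}+\indic_{A_n^c}$ and $\vartheta_1^n=n\,\indic_{A_n}$. Both coordinates are bounded in probability, yet the canonical extraction for $\vartheta_0$ (the $k$-th index at which $\vartheta_0^n$ is within $1/k$ of its $\liminf$, which is $0$) selects precisely indices $n$ with $\omega\in A_n$, along which $\vartheta_0^{\tau_0(k)}\rightarrow0$ but $\vartheta_1^{\tau_0(k)}\rightarrow\infty$ almost surely, and no further extraction can repair this. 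A good subsequence does exist in this example (pick indices with $\omega\notin A_n$), so the lemma itself is not refuted; but it shows that ``take any convergent subsequence and proceed'' is not a proof — the selection at step $t$ must be made compatibly with the later, not yet $\FF_t$-measurable, coordinates, and neither you nor the paper says how.

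The cleanest repair, at least in the context where the lemma is used, is to strengthen the hypothesis to a uniform pointwise bound: if $|\vartheta^n|\leq K$ a.s.\ for all $n$ and a single finite random variable $K$ — which is what Theorem~\ref{theorem:inf-compactness} provides for superhedging strategies, and whose proof does not rely on this lemma — then $\liminf_k|\vartheta_t^{\sigma(k)}|\leq K<\infty$ a.s.\ along \emph{every} random subsequence $\sigma$, and your induction goes through verbatim. Under boundedness in probability alone, an additional argument is needed to ensure that the step-$t$ selection preserves the finiteness of the $\liminf$ of the remaining coordinates; this is exactly where the difficulty you flagged lives, and it is not resolved by inspecting the measurability of the extraction.
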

\begin{proof}
We follow a similar argument as in the previous proof. Take an $\FF_0$ measurable subsequence $\tau_0(k)$, such that the sequence $\vartheta_0^{\tau_0(k)}$ converges to a finite random variable $\hat\vartheta_0$. Then pass to the $\FF_1$ measurable subsequence $\tau_1(k)$ of $\tau_0(k)$, such that $\vartheta_1^{\tau_1(k)}$ converges to $\hat\vartheta_1$. We proceed in this way, until we obtain a random subsequence $\tau(k)$, such that $\vartheta^{\tau(k)}$ converges to a predictable strategy $\hat\vartheta$.
\end{proof}

%%%%%%%%%%%%%%%%%%%%%%%%%%%%
\begin{theorem}\label{thm:closure}
  If the market model satisfies the no-arbitrage condition, then the set $\CC$ of superhedgeable claims is closed in probability.
\end{theorem}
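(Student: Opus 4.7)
\textit{Proof plan.} Let $(g^n)\subset\CC$ converge in probability to some $g\in L^0(\FF;\R)$; the goal is to produce $\vartheta\in\AA$ with $V(\vartheta)\geq g$ almost surely. The overall plan is: pass to an almost surely convergent subsequence, pick witnesses $\vartheta^n\in\AA$ with $V(\vartheta^n)\geq g^n$, reduce the verification that $(\vartheta^n)$ is bounded in probability to Lemma~\ref{lem:closure} by producing one common finite lower bound for all the $g^n$, then apply Lemma~\ref{lem:von} to extract an a.s.-convergent random subsequence $\vartheta^{\tau(n)}\to\vartheta$, and finally use the $\omega$-wise upper-semicontinuity of the normal integrand $V$ to pass to the limit.

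Concretely, after extracting a subsequence I may assume $g^n\to g$ almost surely, and I pick $\vartheta^n\in\AA$ with $V(\vartheta^n)\geq g^n$ a.s. The key observation is that $f:=\inf_n g^n$ belongs to $L^0(\FF;\R)$: $f$ is $\FF$-measurable as a countable infimum, and for a.e.\ $\omega$ the real sequence $(g^n(\omega))$ is convergent, hence bounded below, so $f(\omega)>-\infty$ almost surely. Since $V(\vartheta^n)\geq g^n\geq f$, every $\vartheta^n$ belongs to $\AA_f$, and Lemma~\ref{lem:closure} immediately yields that $(\vartheta^n)$ is bounded in probability.

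Now Lemma~\ref{lem:von} supplies an increasing $(\tau(n))\subset L^0(\FF;\N)$ and a $\vartheta\in\AA$ with $\vartheta^{\tau(n)}\to\vartheta$ almost surely. Fix an $\omega$ in the (full-measure) intersection of the sets on which $\tau(n)(\omega)\to\infty$, $\vartheta^{\tau(n)(\omega)}(\omega)\to\vartheta(\omega)$, $g^k(\omega)\to g(\omega)$, and $V(\omega,\vartheta^k(\omega))\geq g^k(\omega)$ holds for every $k$. By the $\omega$-wise upper-semicontinuity established in Lemma~\ref{lem:normal},
$$
V(\omega,\vartheta(\omega)) \;\geq\; \limsup_{n\to\infty} V(\omega,\vartheta^{\tau(n)(\omega)}(\omega)) \;\geq\; \limsup_{n\to\infty} g^{\tau(n)(\omega)}(\omega) \;=\; g(\omega),
$$
so $V(\vartheta)\geq g$ almost surely and hence $g\in\CC$.

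The main technical point is the uniform lower bound in the first step: Lemma~\ref{lem:closure} is stated for a fixed $f\in L^0(\FF;\R)$, so I need the $g^n$'s to be dominated from below by one finite random variable, and this is exactly what a.s.\ convergence of $(g^n)$ delivers through $\inf_n g^n$. Everything after that is a routine pointwise argument, with only the minor bookkeeping that the subsequence produced by Lemma~\ref{lem:von} is random, so one applies upper-semicontinuity of $V(\omega,\cdot)$ inside each fiber to the deterministic subsequence $(\vartheta^{\tau(n)(\omega)}(\omega))$.
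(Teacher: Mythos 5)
Your proposal is correct and follows essentially the same route as the paper's proof: pass to an a.s.\ convergent subsequence, use $f=\inf_n g^n$ as the common finite lower bound to invoke Lemma~\ref{lem:closure}, extract a random subsequence via Lemma~\ref{lem:von}, and conclude by the $\omega$-wise upper-semicontinuity of the normal integrand $V$. Your write-up is in fact slightly more careful than the paper's on the two small points it glosses over (why $\inf_n g^n$ is finite, and that the upper-semicontinuity argument is applied fiberwise to the deterministic sequence $\vartheta^{\tau(n)(\omega)}(\omega)$).
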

\begin{proof}
  Choose a sequence of random variables $h^k\in \CC$ that converges in probability to some random variable $h\in L^0$. By passing to a subsequence, we may assume that the sequence converges almost surely to $h$. Let $(\vartheta^k)$ be the corresponding sequence of strategies, such that $h^k\leq V(\vartheta^k)$. Now, note  that almost sure convergence implies that the pointwise minimum $f = \inf_k h^k$ is a finite random variable. By the lemma above, we know that the set $(\vartheta^k)$ is bounded in probability. 

Let $(\tau(k))\subset L^0(\FF,\N)$ be the increasing sequence, defined in Lemma~\ref{lem:von}. 
Note that we still have $h^{\tau(k)}\leq V(\vartheta^{\tau(k)})$. Noting that $V$ is pointwise upper semicontinuous, we get
$$ 
   h = \lim_{n\rightarrow\infty} h^{\tau(k)} 
    \leq \limsup_{k\rightarrow\infty} V(\vartheta^{\tau(k)})
    \leq V(\vartheta), 
$$
which shows that $f\in\CC$.
\end{proof}

\begin{remark}
  The previous statement one usually calls the 'superhedging theorem'. Let $f\in L^0(\FF;\R)$ be a random variable, i.e. the payoff of a contingent claim. We define the price of $f$ to be the minimal amount $p\in\R$ such that $f-p$ is superhedgeable, i.e.
$$
  \rho(f) = \inf\big\{p\in\R\ \big|\ f-p\in\CC\big\}.
$$
The result above states that if $\rho(f)$ is finite, then there exists a strategy $\vartheta^f\in\AA$ such that $f\leq\rho(f)+V(\vartheta^f)$ a.s. This is clear, since the sequence $f-2^{-n}\subset\CC$ converges a.s. to $f$, and by the preceeding theorem, the claim follows.
\end{remark}

We will now show that if the market model satisfies the $\NA$ condition, then there is a strong bound on the final wealth. We start with the following observation.

\begin{theorem}\label{theorem:inf-compactness}
  Let $V$ be a market model satisfying $\NA$ and $f\in L^0(\FF;\R)$ a random variable. Then there exists a finite random variable $K$ such that any strategy $\vartheta\in\AA_f$ that superhedges $f$ satisfies $|\vartheta|\leq K$ almost surely.
\end{theorem}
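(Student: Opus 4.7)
\emph{Plan.} The approach is by contradiction, exploiting the directedness afforded by the locality axiom. For each $t\in\{0,\ldots,T-1\}$ I introduce the $\FF_t$-measurable random variable
\[
  \psi_t := \esssup\{|\vartheta_t| \,:\, \vartheta\in\AA_f\} \;\in\; [0,\infty].
\]
The theorem reduces to showing that each $\psi_t$ is almost surely finite, after which $K := \psi_0+\cdots+\psi_{T-1}$ delivers the required bound, since $|\vartheta|\leq\sum_t|\vartheta_t|\leq\sum_t\psi_t$ a.s.\ for every $\vartheta\in\AA_f$.

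I first treat $t=0$. By axiom \textbf{A2} and the inclusion $\FF_0\subseteq\FF_s$ for every $s$, the set $\AA_f$ is closed under $\FF_0$-pasting: for $\vartheta^1,\vartheta^2\in\AA_f$ and $A\in\FF_0$ the adapted strategy $\vartheta^1\indic_A+\vartheta^2\indic_{A^c}$ satisfies
\[
  V(\vartheta^1\indic_A+\vartheta^2\indic_{A^c}) \;=\; V(\vartheta^1)\indic_A+V(\vartheta^2)\indic_{A^c} \;\geq\; f
\]
and hence lies in $\AA_f$. Taking $A=\{|\vartheta_0^1|\geq|\vartheta_0^2|\}\in\FF_0$, the $0$-th component of the paste equals $|\vartheta_0^1|\vee|\vartheta_0^2|$, so the family $\{|\vartheta_0|:\vartheta\in\AA_f\}$ is directed upward and $\psi_0$ is realized by an increasing sequence $|\vartheta_0^n|\nearrow\psi_0$ with $\vartheta^n\in\AA_f$. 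Were $\psi_0=\infty$ on some $B\in\FF_0$ with $P(B)>0$, then $P(|\vartheta^n|>M) \geq P(B\cap\{|\vartheta_0^n|>M\}) \to P(B) > 0$ for every fixed $M$, contradicting the boundedness in probability of $\AA_f$ from Lemma~\ref{lem:closure}. Therefore $\psi_0<\infty$ a.s.

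For $t\geq 1$ I would proceed by induction on the horizon $T$. Given the bound $|\vartheta_0|\leq\psi_0$, for each admissible $\vartheta_0$ the tail $(\vartheta_1,\ldots,\vartheta_{T-1})$ superhedges $f$ in the shifted market model
\[
  \widetilde V_{\vartheta_0}(\vartheta_1,\ldots,\vartheta_{T-1}) \;:=\; V(\vartheta_0,\vartheta_1,\ldots,\vartheta_{T-1}) - V(\vartheta_0,0,\ldots,0),
\]
obtained by the translation explained after Definition~\ref{def:market model}, which inherits \textbf{A1}, \textbf{A2} and upper-semicontinuity. Its recession $\widetilde V_{\vartheta_0}^\infty$ equals $V^\infty(0,\cdot)$, so any nontrivial adapted recession of $\widetilde V_{\vartheta_0}$ lifts to a recession arbitrage of $V$ (by inserting a zero in the time-$0$ slot), violating \NA. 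The inductive hypothesis on horizon $T-1$ then yields an a.s.\ finite $K_t(\vartheta_0)$ bounding $|\vartheta_t|$, and the $\FF_0$-pasting directedness used at $t=0$ is applied again to pass from these $\vartheta_0$-dependent bounds to an a.s.\ finite, $\FF_t$-measurable $\psi_t$.

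The main obstacle is the induction step: verifying uniformly over the $\FF_0$-measurable parameter $\vartheta_0$ that \NA transfers to $\widetilde V_{\vartheta_0}$, and assembling the per-$\vartheta_0$ bounds into an $\FF_t$-measurable $\psi_t$. A superficially attractive direct alternative would extract $\vartheta^n\in\AA_f$, set $\lambda^n := 1/(1+|\vartheta^n|)$, run the random-subsequence algorithm from the proof of Lemma~\ref{lem:closure}, and use $\lambda^n V(\vartheta^n) \geq \lambda^n f \to 0$ to produce a nonzero adapted $\varphi$ with $V^\infty(\varphi) \geq 0$, contradicting \NA. However, the very boundedness in probability of $\AA_f$ forbids the strong form of unboundedness ($\liminf_n|\vartheta^n|=\infty$ on a positive-probability set) on which that algorithm relies, which is precisely the obstruction that the inductive route circumvents.
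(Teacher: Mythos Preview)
Your handling of the first coordinate is exactly the paper's: the family $\{|\vartheta_0|:\vartheta\in\AA_f\}$ is upward directed by $\FF_0$-pasting and bounded in probability by Lemma~\ref{lem:closure}, so $\psi_0<\infty$ a.s. The divergence is in the inductive step.

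Two remarks on your reduction to $\widetilde V_{\vartheta_0}$. First, the identity $\widetilde V_{\vartheta_0}^\infty = V^\infty(0,\cdot)$ fails for non-concave integrands; what does hold is the inequality $\widetilde V_{\vartheta_0}^\infty \leq V^\infty(0,\cdot)$ (in the outer supremum defining $V^\infty(0,z)$ take $y=\vartheta_0/\delta$), and fortunately that is the direction you need for the NA transfer. Second, and this is the real gap, the assembly you flag as ``the main obstacle'' is genuinely missing: the bound $K_t(\vartheta_0)$ depends on $\vartheta_0$ through both the shifted market and the shifted target $f-V(\vartheta_0,0,\ldots,0)$, and you give no mechanism to control $\esssup_{\vartheta_0}K_t(\vartheta_0)$. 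The $\FF_0$-pasting you invoke does not deliver upward directedness of $\{|\vartheta_t|:\vartheta\in\AA_f\}$ for $t\geq 1$ --- the natural paste on $\{|\vartheta_t^1|\geq|\vartheta_t^2|\}\in\FF_t$ mixes the early components of $\vartheta^1$ with the late components of $\vartheta^2$ and need not remain in $\AA_f$.

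The paper's device is to replace the \emph{family} $\{\widetilde V_{\vartheta_0}\}$ by a \emph{single} reduced normal integrand
\[
  V_1(\omega,x_1,\ldots,x_{T-1})\;=\;\sup_{|y|\leq \psi_0(\omega)}V(\omega,y,x_1,\ldots,x_{T-1}),
\]
the partial supremum over the already-bounded first coordinate. Every tail of a strategy in $\AA_f$ then satisfies $V_1(\vartheta_1,\ldots,\vartheta_{T-1})\geq f$, and this constraint \emph{is} stable under $\FF_1$-pasting, because $V_1$ is defined $\omega$-wise and hence local for the filtration $(\FF_1,\ldots,\FF_{T-1})$. One checks $V_1^\infty\leq V^\infty(0,\cdot)$, so NA transfers, Lemma~\ref{lem:closure} applies to $V_1$, and the argument for $t=0$ repeats verbatim to produce a finite $m_1$. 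Iterating gives $K=m_0+\cdots+m_{T-1}$. Your final paragraph, diagnosing why the direct random-subsequence shortcut fails, is correct.
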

\begin{proof}
  By Lemma~\ref{lem:closure} the set of strategies satisfying the admissibility condition is bounded in probability. Here, we use that observation to prove a stronger bound. We will proceed by induction on the number of time steps. 
  
  To show boundedness of the first step of the strategy we define the following set
  \begin{align*}
    \Psi_0 = \big\{h\in L^0(\FF_0;\R_+)\,\big|\,
    \exists \vartheta\in \AA_f:\,|\vartheta_0|\geq h\big\},
  \end{align*}
  that contains all the norms of restrictions of the superhedging strategies of $f$ to the first component. Note that the set is bounded in probability and also upward-directed. So, there exists a sequence of strategies $(\vartheta^k)\subset\AA$ such that the following holds: $\lim_{k\rightarrow\infty}|\vartheta_0^k| = \esssup\Psi_0$. Since the set of strategies superhedging $f$ is bounded in probability, so is the set $\{|\vartheta_0^k|\,|\,k\in\N\}$, hence the random variable $m_0 = \esssup\Psi_0$ is finite.

  Next, define the reduced market model, defined $\omega$-wise, as follows
  \begin{align*}
     V_1(\omega,x_1,\ldots,x_{T-1}) =  \sup_{\substack{y\in\R^d\\|y|\leq m_1(\omega)}} V(\omega,y,x_1,\ldots,x_{T-1}).
  \end{align*}
  Note that for each $\omega$ maximization is over a compact set, which implies that $V_1(\cdot)$ is upper-semicontinuous almost surely (see Definition 1.16 and Theorem 1.17 in~\cite{rockafellar1998variational}).
$V_1$ is also jointly measurable, hence a market model (see Corollary 14.34 and Proposition 14.47 of~\cite{rockafellar1998variational}).
  
  It is also clear that 
    $V_1(\vartheta_1,\ldots,\vartheta_{T-1})\geq f$ 
  for all strategies $\vartheta\in\AA_f$ that superhedge $f$, by the construction of $V_1$.
  
  We claim that the market model $V_1$ also satisfies the no-arbitrage condition. First we calculate the recession function of the market model
  \begin{align*}
    V_1^\infty(x_1,\ldots,x_{T-1}) 
&= \lim_{\delta\searrow 0}\,\, \sup_{\substack{\lambda\in(0,\delta),\\|x-u|<\delta}}\, \lambda V_1\Big(\frac{u_1}{\lambda},\ldots,\frac{u_{T-1}}{\lambda}\Big)\\
&= 
\lim_{\delta\searrow 0}\,\, \sup_{\substack{\lambda\in(0,\delta),\\|x-u|<\delta}}\,\sup_{\substack{y\in\R^d\\|y|\leq m_1(\omega)}} \lambda V\Big(y,\frac{u_1}{\lambda},\ldots,\frac{u_{T-1}}{\lambda}\Big)\\
&= 
V^\infty(0,x_1,\ldots,x_{T-1})
  \end{align*}
  Hence, if the original market model $V$ satisfies the no-arbitrage condition, then also the reduced market model $V_1$ satisfies the condition.

   Repeating the procedure as for the first step, one gets random variables $m_0$, \ldots, $m_{T-1}$. We can then define the random variable $K$ as $K = m_0 + \cdots + m_{T-1}$.
\end{proof}

\begin{lemma}
  If the market model satisfies the $\NA$ condition, then the random variable $M_f = \esssup \CC_f$ is finite.
\end{lemma}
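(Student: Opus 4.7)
The plan is to bypass any attempt to exhibit a maximizing sequence in $\CC_f$ and instead reduce the problem to a pointwise supremum of a normal integrand over a compact ball of random radius. I first observe that
$$
  M_f \;=\; \esssup_{z\in\AA_f} V(z),
$$
since every $g\in\CC_f$ is dominated by $V(z)$ for some $z\in\AA_f$, and conversely each $V(z)$ with $z\in\AA_f$ belongs to $\CC_f$ (because $V(z)\geq f$). So it suffices to dominate $\esssup_{z\in\AA_f} V(z)$ by an a.s.-finite random variable.

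Applying Theorem~\ref{theorem:inf-compactness}, the $\NA$ assumption produces a finite random variable $K$ with $|z|\leq K$ a.s.\ for every $z\in\AA_f$. I then define the pointwise upper envelope
$$
  W(\omega) \;=\; \sup_{|x|\leq K(\omega)} V(\omega,x).
$$
Since $V$ is a normal integrand valued in $\R\cup\{-\infty\}$, the function $x\mapsto V(\omega,x)$ is upper-semicontinuous and does not take the value $+\infty$. The ball $\{x\in\R^{dT}\,|\,|x|\leq K(\omega)\}$ is compact, so the supremum is attained at some $x^\ast(\omega)$ (or is $-\infty$ if $V(\omega,\cdot)\equiv -\infty$ on the ball); in either case $W(\omega)<\infty$ for every $\omega$. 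Measurability of $W$ follows from standard normal-integrand theory: the closed-ball correspondence $\omega\mapsto\{|x|\leq K(\omega)\}$ is measurable, and the pointwise maximum of a normal integrand over a compact measurable correspondence is $\FF$-measurable (this is collected in the appendix).

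To conclude, for every $z\in\AA_f$ one has $|z(\omega)|\leq K(\omega)$ a.s., hence
$$
  V(z)(\omega) \;=\; V(\omega,z(\omega)) \;\leq\; W(\omega) \quad \textrm{a.s.}
$$
Taking the essential supremum over $z\in\AA_f$ gives $M_f\leq W<\infty$ a.s., which is the desired conclusion.

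The main obstacle circumvented by this route is that $\CC_f$ is not obviously upward directed: splicing two witnessing strategies $z_1,z_2$ along the $\FF$-measurable set $\{V(z_1)\geq V(z_2)\}$ generally destroys adaptedness, so one cannot straightforwardly produce an increasing sequence $g_n\nearrow M_f$ in $\CC_f$. A naive attempt to apply Lemma~\ref{lem:von} to a mere maximizing sequence $(z_n)\subset\AA_f$ would only yield $\limsup V(z^{\tau(n)})\leq V(\hat z)<\infty$ along a random subsequence, which need not dominate $\sup_n V(z_n)$ pointwise. Replacing the sup over adapted strategies by the sup over the whole compact ball of radius $K$ exploits the normal-integrand structure directly and sidesteps the directedness issue entirely.
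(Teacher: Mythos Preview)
Your proof is correct and follows essentially the same approach as the paper: both invoke Theorem~\ref{theorem:inf-compactness} to bound every $\vartheta\in\AA_f$ by a finite random variable $K$, and then use upper-semicontinuity of $V(\omega,\cdot)$ to bound $V$ over the compact ball of radius $K(\omega)$. The only cosmetic difference is that the paper discretizes, defining $M_i=\sup_{|x|\leq i}V(\omega,x)$ for integer radii and taking $\sum_i M_i\indic_{K\in(i-1,i]}$, whereas you take the supremum over the random-radius ball directly; your version is slightly more streamlined but requires the measurability of $\sup_{|x|\leq K(\omega)}V(\omega,x)$, which is indeed standard (Proposition~14.47 in~\cite{rockafellar1998variational}, already cited in the paper).
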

\begin{proof}
  Let $M_i$ be the finite random variables, such that $V(x)\leq M_i$ for each $x\in B_i(0)$. Choosing the random variable $K$ from the previous Theorem, we get the following estimate:
  \begin{align*}
    M_f\leq \sum_{i=1}^\infty M_i\indic_{K\in(i-1,i]}
  \end{align*}
which is finite.
\end{proof}

We now give a basic utility maximization result. The basic statement of existence of a maximizer basically follows from~\cite{evstigneev1976measurable}. We give a fairly general formulation of it here.

A random utility function $U$ on the positive half-line is a mapping 
$U:\Omega\times\R_+\rightarrow\R$ such that 
\begin{enumerate}
 \item for each $\omega\in\Omega$, the mapping $x\rightarrow U(\omega,x)$ is non-decreasing. 
 \item for every random variable $X\in L^0(\FF)$ we have $\omega\mapsto U(\omega,X(\omega))$ measurable with respect to the sigma algebra $\FF$.
\end{enumerate}
We will usually omit the dependence of $U$ on $\omega$. In the language of convex analysis, we may say that the mapping $U$ is an increasing normal integrand. The properties needed are such as to make the compositum $U\circ V$ a normal integrand; see Proposition~14.45 in~\cite{rockafellar1998variational}.
The basic statement in the utility maximization setup is the following

\begin{theorem}
  Let $U$ be a random utility function and $f$ a random variable. Let $M = \esssup M_0$ and assume that  $U(M)\in L^1$. Then an optimizer to the following utility maximization problem exists
  $$
    \sup_{\vartheta\in \AA_0}
       \E[U(V(\vartheta))] = 
       \E[U(V(\vartheta^\ast))].  
  $$
\end{theorem}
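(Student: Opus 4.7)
The plan is to combine the compactness consequence of $\NA$ with a backward-induction / measurable-selection argument in the style of~\cite{evstigneev1976measurable}. First, Theorem~\ref{theorem:inf-compactness} applied to $f = 0$ furnishes a finite random variable $K$ such that $|\vartheta_t| \leq K$ a.s.\ for every $t$ and every $\vartheta \in \AA_0$; the optimization may therefore be restricted to the compact, $\FF_t$-measurable correspondence $\{\xi\in\R^d : |\xi|\leq K(\omega)\}$ at each time step. Moreover $V(\vartheta)\leq M$ a.s.\ on $\AA_0$ by definition of $M$, so the monotonicity of $U$ yields the uniform dominated bound $U(V(\vartheta))\leq U(M)\in L^1$; in particular $s := \sup_{\vartheta\in\AA_0}\E[U(V(\vartheta))] \leq \E[U(M)] < \infty$, and a reverse Fatou lemma is available.

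Since $V$ is a normal integrand by Theorem~\ref{thm:rep1D} and $U$ is an increasing normal integrand, the composition $W(\omega,x) := U(\omega,V(\omega,x))$, set to $-\infty$ on the closed set $\{V<0\}$ to enforce admissibility, is itself a normal integrand in $(\omega,x)$ (Proposition~14.45 in~\cite{rockafellar1998variational}); in particular $x\mapsto W(\omega,x)$ is upper-semicontinuous for almost every $\omega$. The plan is then to perform backward dynamic programming: set $W_T := W$ and, for $t=T-1,\ldots,0$,
$$
  W_t(\omega,x_0,\ldots,x_{t-1}) := \esssup_{\xi\in L^0(\FF_t;\R^d),\,|\xi|\leq K} \E\bigl[W_{t+1}(\cdot,x_0,\ldots,x_{t-1},\xi)\,\big|\,\FF_t\bigr](\omega).
$$
I would argue by backward induction that each $W_t$ is an $\FF_t$-normal integrand in the remaining variables and that the essential supremum at each stage is attained by an $\FF_t$-measurable selector $\vartheta_t^\ast(x_0,\ldots,x_{t-1})$ through the measurable maximum theorem of~\cite{evstigneev1976measurable} (compact feasible set, upper-semicontinuous integrand, integrable dominating bound $U(M)$). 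Concatenating the selectors yields the adapted maximizer $\vartheta^\ast = (\vartheta_0^\ast,\ldots,\vartheta_{T-1}^\ast)\in\AA_0$, and the dynamic programming principle gives $\E[W(\vartheta^\ast)] = \E[W_0] = s$.

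The main technical obstacle is the induction step: one must verify that the combined essential-supremum / conditional-expectation operation preserves the normal-integrand property. This requires upper-semicontinuity of the map $\xi\mapsto \E[W_{t+1}(\cdot,x_0,\ldots,x_{t-1},\xi)\,|\,\FF_t]$ on the compact set $\{|\xi|\leq K\}$, which follows from reverse Fatou with the integrable dominating function $U(M)$ combined with the induction hypothesis on $W_{t+1}$; and joint measurability in $(\omega,x_0,\ldots,x_{t-1})$ of the value $W_t$, which is the measurable maximum theorem for normal integrands over measurable compact-valued correspondences. These are precisely the lemmas supplied by~\cite{evstigneev1976measurable}, which is why the statement of the theorem claims the result ``basically follows'' from that reference. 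Note that a direct argument via the random-subsequence Lemma~\ref{lem:von} alone does not suffice here: extracting $\vartheta^{\tau(n)}\to\vartheta^\ast$ a.s.\ would only give $\E[U(V(\vartheta^\ast))]\geq \limsup_n \E[U(V(\vartheta^{\tau(n)}))]$ by reverse Fatou, but because $\tau(n)$ is $\FF$-measurable and not deterministic the right-hand side need not equal $s$, so the filtration-respecting dynamic programming is essential.
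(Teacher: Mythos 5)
Your proposal is correct and takes essentially the paper's route: the paper verifies the same three hypotheses --- measurability of $U\circ V$, domination by $U(M)\in L^1$, and compactness of the feasible set from Theorem~\ref{theorem:inf-compactness} --- and then cites Theorem~2 of~\cite{evstigneev1976measurable}, whose internal mechanism is exactly the backward dynamic programming with measurable selection that you sketch. (One minor slip: $\{V<0\}$ is open, not closed; the constrained integrand remains upper-semicontinuous because $\{V\geq 0\}$ is closed, by upper-semicontinuity of $V$.)
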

\begin{proof}
  We may assume that the problem is non-trivial, i.e. that there exists a strategy $\vartheta$ such that $\E[U(V(\vartheta))]>-\infty$. Otherwise any strategy is utility maximizer. 

   If the problem satisfies the no-arbitrage condition, then the optimal strategy exists by Theorem~2 of~\cite{evstigneev1976measurable}. To use the notation in the cited paper set $\varphi(x,\omega)=U(\omega,V(\omega,x))$. Then: (i) $\varphi$ is measurable by assumption; (ii) it is dominated by the random variable $U(M)\in L^1$; (iii) $\inf$-compactness follows from Theorem~\ref{theorem:inf-compactness}.
\end{proof}

Note that in the above one does not need to assume that the utility function be concave. Indeed, the proof of the cited Theorem~2 of~\cite{evstigneev1976measurable} goes through via dynamic programming. Only duality approaches require convexity. 

It was recently shown in~\cite{penannen2015non} that the $\inf$-compactness may be removed in the cited result of~\cite{evstigneev1976measurable} and replaced with the recession condition identical to our no-arbitrage condition. The uniform lower bound on the wealth is, therefore, unnecessary. 

We will give the example with fixed transaction costs, which was the main example that motivated this research.

\begin{example}
Let $S$ be the $d$-dimensional frictionless stock price process and $\lambda>0$ the costs the trader needs to pay for every rebalancing of the portfolio. For a strategy $\vartheta\in\AA$, the final wealth of trading is given with
\begin{align*}
  V(\omega,\vartheta)  
   = \sum_{t=1}^T\Big[ 
   \langle \vartheta_t, S_t(\omega) - S_{t-1}(\omega)\rangle
   - \lambda\indic_0(\vartheta_t)
  \Big].
\end{align*}
Note that the indicator function here $\indic_B(x)$ takes value 1 if $x\in B$ and 0 otherwise.

The market model given by $V$ is arbitrage free precisely when the frictionless stock price process $S$ is arbitrage free and there are no redundant assets. Indeed, observe that $V^\infty(\omega,\vartheta) = \sum_{t=1}^T \langle \vartheta_t, S_t(\omega) - S_{t-1}(\omega)\rangle$ is precisely the frictionless part of the market, from which the claim follows.
  But, one can see this also directly by noticing that in the finite discrete time frictionless market, if there exists an arbitrage opportunity, there also exists a buy-and-hold arbitrage opportunity. 

Hence we get that in the fixed transaction costs model, a utility maximizer always exists for a (random) utility function that is bounded above by an integrable random variable.
\end{example}

%  _   _   _   _   _   _   _   _   _   _   _   _  
%   \_/ \_/ \_/ \_/ \_/ \_/ \_/ \_/ \_/ \_/ \_/ \_/ 
% \_/ \_/ \_/ \_/ \_/ \_/ \_/ \_/ \_/ \_/ \_/ \_/ \
%   \_/ \_/ \_/ \_/ \_/ \_/ \_/ \_/ \_/ \_/ \_/ \_/

\section{Market models taking values in $\R^n$}

In this section we will show how to extend the definitions and results from previous sections to the case when the market model $\widehat V$ may take values in a more general space and not only $\R$. For simplicity of the argument and mathematics, we will restrict our attention to the case when
$$
  \widehat V:\AA\rightarrow L^0(\FF;\R^n\cup\{-\infty\}),
$$
where $\AA$ stands, as above, for the space of all adapted strategies. The point $-\infty$ is appended to the space; a strategy taking the value $-\infty$ with non-zero probability is deemed infeasible. 

The axioms of the market model from the Definition~\ref{def:market model} transfer here verbatim. The Axiom~{\textbf{A1}} is a normalization of the market model, stating that $\widehat V(0)=0$ a.s. As for the Axiom~{\textbf{A2}}, it transfers verbatim: for any $t\in\{0,\ldots,T-1\}$, any set $A\in \FF_t$, and strategy $\vartheta \in \AA$ the following holds
  \begin{align*}
    \widehat V(\vartheta_0,\ldots,\vartheta_{T-1}) \indic_A 
  = \widehat V(\vartheta_0,\ldots,\vartheta_{t-1},\vartheta_t\indic_A,\vartheta_{t+1},\ldots,\vartheta_{T-1})\indic_A \quad a.s.
  \end{align*}
That is to say, modifying the strategy outside the set $A\in\FF_t$ does not modify the outcome on the set $A$.

This set of models encompases at least the Kabanov model of foreign exchange markets~\cite{kabanov1999hedging}, and the adaptation of this model for the study of markets with incomplete information by~\cite{bouchard2006no}. The thinking of this latter paper is the main motivation for our approach.

\begin{remark}
In Definition~\ref{def:market model} the value $-\infty$ is the minimal value for the canonical relation on the set $\R$. In the case of $\R^n$, with a relation that is not complete, there could be more then one notion of minimal element. One notion is to proclaim an element, denoted by $-\infty$, a minimal element for the relation and append it to the space. This suffices for our purposes. Another way to think about the the minimal elements is to use the idea of cosmic closure, see Chapter~3 in~\cite{rockafellar1998variational}. To the set $\R^n$ one adds the points at infinity; mapping the set $\R^n$ with the mapping $\Phi:x\mapsto\frac{x}{1+|x|}$ to the open unit ball. Points at infinity are identified with the boundary of the image $\Phi(\R^n)$. One can thus talk about convergence of a sequence $(x_n)\subset\R^n$ to a point at infinity through the transformation of the sequence with the mapping $\Phi$. 
\end{remark}

The space $\R$ is endowed with the canonical complete relation $\geq$, hence it did not require an explicit mention. However, on the space $\R^n$ we need to define explicitly a partial order. This will be defined with a closed convex cone $K$. The order generated by the cone $K$ will be denoted by $\succeq_K$; that is, for two vectors $a,\,b\in\R^n$ we have $a\succeq_K b$ if and only if $a-b\in K$. Equivalence for the relation $\succeq_K$ is denoted by $a\sim_K b$.

Let $K:\Omega\rightrightarrows\R^n$ be a closed, convex cone-valued measurable correspondence. We assume that the cone satisfies $\mathrm{int}\, K(\omega)\not=\varnothing$ for almost all $\omega$. On the space of random vectors $L^0(\FF;\R^n)$ we define a partial relation in an almost sure sense: for vectors $X$ and $Y\in L^0(\FF;\R^n)$ we write $X\succeq_KY$ if for almost all $\omega$ we have $X(\omega)\succeq_{K(\omega)}Y(\omega)$. 

 The positive polar of the cone of $K$ is denoted by $K^\circ$ and defined by 
  $$
    K^\circ(\omega) = \big\{y\in\R^n\,\big|\,\langle x,y\rangle\geq0\,\,\,\,\forall x\in K(\omega)\big\}.
  $$
This is a set-valued map with closed, convex, and conical values. It is also an $\FF$ measurable correspondence by Exercise~14.12~(e) in~\cite{rockafellar1998variational}. The random partial order can now be expressed in the following, equivalent, way: for two random vectors $X,\,Y\in L^0(\FF;\R^n)$ we have $X\succeq_K Y$ if and only if $\langle Z,X\rangle\geq \langle Z,Y\rangle$ a.s. for each $\FF$ measurable selection of the correspondence $K^\circ$. The set of $\FF$ measurable selections of the correspondence $K^\circ$ will be denoted by $L^0(\FF;K^\circ)$. 

The relation is extended to the point $-\infty$ by setting $X\succeq_K -\infty$ for every $X\in L^0(\FF;\R^n)$. Also, we extend the scalar product by setting $\langle Z,-\infty\rangle=-\infty$ for each selection $Z\in L^0(\FF;K^\circ)$.

This, first, observation allows us to infer the order on the space $\R^n$ using a countable family of functions.
\begin{lemma}
  Let $\c=\{Z_k\,|\,k\in\N\}$ be a Castaing representation of the measurable correspondence $K^\circ$. Then for random vectors $X$, $Y\in L^0(\FF;\R^n)$ we have $X\succeq_K Y$ if and only if $\langle Z_k,X\rangle\geq\langle Z_k,Y\rangle$ almost surely for all $k\in\N$.
\end{lemma}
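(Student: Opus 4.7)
The forward implication is immediate from the definition already noted in the excerpt: if $X \succeq_K Y$ then $X-Y$ is an a.s.\ selection of $K$, so $\langle Z, X-Y\rangle \geq 0$ a.s.\ for every measurable selection $Z$ of $K^\circ$ by definition of the polar cone; specializing to $Z = Z_k$ yields the countable family of inequalities.

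The interesting direction is the converse. Suppose $\langle Z_k, X\rangle \geq \langle Z_k, Y\rangle$ a.s.\ for every $k \in \N$. For each $k$ let $N_k$ be the exceptional null set and set $N = \bigcup_k N_k$, which is again null. Fix any $\omega \notin N$. Then $\langle Z_k(\omega), X(\omega) - Y(\omega)\rangle \geq 0$ for every $k \in \N$ simultaneously. By the defining property of a Castaing representation, the countable family $\{Z_k(\omega)\}_{k\in\N}$ is dense in $K^\circ(\omega)$ (outside another null set, which we absorb into $N$). By continuity of the scalar product in its first argument, this density together with the nonnegativity on the dense subset forces $\langle z, X(\omega) - Y(\omega)\rangle \geq 0$ for every $z \in K^\circ(\omega)$.

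To conclude, I appeal to the bipolar theorem for closed convex cones: since $K(\omega)$ is a closed convex cone, $(K^\circ(\omega))^\circ = K(\omega)$. The previous step says exactly that $X(\omega) - Y(\omega)$ lies in the bipolar $(K^\circ(\omega))^\circ$, hence in $K(\omega)$. This holds for all $\omega \notin N$, so $X - Y \in K$ a.s., i.e.\ $X \succeq_K Y$.

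The only mildly delicate point is bookkeeping of null sets: the hypothesis gives one null set per $k$, and a Castaing representation is dense on a full-measure set; the countable union of these is still null, which is exactly what allows the pointwise application of the bipolar theorem. No measurable selection argument is needed here since the conclusion is a pointwise one.
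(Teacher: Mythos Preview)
Your proof is correct and follows essentially the same approach as the paper: both invoke the bipolar theorem $(K^\circ(\omega))^\circ = K(\omega)$ pointwise, using that the Castaing representation is dense in $K^\circ(\omega)$ so that nonnegativity against the $Z_k(\omega)$ propagates to all of $K^\circ(\omega)$. The paper's version is terser --- it simply writes the chain $K(\omega) = \{x\mid \langle x,y\rangle\geq 0\ \forall y\in K^\circ(\omega)\} = \{x\mid \langle x,Z_k(\omega)\rangle\geq 0\ \forall k\}$ and cites the bipolar theorem --- while you spell out the null-set bookkeeping and the continuity step explicitly, but there is no substantive difference.
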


\begin{proof}
 This follows directly form the bipolar theorem, see Corollary~6.21 in~\cite{rockafellar1998variational}. So, for almost all $\omega$ we have
$$
  K(\omega) 
   = \big\{x\in\R^n\,\big|\,\langle x,y\rangle\,\,\forall y\in K^\circ(\omega)\big\}
   = \big\{x\in\R^n\,\big|\,\langle x,Z_k(\omega)\rangle\,\,\forall k\in\N\big\}.
$$
\end{proof}

By the same argument as in the lemma above, one can show that for random vectors $X$, $Y\in L^0(\FF;\R^n)$ we have $X\succeq_K Y$ if and only if $\langle Z,X\rangle\geq\langle Z,Y\rangle$ almost surely every measurable selection $Z$ of the correspondence $\textrm{ri}\,K^\circ$.

The definition of upper-semicontinuity for maps with values in $\R^n$ needs a new idea. As the relation induced by the random cone is not complete, the concept of supremum is not straight forward. The ideas about how to define a supremum in this setup are a topic of~\cite{kabanov2013essential}. The construction they suggest for the supremum with respect to cone induced relation yields a set, i.e. a measurable correspondence in the random setting. It is not clear whether one can obtain single valued definition of limes superior using their ideas.

In order to be able to work with the concept of upper-semicontinuity in a simple way, we will define it through scalarizations of the market model.

\begin{definition}\label{def:usc}
  A market model $\widehat V:\AA\rightarrow L^0(\FF;\R^n\cup\{-\infty\})$ is upper-semiconti-nuous if for every $\FF$ measurable selection $Z$ of the relative interior $\textnormal{ri}\,K^\circ$ the mapping 
\begin{align*}
  \widehat V_Z:\AA&\rightarrow L^0(\FF;\R\cup\{-\infty\})\\
  \vartheta&\mapsto\big\langle Z,\widehat V(\vartheta)\big\rangle
\end{align*}
is upper-semicontinuous.
\end{definition}

An immediate consequence of the definition of upper-semicontinuity is that for each selection $Z\in L^0(\FF;\textrm{ri}\,K^\circ)$ there is a representation $V_Z:\Omega\times\R^{dT}\rightarrow\R\cup\{-\infty\}$ of the market model $\widehat V_Z$. To obtain the representation theorem, we would now like to use the scalarized market models $V_Z$  to reconstruct the market model $V:\Omega\times\R^{dT}\rightarrow\R^n\cup\{-\infty\}$ that represents the market model in an appropriate way. Let us first show that if this would be possible, then $V$ would have the required measurability properties.

\begin{lemma}
  Let $\widehat V$ be an upper-semicontinuous market model with representations $V_Z$ for each $Z\in L^0(\FF;\textnormal{ri}\,K^\circ)$. If there exists a mapping $V:\Omega\times\R^{dT}\rightarrow\R^n\cup\{-\infty\}$, such that we have $V_Z = \langle Z,V\rangle$ for each $Z\in L^0(\FF;\textnormal{ri}\,K^\circ)$, then its hypograph correspondence 
$$
  \hypo V(\omega) = \big\{(x,y)\in\R^{dT}\times\R^n\,\big|\, V(\omega,x)\succeq_{K(\omega)}y\big\}
$$ 
is closed valued and $\FF$ measurable.
\end{lemma}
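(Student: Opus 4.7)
My plan is to write $\hypo V(\omega)$ as a countable intersection of closed-valued measurable hypograph-type sets coming from the scalarizations $V_{Z_k}$, where $\{Z_k\}_{k\in\N}$ is a Castaing representation of the correspondence $\omega\mapsto\ri K^\circ(\omega)$. Such a representation exists because $\ri K^\circ$ is a measurable correspondence (the relative interior of a convex-valued measurable correspondence is measurable, as recorded in the appendix) and has a.s.\ non-empty values, which follows from the standing assumption $\mathrm{int}\,K(\omega)\neq\varnothing$. Using the variant of the bipolar-type characterization noted immediately before Definition~\ref{def:usc} — namely that $a\in K(\omega)$ iff $\langle z,a\rangle\geq 0$ for every $z\in\ri K^\circ(\omega)$, and hence iff $\langle Z_k(\omega),a\rangle\geq 0$ for every $k$ — one obtains for almost every $\omega$
$$
  \hypo V(\omega) \;=\; \bigcap_{k\in\N} H_k(\omega),
  \qquad
  H_k(\omega) := \big\{(x,y)\in\R^{dT}\times\R^n \,\big|\, V_{Z_k}(\omega,x) \geq \langle Z_k(\omega),y\rangle\big\}.
$$

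Fiberwise closedness of $\hypo V(\omega)$ is then immediate from closedness of each $H_k(\omega)$: by Theorem~\ref{thm:rep1D} the scalar representation $V_{Z_k}(\omega,\cdot)$ is upper-semicontinuous on $\R^{dT}$, and $(x,y)\mapsto\langle Z_k(\omega),y\rangle$ is continuous on $\R^{dT}\times\R^n$, so $(x,y)\mapsto V_{Z_k}(\omega,x)-\langle Z_k(\omega),y\rangle$ is upper-semicontinuous and $H_k(\omega)$ is its closed upper level set at zero. A countable intersection of closed sets is closed.

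For the measurability of $\omega\mapsto\hypo V(\omega)$ it suffices, by the countable-intersection stability of closed-valued measurable correspondences used already in the proof of Lemma~\ref{lem:normal} (Proposition~14.11 in Rockafellar--Wets), to check that each $H_k$ is a measurable closed-valued correspondence. I would do this by viewing $F_k(\omega,x,y) := -V_{Z_k}(\omega,x) + \langle Z_k(\omega),y\rangle$ as a lower-semicontinuous normal integrand on $\Omega\times\R^{dT}\times\R^n$, written as the sum of the lower-semicontinuous normal integrand $(\omega,x)\mapsto -V_{Z_k}(\omega,x)$ and the Carathéodory term $(\omega,x,y)\mapsto\langle Z_k(\omega),y\rangle$, and then observing that $H_k=\{F_k\leq 0\}$, whose closedness and measurability as a correspondence is a standard level-set property of normal integrands. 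The main subtlety to watch — rather than any computational obstacle — is making sure the Castaing representation is drawn from $\ri K^\circ$ and not merely from $K^\circ$, both because this is the class of selections for which the hypothesis supplies a scalar representation $V_Z$, and because it is the class through which the $\succeq_K$ order has been characterized.
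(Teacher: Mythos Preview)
Your proof is correct and follows essentially the same route as the paper: choose a Castaing representation $\{Z_k\}\subset L^0(\FF;\ri K^\circ)$ of $K^\circ$, write $\hypo V(\omega)=\bigcap_k H_k(\omega)$ with $H_k(\omega)=\{(x,y)\,|\,V_{Z_k}(\omega,x)\geq\langle Z_k(\omega),y\rangle\}$, and conclude closedness and measurability from the countable intersection. The only cosmetic difference is how you justify measurability of each $H_k$: the paper rewrites $H_k$ as the preimage of $\hypo V_{Z_k}$ under the Carath\'eodory map $(x,y)\mapsto(x,\langle Z_k(\omega),y\rangle)$ and cites Example~14.15(b) in Rockafellar--Wets, whereas you argue via the level set of the normal integrand $-V_{Z_k}+\langle Z_k,\,\cdot\,\rangle$; both are standard and equivalent.
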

\begin{proof}
Let $\c=\{Z_k\,|\,k\in\N\}$ be Castaing representation for $K^\circ$ such that $\c\subset L^0(\textrm{ri}\,K^\circ)$.
Let us expand the definition of the hypograph of the mapping $V$
\begin{align*}
  \hypo V(\omega)
  &=\{(x,y)\in\R^{dT}\times\R^n\,|\, V(\omega,x)\succeq_{K(\omega)}y\}\\
  &=\{(x,y)\in\R^{dT}\times\R^n\,|\, \forall Z\in\c,\,\,V_Z(\omega,x)\geq\langle Z(\omega),y\rangle\,\}\\
  &=\bigcap_{Z\in\c}\left\{(x,y)\in\R^{dT}\times\R^n\,|\, V_Z(\omega,x)\geq\langle Z(\omega),y\rangle\,\right\}\\
  &=\bigcap_{Z\in\c}\left\{(x,y)\in\R^{dT}\times\R^n\,|\, (x,\langle Z(\omega),y\rangle)\in\hypo V_Z(\omega)\,\right\}.
\end{align*}
The last expression is an $\FF$ measurable by Example~14.15~(b) in~\cite{rockafellar1998variational}. It remains to notice that it is also closed as an intersection of closed correspondences. 
\end{proof}

The existence of such a mapping $V$ in the above Lemma is not an easy question. The reason lies in the following observation: Let $Z_1$ and $Z_2$ be two measurable selections of $L^0(\textrm{ri}\, K^\circ)$. Then the natural representation for the scalarization $\widehat V_{Z_1+Z_2}$ is the mapping $V_{Z_1}+V_{Z_2}$. To show existence of a representation $V$ for our market model $\widehat V$ one needs to show that our particular construction of the market model representation yields
$$
  V_{Z_1}(\omega,\cdot)+V_{Z_2}(\omega,\cdot)=V_{Z_1+Z_2}(\omega,\cdot)\qquad \textrm{for each }\,\omega.
$$
A careful inspection of the proofs will show that our construction of the representation in the $n=1$ case yields only $V_{Z_1}(\omega,\cdot)+V_{Z_2}(\omega,\cdot)\geq V_{Z_1+Z_2}(\omega,\cdot)$ as functions for each $\omega$. For that reason, it is, in general, not possible to find $V:\Omega\times\R^{dT}\rightarrow\R^n\cup\{-\infty\}$ such that $\langle Z,V\rangle=V_Z$ for all $Z\in L^0(\ri K^\circ)$. However, we do get a representation if we relax the requirement of upper-semicontinuity of a mapping $V$.

Let us now construct a candidate for the mapping $V$ to represent the market model.
One can easily check that the set $L^0(\FF;\textrm{ri}\,K^\circ)$ of $\FF$ measurable selections of the measurable correspondence $\textrm{ri}\,K^\circ$ satisfies the conditions of Theorem~2.8 in~\cite{cheridito2012conditional}. Therefore, there exist pairwise disjoint sets $A_1,\ldots,A_n\in\FF$ and $Z_1,\ldots,Z_n\in L^0(\FF;\textrm{ri}\,K^\circ)$ such that 
\begin{enumerate}
  \item $\bigcup_{i=1}^nA_i=\Omega$ and the linear span of $K^\circ(\omega)$ is $i$ dimensional in $\omega\in A_i$;
  \item for all $i\in\{1,\ldots,n\}$ with $\P[A_i]> 0$, $Z_1(\omega),\ldots,Z_i(\omega)$ are linearly independent on $\omega\in A_i$.
\end{enumerate}
We now construct the mapping $V$. For all $i\in\{1,\ldots,n\}$ with $\P[A_i]>0$ we write $V(\omega,x)=\sum_{k=1}^i\alpha_k(\omega,x)Z_k$ for some mappings $\alpha_k:\Omega\times\R^{dT}\rightarrow\R\cup\{-\infty\}$ that are determined with
$$
  \left[\begin{array}{ccc}
  \langle Z_1(\omega),Z_1(\omega)\rangle&\cdots&\langle Z_1(\omega),Z_i(\omega)\rangle\\
     \vdots&\ddots&\vdots\\
  \langle Z_i(\omega),Z_1(\omega)\rangle&\cdots&\langle Z_i(\omega),Z_i(\omega)\rangle\end{array}\right]
  \left[\begin{array}{c}\alpha_1(\omega,x)\\\vdots\\\alpha_i(\omega,x)\end{array}\right]=
  \left[\begin{array}{c}V_1(\omega,x)\\\vdots\\V_i(\omega,x)\end{array}\right].
$$ 
  The matrix is invertible for all $\omega\in A_i$ by linear independence of $(Z_k)$, hence $\alpha_k$ are uniquely defined, and thus also $V$. Note also, that $V$ is $\FF\otimes\BB(\R^{dT})$ measurable, since $V_i$ are and the matrix above has $\FF$ measurable elements.

  We can now easily see that the mapping $V$ represents the market model in the following sense: for each strategy $\vartheta\in\AA$ the identity $\widehat V(\vartheta)(\omega)\sim_{K(\omega)} V(\omega,\vartheta(\omega))$ holds almost surely. To see that, choose an $i\in\{1,\ldots,n\}$ with $\P[A_i]>0$. Every selection $Z\in L^0(\FF;\textrm{ri}\,K^\circ)$ can be written on a set $A_i$ as a linear combination of the basis vectors $Z_1,\ldots,Z_i$ used above to construct $V$: $Z = \beta_1Z_1 + \cdots + \beta_iZ_i$ for some weights $\beta_k\in L^0(\FF;\R)$. Therefore also
\begin{align*}
  \big\langle Z(\omega),V(\omega,\vartheta(\omega))\big\rangle
  &= \sum_{k=1}^i\beta_k(\omega)V_k(\omega,\vartheta(\omega))\\
  &= \sum_{k=1}^i\beta_k(\omega)\widehat V_k(\omega,\vartheta(\omega))
  = \widehat V_Z(\vartheta)(\omega) \qquad a.s.
\end{align*}
This implies that $V(\omega,\vartheta(\omega))\sim_{K(\omega)}\widehat V(\vartheta)(\omega)$ on the set $\omega\in A_i$. Since the same procedure holds for each $i$, it also holds almost surely.

We have shown the following.

\begin{theorem}\label{thm:Rn representation thm}
  Let $\widehat V:\AA\rightarrow L^0(\R^n\cup\{-\infty\})$ be an upper-semicontinuous market model. Then there exists a mapping $V:\Omega\times\R^{dT}\rightarrow\R^n\cup\{-\infty\}$, measurable with respect to $\FF\otimes\BB(\R^{dT})$, such that for each $\vartheta\in\AA$ we have
  $$
    \widehat V(\vartheta)(\omega)\sim_{K(\omega)} V(\omega,\vartheta(\omega))\qquad a.s.
  $$
\end{theorem}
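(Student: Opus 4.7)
The plan is to reduce the $\R^n$-valued statement to the scalar representation Theorem~\ref{thm:rep1D} via scalarization. For each selection $Z\in L^0(\FF;\ri K^\circ)$, the map $\widehat V_Z(\vartheta)=\langle Z,\widehat V(\vartheta)\rangle$ is a scalar, upper-semicontinuous market model by Definition~\ref{def:usc}, so Theorem~\ref{thm:rep1D} produces a normal integrand $V_Z:\Omega\times\R^{dT}\to\R\cup\{-\infty\}$ with $V_Z(\omega,\vartheta(\omega))=\widehat V_Z(\vartheta)(\omega)$ a.s. The naive idea would be to pick one $Z$ and try to invert $\langle Z,\cdot\rangle$, but this loses information; the correct idea is to use a finite family of scalarizations that is rich enough to recover an $\R^n$-valued $V$ pointwise.

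First I would invoke the decomposition result (Theorem~2.8 in~\cite{cheridito2012conditional}) applied to the stable set $L^0(\FF;\ri K^\circ)$ to obtain a measurable partition $A_1,\ldots,A_n\in\FF$ of $\Omega$ and selections $Z_1,\ldots,Z_n\in L^0(\FF;\ri K^\circ)$ such that on each $A_i$ of positive probability, $\mathrm{span}\,K^\circ(\omega)$ is $i$-dimensional and $Z_1(\omega),\ldots,Z_i(\omega)$ form a basis. Applying the $n=1$ representation theorem to each $\widehat V_{Z_k}$ yields scalar normal integrands $V_k:=V_{Z_k}$. On each $A_i$ I would then define $V(\omega,x)=\sum_{k=1}^i\alpha_k(\omega,x)\,Z_k(\omega)$, where the coefficients $\alpha_k$ solve the linear system
\begin{equation*}
  \bigl[\langle Z_j(\omega),Z_k(\omega)\rangle\bigr]_{j,k\le i}\,\bigl[\alpha_k(\omega,x)\bigr]_{k\le i}=\bigl[V_k(\omega,x)\bigr]_{k\le i}.
\end{equation*}
The Gram matrix is invertible on $A_i$ by linear independence of the $Z_k$, so $V$ is well defined, and joint $\FF\otimes\BB(\R^{dT})$-measurability follows since the $V_k$ are normal integrands, the $Z_k$ are $\FF$-measurable, and matrix inversion is a measurable operation on non-singular matrices. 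Points where some $V_k(\omega,x)=-\infty$ are handled by setting $V(\omega,x)=-\infty$ there.

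It remains to verify the equivalence $\widehat V(\vartheta)(\omega)\sim_{K(\omega)}V(\omega,\vartheta(\omega))$. Given an arbitrary $Z\in L^0(\FF;\ri K^\circ)$, on each $A_i$ I write $Z(\omega)=\sum_{k=1}^i\beta_k(\omega)Z_k(\omega)$ for measurable scalars $\beta_k$. Using the definition of $V$ together with linearity of the scalar product, I get $\langle Z(\omega),V(\omega,\vartheta(\omega))\rangle=\sum_k\beta_k(\omega)V_k(\omega,\vartheta(\omega))=\widehat V_Z(\vartheta)(\omega)=\langle Z(\omega),\widehat V(\vartheta)(\omega)\rangle$ almost surely. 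Since a Castaing representation of $\ri K^\circ$ gives a countable family of selections separating $\R^n$ modulo $\sim_{K(\omega)}$ (by the bipolar theorem together with the lemma just before Definition~\ref{def:usc}), this pointwise agreement across all selections delivers the claimed equivalence.

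The main obstacle is the non-linearity of the assignment $Z\mapsto V_Z$ coming from the construction in Section~3: one only has $V_{Z_1+Z_2}\le V_{Z_1}+V_{Z_2}$ in general, which precludes defining $V$ by an arbitrary interpolation of scalarizations. Committing to a measurable basis $(Z_k)$ on the cells $A_i$ and imposing the scalarization identity only for that basis circumvents this; the price paid is that $V$ is an $\sim_K$-representative rather than the unique object $\widehat V(\vartheta)$, and upper-semicontinuity of $V$ in $x$ for fixed $\omega$ may fail, which is exactly the limitation flagged in the discussion preceding the statement.
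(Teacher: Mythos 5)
Your proposal is correct and follows essentially the same route as the paper: scalarize via Theorem~\ref{thm:rep1D}, use the decomposition of~\cite{cheridito2012conditional} to fix a measurable basis $Z_1,\ldots,Z_i$ of $\mathrm{span}\,K^\circ$ on each cell $A_i$, recover $V$ from the Gram system, and verify the $\sim_K$-equivalence by expanding an arbitrary selection $Z$ in that basis. Your closing remarks on the failure of additivity $V_{Z_1+Z_2}\le V_{Z_1}+V_{Z_2}$ and the resulting loss of upper-semicontinuity of $V$ match exactly the caveats the paper records around the statement.
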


To obtain a desired representation theorem, one would need to show that the construction of $V$ is independent of the choice of basis $\{Z_k\,|\,k=1,\ldots,n\}$ used to construct it. This, however, requires some additional properties of the market model $\widehat V$, like concavity, or of the representations $V_Z$ like that it is a Carath\'eodory integrand.

We now turn to the definition of the no-arbitrage condition. Above we have defined the no-arbitrage condition in terms of the recession cone. This one is not available here, hence we define the no-arbitrage condition in terms of scalarizations.
\begin{definition}
  We say that the market model $V$ satisfies the no-arbitrage condition if
$$
  \big\{\vartheta\in\AA\,\big|\,V_Z^\infty(\vartheta)\geq0\,\,a.s.\,\,\forall Z\in L^0(\FF;\textnormal{ri}\,K^\circ)\big\} = \{0\}.
$$
\end{definition}

\begin{remark}\label{ex:recession}
An obvious question at this point would be how to check that a market model satisfies the no-arbitrage condition. Here we give two simple situations in which the property is direct.
\begin{enumerate}
  \item
  Let $V$ be an upper-semicontinuous market model that is positively homogeneous. This means that $V(\lambda\vartheta)=\lambda V(\vartheta)$ for each $\lambda\geq0$. Then the definition of no-arbitrage reduces to the classical efficient no-arbitrage condition (cf.~\cite{kabanov2002no}): A market model $V$ satisfies the no-arbitrage condition if
$$
  V(\vartheta)\succeq_K 0\quad\Longrightarrow\quad \vartheta=0.
$$
Indeed, for every selection $Z\in L^0(\FF;\textnormal{ri}\,K^\circ)$, the function $V_Z$ is positively homogeneous and upper-semicontinuous by assumptions. Therefore also $V_Z^\infty=V_Z$.
\item
  Let $V$ be an upper-semicontinuous model for which there exists a random vector $\zeta\in L^0(\FF;\R^n)$ and a matrix $L\in L^0(\FF;\R^{n\times dT})$, such that
   \begin{enumerate}
     \item $\zeta+ L\vartheta\succeq_K V(\vartheta)$ a.s. for all $\vartheta\in\AA$; and
     \item the market model $\vartheta\mapsto L\vartheta$ satisfies the no-arbitrage condition.
   \end{enumerate}
  Then also $V$ satisfies the no-arbitrage condition. Indeed, for every $Z\in L^0(\FF;\textnormal{ri}\,K^\circ)$ we have
  $$
    V_Z^\infty(\vartheta)=\langle Z,V\rangle^\infty(\vartheta)\leq \langle Z,\zeta+ L\,\cdot\,\rangle^\infty(\vartheta) = \langle Z,L\vartheta\rangle,
  $$
  where by $\langle Z,\zeta+ L\,\cdot\,\rangle$ we denoted the function $x\mapsto\langle Z,\zeta+ Lx\rangle$.
  A variant of this condition apeared in~\cite{bouchard2011no} as \emph{no marginal arbitrage (of the second kind) for high production regimes}.
\end{enumerate}
\end{remark}

We now turn to explain how the results from the one-dimensional case transfer to this setup. For a random vector $f\in L^0(\FF;\R^n)$ define a set of strategies that dominate it
$$
  \AA_f = \big\{\vartheta\in\AA\,\big|\,V(\vartheta)\succeq_K f\,\,a.s.\big\}.
$$

\begin{lemma}
  Let $V$ be a market model satisfying the no-arbitrage condition. Then the set $\AA_f$ is bounded in probability.
\end{lemma}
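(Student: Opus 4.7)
The plan is to mimic the scalar-valued argument of Lemma~\ref{lem:closure}, reducing the cone-induced dominance $V(\vartheta)\succeq_K f$ to a countable family of scalar dominances $V_Z(\vartheta)\geq \langle Z,f\rangle$ via a Castaing representation $\{Z_k\}\subset L^0(\FF;\textnormal{ri}\,K^\circ)$ of the polar cone $K^\circ$. Argue by contradiction: suppose $(\vartheta^k)\subset\AA_f$ is unbounded in probability. The goal is to extract, via a random-subsequence procedure, a nonzero adapted limit strategy $\varphi$ along which every scalarization of the recession model is nonnegative, contradicting the no-arbitrage definition.

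First I would run the random-subsequence and rescaling construction exactly as in the proof of Lemma~\ref{lem:closure}: proceeding component-by-component in time, on the set where $\liminf_k|\hat\vartheta_t^k|=\infty$ set $\lambda_t(k)=1/(1+|\hat\vartheta_t^k|)$ (and $\lambda_t(k)=1$ otherwise), pass to an $\FF_t$-measurable random subsequence so that $\lambda_t(\tau_t(k))\hat\vartheta_t^{\tau_t(k)}$ converges a.s. Compositing gives a random subsequence $\tau(k)$ and a total scaling $\lambda(k)$ such that $\lambda(k)\vartheta^{\tau(k)}\to\varphi$ a.s. with $\varphi\in\AA$ nonzero. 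This construction is purely about the underlying strategies, so it proceeds without any reference to $K$ or to scalarizations.

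Second, for each fixed $Z\in L^0(\FF;\textnormal{ri}\,K^\circ)$ the scalarized market model $\widehat V_Z$ is upper-semicontinuous by Definition~\ref{def:usc}, so Theorem~\ref{thm:rep1D} yields a normal-integrand representation $V_Z$. The domination $V(\vartheta^k)\succeq_K f$ gives $V_Z(\vartheta^k)\geq \langle Z,f\rangle$ a.s. Now repeat the $\omega$-wise recession estimate from Lemma~\ref{lem:closure}:
\begin{align*}
0 &= \lim_{k\rightarrow\infty} \lambda(k)\langle Z,f\rangle \\
  &\leq \lim_{k\rightarrow\infty} \lambda(k)\, V_Z\!\left(\tfrac{\hat\vartheta^{\tau(k)}}{\lambda(k)}\right) \\
  &\leq \lim_{k\rightarrow\infty}\sup_{\substack{\mu\in(0,1/\lambda(k))\\|x-\varphi|\leq\lambda(k)}} \mu\, V_Z\!\left(\tfrac{x}{\mu}\right) \\
  &= V_Z^\infty(\varphi).
\end{align*}
The first equality uses that $\lambda(k)\to 0$ on the set where $\varphi$ was produced by the rescaling, while on the complementary set $\lambda(k)\equiv 1$ forces the sequence to already converge with $\varphi$ finite, so the scalar inequality $V_Z(\hat\vartheta^{\tau(k)})\geq \langle Z,f\rangle$ passes to the limit directly by upper-semicontinuity of $V_Z(\omega,\cdot)$.

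Since the construction of $\varphi$ is independent of $Z$, the inequality $V_Z^\infty(\varphi)\geq 0$ a.s.\ holds for every $Z\in L^0(\FF;\textnormal{ri}\,K^\circ)$ simultaneously. By the no-arbitrage condition this forces $\varphi=0$, contradicting the nonzero limit produced by the rescaling procedure. The main obstacle is the bookkeeping splitting $\Omega$ into the part where the sequence stays bounded (so $\lambda(k)=1$ and the bound transfers directly) versus the part where it blows up (so $\lambda(k)\to 0$ kills $\lambda(k)\langle Z,f\rangle$ for the finite random variable $\langle Z,f\rangle$); but this is the same bookkeeping already implicit in Lemma~\ref{lem:closure} and causes no new difficulty here.
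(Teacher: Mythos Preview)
Your proposal is correct and follows essentially the same approach as the paper: argue by contradiction, run the random-subsequence and rescaling construction of Lemma~\ref{lem:closure} on the strategies alone (independent of any scalarization), and then observe that the resulting nonzero limit $\varphi$ satisfies $V_Z^\infty(\varphi)\geq 0$ for every $Z\in L^0(\FF;\textnormal{ri}\,K^\circ)$, contradicting the no-arbitrage definition. The paper's proof is terser but makes the identical point that the extraction of $\varphi$ is done once and works simultaneously for all $Z$; your mention of a Castaing representation is harmless but unnecessary, since the argument applies directly to every selection $Z$.
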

\begin{proof}
The proof is almost identical to the one dimensional case, Lemma~\ref{lem:closure}. Let $(\vartheta_n)$ be an unbounded sequence in $\AA_f$. The crux of the proof of Lemma~\ref{lem:closure} is in extracting a random subsequence $\psi_n$, not necessarily adapted, of the sequence $\vartheta_n$ such that $\frac{\psi_n}{1+|\psi_n|}\rightarrow\psi$ almost surely and also such that $|\psi|\in\{0,1\}$. Denote $A=\{\omega\,|\,\lim_n|\psi_n|=\infty\}$ and notice that $\psi\indic_A = \psi$ by the assumption on the values of $|\psi|$. Then for every selection $Z\in L^0(\FF;\textrm{ri}\,K^\circ)$ we have 
\begin{align*}
  V_Z^\infty(\psi)
  &\geq \limsup_{n\rightarrow\infty}\left\langle Z,\frac{1}{1+|\psi_n|}V(\psi_n)\indic_A\right\rangle
  \\&\geq \limsup_{n\rightarrow\infty}\left\langle Z,\frac{1}{1+|\psi_n|}f\indic_A\right\rangle=0,
\end{align*}
i.e. $\psi$ is an arbitrage strategy.
\end{proof}

\begin{theorem}\label{thm:superhedging bound}
  Let $V$ be a market model satisfying the no-arbitrage condition. For every random vector $f\in L^0(\FF;\R^n)$ there exists a random variable $K_f$ such that $\vartheta\in\AA_f$ implies $|\vartheta|\leq K_f$ almost surely.
\end{theorem}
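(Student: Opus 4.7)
The plan is to mimic the proof of Theorem~\ref{theorem:inf-compactness} by induction on the number of time steps, at each stage bounding the first remaining component of the strategy by an essential supremum and then reducing to a market model with one fewer time step. For the base step, set
$$
  \Psi_0 = \big\{h \in L^0(\FF_0;\R_+)\,\big|\,\exists\,\vartheta \in \AA_f:\,|\vartheta_0| \geq h\big\}.
$$
The preceding lemma gives boundedness in probability of $\AA_f$, hence of $\Psi_0$. Upward directedness follows from the locality axiom \textbf{A2}: given $h_1, h_2 \in \Psi_0$ realized by $\vartheta^1, \vartheta^2 \in \AA_f$, with $A = \{h_1 \geq h_2\} \in \FF_0$ the spliced strategy $\vartheta = \vartheta^1 \indic_A + \vartheta^2 \indic_{A^c}$ is adapted, lies in $\AA_f$ (since \textbf{A2} gives $V(\vartheta)\indic_A = V(\vartheta^1)\indic_A \succeq_K f\indic_A$ and similarly on $A^c$), and satisfies $|\vartheta_0| \geq h_1 \vee h_2$. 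Hence $m_0 := \esssup \Psi_0 \in L^0(\FF_0;\R_+)$ is finite a.s., and every $\vartheta \in \AA_f$ satisfies $|\vartheta_0| \leq m_0$ a.s.

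The main obstacle is constructing a reduced market model, because the partial order $\succeq_K$ is not directed and a vector-valued supremum over $\{|y| \leq m_0\}$ generally does not exist in $\R^n$. To circumvent this I would work with the family of scalarizations directly: for each $Z \in L^0(\FF;\ri K^\circ)$ set
$$
  V_{1,Z}(\omega,x_1,\ldots,x_{T-1}) = \sup_{|y|\leq m_0(\omega)} V_Z(\omega,y,x_1,\ldots,x_{T-1}).
$$
Each $V_{1,Z}$ is upper-semicontinuous and jointly measurable (Corollary~14.34 and Proposition~14.47 in~\cite{rockafellar1998variational}), i.e.\ a scalar market model on the shifted filtration $(\FF_t)_{t\geq 1}$. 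The recession calculation carried out in the proof of Theorem~\ref{theorem:inf-compactness} extends verbatim, yielding $V_{1,Z}^\infty(x_1,\ldots,x_{T-1}) = V_Z^\infty(0,x_1,\ldots,x_{T-1})$. Hence if $V_{1,Z}^\infty(\vartheta_1,\ldots,\vartheta_{T-1}) \geq 0$ a.s.\ for every $Z \in L^0(\FF;\ri K^\circ)$, the NA condition of $V$ forces $(0,\vartheta_1,\ldots,\vartheta_{T-1}) = 0$. Thus the reduced scalarization family still satisfies the joint NA condition.

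The proof of the preceding lemma proceeds purely through scalarizations (extract a random subsequence with $\psi_n/(1+|\psi_n|) \to \psi$ and check $V_Z^\infty(\psi) \geq 0$ for every $Z$), so the same argument applies verbatim to the reduced family: the set of adapted $(\vartheta_1,\ldots,\vartheta_{T-1})$ satisfying $V_{1,Z}(\vartheta_1,\ldots,\vartheta_{T-1}) \geq \langle Z,f\rangle$ a.s.\ for every $Z$ is bounded in probability. Repeating the $\Psi$-argument on this reduced model yields $m_1 \in L^0(\FF_1;\R_+)$ finite a.s.\ with $|\vartheta_1| \leq m_1$ for every $\vartheta \in \AA_f$. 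Iterating $T$ times produces $m_0, \ldots, m_{T-1}$, and $K_f := m_0 + \cdots + m_{T-1}$ is the required bound.
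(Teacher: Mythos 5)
Your proof is correct, and it shares the overall skeleton of the paper's argument (induction over the time steps, essential suprema $m_t$ of the upward-directed families $\Psi_t$, and $K_f=m_0+\cdots+m_{T-1}$), but the mechanism of the reduction step is genuinely different. You eliminate the already-bounded coordinate by transplanting the scalar construction of Theorem~\ref{theorem:inf-compactness} to each scalarization separately, setting $V_{1,Z}(\omega,x_1,\ldots,x_{T-1})=\sup_{|y|\leq m_0(\omega)}V_Z(\omega,y,x_1,\ldots,x_{T-1})$ and verifying $V_{1,Z}^\infty=V_Z^\infty(0,\cdot)$; the price is that the reduced object is only a family $\{V_{1,Z}\}_{Z}$ of scalar models with no underlying vector-valued $V_1$ (the maximizing $y$ depends on $Z$), but since the NA condition and the boundedness lemma are formulated purely through scalarizations this costs nothing — the paper itself makes exactly this point in the remark following the theorem. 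The paper instead keeps all $T$ components of the strategy, encodes the bound $|\vartheta_0|\leq m_0$ as a hard constraint via the convex-analytic indicator, $V^1_Z(\vartheta)=V^0_Z(\vartheta)-\indic_{|\vartheta_0|\leq m_0}$, and enlarges the filtration to $(\FF_1,\FF_1,\FF_2,\ldots,\FF_{T-1})$ so that the splicing argument for the directedness of $\Psi_1$ over $\FF_1$-measurable sets does not break adaptedness of the $\vartheta_0$-component; NA is inherited because the indicator forces $\vartheta_0=0$ in the recession, whence an arbitrage of the modified model is $\F_0$-adapted and hence an arbitrage of $V$. Your route is arguably cleaner in that it avoids the filtration enlargement entirely and reuses the recession computation already carried out in the one-dimensional proof; the paper's route keeps the strategy space intact, which makes the statement "every $\vartheta\in\AA_f$ embeds into $\AA_f^1$" immediate without projecting away coordinates.
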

\begin{proof}
  We have proved the equivalent claim for the case of $n=1$; this general case does not require any new ideas. Since the no-arbitrage condition is defined in terms of selections
$$
  V\,\textrm{ satisfies NA }\quad\Longleftrightarrow\quad \bigcap_{Z\in L^0(\textrm{ri}\,K^\circ)}\big\{\vartheta\in\AA\,\big|\,V_Z^\infty(\vartheta)\geq0\big\} = \{0\},
$$
we will need to formulate the proof differently.

Define the filtration $\F_0=(\FF_0,\ldots,\FF_{T-1})$ and for each selection $Z\in L^0(\textrm{ri}\,K^\circ)$ the set of market models $V_Z^0(\vartheta) = V_Z(\vartheta)$. So, we also modify 
$$
  \AA_f^0 = \big\{\vartheta\,\textrm{ adapted to }\F_0\,\big|\,V_Z^0(\vartheta)\geq \langle Z,f\rangle\,\,a.s.\,\,\,\forall Z\in L^0(\FF;\textrm{ri}\,K^\circ)\big\}.
$$
Define the random variable 
$$
  m_0 = \esssup\left\{|\vartheta_0|\,|\,\vartheta\in\AA^0_f\right\}.
$$
As we are taking essential supremum over a set that is upward directed, by boundedness in probability of the set $\AA^0_f$, we necessarily have $m_0<\infty$  a.s.

Now to the 'induction step'. Define the filtration $\F_1=(\FF_1,\FF_1,\FF_2,\ldots,\FF_{T-1})$, i.e. first two sigma algebras are equal to $\FF_1$, the rest stays the same. Define the market models $V_Z^2(\vartheta) = V_Z^2(\vartheta) - \indic_{|\vartheta_0|\leq m_0}$, where the indicator is the convex-analytic one: takes the value 0 if the argument is true and $\infty$ otherwise. So, we also modify 
$$
  \AA_f^1 = \big\{\vartheta\,\textrm{ adapted to }\F_1\,\big|\,V_Z^1(\vartheta)\geq \langle Z,f\rangle\,\,a.s.\,\,\,\forall Z\in L^0(\FF;\textrm{ri}\,K^\circ)\big\}.
$$
Before proceeding, we check the no-arbitrage condition. Assume that $\vartheta$ is an arbitrage oportunity for the new market model, i.e. $(V_Z^1)^\infty(\vartheta)\geq0$ a.s. for all $Z\in L^0(\textrm{ri}\,K^\circ)$. But then, by the definition of $V_Z^1$, we necessarily have $\vartheta_0=0$. This implies that $\vartheta$ is $\F_0$ adapted, hence also $(V_Z^0)^\infty(\vartheta)=(V_Z^1)^\infty(\vartheta)\geq0$ a.s., i.e. $\vartheta$ is an arbitrage strategy for the market $V$, which is a contradiction to the assumptions. Now, the previous lemma implies that $\AA_f^1$ is bounded in probability.
 Define the random variable 
$$
  m_1 = \esssup\left\{|\vartheta_1|\,|\,\vartheta\in\AA^0_f\right\},
$$
which is, again, an essential supremum over a set that is upward directed, hence $m_1<\infty$  a.s.

Repeat this step for all $t$ up to $t=T-1$ and then set $K_f = m_0 +\cdots+m_{T-1}$.
\end{proof}

Note that in the proof we have called a collection $\{V^i_Z\,|\,Z\in L^0(\FF;\textrm{ri}\,K^\circ)\}$ a market model and used claims above for the vector valued market model. This is legitimate, since we are defining all concepts in terms of scalarizations. Therefore, we may `forget' that behind the collection $\{\langle Z,V\rangle\,|\,Z\in L^0(\FF;\textrm{ri}\,K^\circ)\}$ there is a market model $V:\Omega\times\R^{dT}\rightarrow\R^n$.

We now state the classic theorem that no-arbitrage implies the closedness of the set of superhedgeable claims
$$
  \CC = \big\{h\in L^0(\FF;\R^n)\,\big|\,\exists\vartheta\in\AA:\,\,\widehat V(\vartheta)\succeq_K h\,\,\, a.s.\big\}.
$$
We also define the set $\CC_f = \{h\in\CC\,|\,h\succeq_Kf\,\,a.s.\}$.

\begin{lemma}
  If a market model satisfies the no-arbitrage condition, then $\CC_f$ is bounded in probability.
\end{lemma}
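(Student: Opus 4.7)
The strategy is to reduce the $\R^n$-valued boundedness to the one-dimensional statement already proved for scalarizations. Any $h\in\CC_f$ admits a witness $\vartheta\in\AA$ with $V(\vartheta)\succeq_K h\succeq_K f$; in particular $\vartheta\in\AA_f$, which by the preceding lemma is bounded in probability. Hence the witnesses form a bounded-in-probability family, and it remains to push this bound through the cone sandwich $f\preceq_K h\preceq_K V(\vartheta)$ to $h$ itself.

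For any fixed selection $Z\in L^0(\FF;\ri K^\circ)$ the scalarized model $\widehat V_Z$ is upper-semicontinuous by Definition~\ref{def:usc} and hence, by Theorem~\ref{thm:rep1D}, is represented by a normal integrand $V_Z$ with $V_Z(\omega,\cdot)$ u.s.c.\ for every $\omega$. Setting $m_i^Z(\omega):=\sup_{|x|\leq i}V_Z(\omega,x)$ yields a.s.\ finite random variables, and a verbatim repetition of the estimate in the proof of Lemma~\ref{lem:efficient_liability}, with $\AA_f$ bounded in probability in place of the hypothetical sequence $(\vartheta_n)$ there, shows that $\{V_Z(\vartheta):\vartheta\in\AA_f\}$ is bounded above in probability. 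Since $\langle Z,f\rangle\leq\langle Z,h\rangle\leq V_Z(\vartheta)$ for any witness $\vartheta$ of $h$, this gives that $\{\langle Z,h\rangle:h\in\CC_f\}$ is bounded in probability, uniformly over $\CC_f$.

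To convert these scalarwise bounds into a bound on $|h|$ I would pick a single $Z^\ast\in L^0(\FF;\mathrm{int}\,K^\circ)$. Nonemptiness of $\mathrm{int}\,K^\circ$ requires pointedness of $K$, which is implicit in the setting (otherwise $\CC_f$ would be invariant under the lineality space of $K$ and could not be bounded in probability). A standard measurable-selection argument on the compact unit sphere produces a strictly positive random constant $c(\omega)>0$ with $\langle Z^\ast(\omega),y\rangle\geq c(\omega)|y|$ for every $y\in K(\omega)$. Applied to $h-f\in K$ this yields $|h-f|\leq c^{-1}\bigl(\langle Z^\ast,h\rangle-\langle Z^\ast,f\rangle\bigr)$; the right hand side is a finite random variable plus something bounded in probability, hence bounded in probability. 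Adding $|f|$ gives $|h|$ bounded in probability. The main technical obstacle is precisely this last passage from scalar to Euclidean boundedness, which hinges on the quantitative estimate $\langle Z^\ast,\cdot\rangle\geq c|\cdot|$ on $K$ and therefore on the pointedness of $K$; measurability of the comparison constant $c$ follows from the measurable-selection results in the appendix.
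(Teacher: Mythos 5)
Your proof is correct, and it is worth comparing with the paper's, because the paper disposes of this lemma in one line (``the proof proceeds by the same line of argument as in the $n=1$ case''), whereas you actually supply the step that is genuinely specific to the vector-valued setting. The common core is identical: every $h\in\CC_f$ has a witness $\vartheta\in\AA_f$, the preceding lemma makes $\AA_f$ bounded in probability, and the local upper bounds $m_i$ coming from upper-semicontinuity (exactly as in Lemma~\ref{lem:efficient_liability}) turn boundedness of the strategies into an upper bound in probability on the scalarized gains $V_Z(\vartheta)$, hence on $\langle Z,h\rangle$, while $\langle Z,h\rangle\geq\langle Z,f\rangle$ bounds it below. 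What the paper's ``same argument'' glosses over is the passage from these scalar bounds to a bound on $|h|$ itself; your route through a selection $Z^\ast\in L^0(\FF;\mathrm{int}\,K^\circ)$, the measurable radius from Lemma~\ref{lemma:zasto ne radi}, and the resulting estimate $\langle Z^\ast,y\rangle\geq c|y|$ on $K$ applied to $h-f\in K$ is the right way to do it. Your side remark is also substantive: this last step needs $K$ pointed (equivalently $\mathrm{int}\,K^\circ\neq\varnothing$), which the paper nowhere assumes, and indeed without pointedness the lemma is simply false whenever $\CC_f\neq\varnothing$, since $\CC_f$ is then invariant under translation by the lineality space of $K$. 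So your proof is not a different strategy so much as an honest completion of the one the paper intends, and it exposes a hypothesis the paper should have stated.
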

\begin{proof}
  The proof proceeds by the same line of argument as in the $n=1$ case.
\end{proof}

\begin{theorem}
  Let $\widehat V$ be an upper-semicontinuous market model satisfying the no-arbitrage condition. Then the set $\CC$ is closed in probability.
\end{theorem}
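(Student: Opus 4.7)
The plan is to mirror the scalar argument (Theorem~\ref{thm:closure}) but work via scalarizations through the selections of $\ri K^\circ$. Start with a sequence $(h^k)\subseteq\CC$ converging in probability to some $h\in L^0(\FF;\R^n)$, and by passing to a subsequence assume convergence is almost sure. For each $k$, pick $\vartheta^k\in\AA$ with $\widehat V(\vartheta^k)\succeq_K h^k$.

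The first step is to dominate the sequence $(h^k)$ from below in the cone order by a single finite random vector, so that all the $\vartheta^k$ land in a common set $\AA_f$. Almost sure convergence gives a finite random variable $M=\sup_k|h^k|$ a.s. Since $\mathrm{int}\,K\neq\varnothing$ a.s., one can pick a measurable selection $e\in L^0(\FF;\mathrm{int}\,K)$ together with a strictly positive random radius $r$ such that $B(e,r)\subseteq K$. Setting $f:=-\tfrac{M}{r}e$ one checks, using positive homogeneity of the cone and the bound $|h^k|\leq M$, that $h^k-f\in K$ a.s. for every $k$. Hence $\vartheta^k\in\AA_f$ for all $k$, and by Theorem~\ref{thm:superhedging bound} there is a finite random variable $K_f$ with $|\vartheta^k|\leq K_f$ almost surely.

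The next step is extraction: Lemma~\ref{lem:von} applies, since $(\vartheta^k)$ is now bounded in probability, yielding an increasing random subsequence $(\tau(k))\subset L^0(\FF;\N)$ and an adapted strategy $\vartheta\in\AA$ with $\vartheta^{\tau(k)}\to\vartheta$ a.s. It remains to pass the inequality $\widehat V(\vartheta^{\tau(k)})\succeq_K h^{\tau(k)}$ to the limit. Here we invoke upper-semicontinuity through scalarizations: for each $Z\in L^0(\FF;\ri K^\circ)$ the map $\widehat V_Z$ is an upper-semicontinuous scalar market model (Definition~\ref{def:usc}), hence by the corollary following Theorem~\ref{thm:rep1D} it is also upper-semicontinuous along a.s.\ convergent sequences. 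Consequently
\begin{align*}
\langle Z,h\rangle
&=\lim_{k\to\infty}\langle Z,h^{\tau(k)}\rangle
\leq \limsup_{k\to\infty}\widehat V_Z(\vartheta^{\tau(k)})
\leq \widehat V_Z(\vartheta)
=\langle Z,\widehat V(\vartheta)\rangle\quad a.s.
\end{align*}
Since this holds for every measurable selection of $\ri K^\circ$, the bipolar identity for $K$ (as in the lemma characterizing $\succeq_K$ via a Castaing representation of $K^\circ$) yields $\widehat V(\vartheta)\succeq_K h$ a.s., so $h\in\CC$.

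The step I expect to be the main obstacle is producing the finite dominating vector $f$: in the scalar case one simply sets $f=\inf_k h^k$, but in $\R^n$ the cone order is only partial, and the argument needs the nonempty interior of $K$ and a measurable choice of an interior direction $e$ together with an interior radius $r$. Once this geometric input is in place, the rest is essentially bookkeeping: uniform bound from Theorem~\ref{thm:superhedging bound}, random subsequence via Lemma~\ref{lem:von}, and limit passage via scalarized upper-semicontinuity.
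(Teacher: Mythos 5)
Your proposal is correct and follows essentially the same route as the paper: the dominating vector $f=-\tfrac{M}{r}e$ built from an interior selection of $K$ is exactly the paper's Step 1, and the uniform bound, random subsequence extraction, and upper-semicontinuous limit passage match its Step 2. Your explicit scalarization of the final limit step via $Z\in L^0(\FF;\ri K^\circ)$ and the bipolar characterization of $\succeq_K$ merely spells out what the paper leaves implicit.
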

\begin{proof}
  Let $h_n$ be a sequence in $\CC$ that converges in probability to a random vector $h$. Passing to a subsequence, we may assume that the convergence takes place almost surely. 

  \textsc{Step 1:} there exists a $g\in L^0(\FF;\R^n)$ such that $h_n\succeq_K g$ a.s. for all $n$. Since $h_n\rightarrow h$ a.s. then the random variable $\rho = \sup_n\|h_n\|$ is finite. Now, choose a measurable selection $x\in L^0(\FF;\mathrm{int}\,K)$ and an $\FF$ measurable random variable $r>0$, such that $B_r(x)\subset \mathrm{int}\,K$; this exists by Lemma~\ref{lemma:zasto ne radi}. We can choose $g = -\frac{\rho}{r}x$. 

\textsc{Step 2:} Let $\vartheta_n\subset\AA$ be a sequence of strategies with $V(\vartheta_n)\succeq_K h_n$. By Step 1 and Theorem~\ref{thm:superhedging bound}, we know that the sequence $(\vartheta_n)$ is bounded in the sense $|\vartheta_n|\leq K_g$. As in the proof of the case with $n=1$, we may choose a random subsequence $\vartheta_{\tau(n)}$ such that $\lim_n\vartheta_{\tau(n)}=\vartheta\in\AA$ and $\tau(n):\Omega\rightarrow\N$ is an increasing sequence of random variables. The result now follows from the observation that $V$ is upper-semicontinuous and $V(\vartheta_{\tau(n)})\succeq_K h_{\tau(n)}$. So, $V(\vartheta)\succeq_K h$.
\end{proof}

We give here a simple model of a financial market that is in the style of a market model presented by~\cite{bouchard2006no}. It is the basic model of a market with transaction costs. The modelling approach comes from the first paper on currency markets~\cite{kabanov1999hedging}.

\begin{example}\label{example of bouchard}
Consider the financial market model with $d$ assets. The portfolio of the trader is modeled as a sequence of random vectors $(V_t)_{i=0}^T$, each component of which specifies the number of shares of the asset the trader holds in the portfolio. 

Trading is included into the market through the sequence of orders. Denote by $\textbf{M}^d$ the set of matrices with zero diagonal. Trading in the market will be encoded with elements of $\textbf{M}^d$, where the number in the slot $(i,j)$ denotes the number of shares of asset $i$ that is transfered to asset $j$. Negative entry implies that shares of asset $i$ are bought and paid from asset $j$. Denote the space of $\F$-adapted, $\textbf{M}^d$-valued sequences with $\AA$. 

Part of the volume traded is absorbed by the market as transaction costs. We will denote by $F_t:\Omega\times\textbf{M}^d\rightarrow\R^d$ the change of portfolio of the trader as a consequence of executing the order $\vartheta_{t-1}$, i.e. for $\vartheta\in\AA$ we denote $V_t-V_{t-1} = F_t(\vartheta_{t-1})$. Our market model is then $V(\vartheta)=\sum_{i=1}^T F_i(\vartheta_{i-1})$.
We will assume that
\begin{enumerate}
  \item[(i)] for each fixed $\omega$, the mapping $x\mapsto F_t(\omega,x)$ is continuous; and
  \item[(ii)] for each fixed $x$, the mapping $\omega\mapsto F_t(\omega,x)$ is $\FF$ measurable.
\end{enumerate}
In other words, we assume that $F_t$ are $\FF$ Carath\'eodory integrands; this in turn implies that $V$ is an upper-semicontinuous market model. 

In order to talk about the no-arbitrage condition, let us specify order on $\R^d$ by defining the order cone as $K=\R^d_+$. The no-arbitrage condition is expressed in terms of the recession market model and one can easily convince oneself that this is given as $V^\infty(\vartheta)=\sum_{i=1}^T F_i^\infty(\vartheta_{i-1})$. The no-arbitrage condition is then expressed as: $V^\infty(\vartheta)\succeq_{\R^d_+}0$ implies that $\vartheta=0$. Compare this to the weak no-arbitrage condition of~\cite{bouchard2006no} for the case of concave positively homogeneous market model.
\end{example}

\appendix
\section{On measurable correspondences}
\label{ap:prvi}

Here we list some general information about measurable correspondences. The main reference for this is Chapter~14 in~\cite{rockafellar1998variational}.

A set valued mapping $A:\Omega\rightrightarrows\R^n$ is called an $\FF$ measurable correspondence if for every open set $V\subset\R^n$ the set $\{\omega\in\Omega\,|\,A(\omega)\cap V\not=\varnothing\}$ is in $\FF$. When $A$ is single valued, i.e. $A(\omega)$ is a singleton for almost all $\omega$, this definition coincides with the definition of a random vector.

If $A$ is an $\FF$ measurable correspondence, then also the closure $\omega\mapsto\overline{A(\omega)}$ is an $\FF$ measurable correspondence; see Proposition~14.2 in~\cite{rockafellar1998variational}.

A measurable selection $\phi$ of a measurable correspondence $A$ is a random vector $\phi\in L^0(\FF;\R^n)$ such that $\phi(\omega)\in A(\omega)$ for all $\omega$. We denote the set of all $\FF$ measurable selections of a correspondence $A$ by $L^0(\FF;A)$.
 
A Castaing representation $\c$ of a closed-valued measurable correspondence $A$ is a countable set $\c=\{\phi_k\,|\,k\in\N\}$ of measurable selections of the correspondence $A$, such that
$$
A(\omega) = \overline{\{\phi_k(\omega)\,|\,\phi_k\in\c\}}\quad\textrm{ for all }\omega.
$$
A Castaing representation of an $\FF$ measurable, closed valued correspondence always exists; see Theorem~14.5 in~\cite{rockafellar1998variational}.

\begin{lemma}
A convex valued correspondence $A:\Omega\rightrightarrows\R^n$ admits a measurable selection $\rho$ of the relative interior of $A$. 
\end{lemma}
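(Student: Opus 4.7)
The plan is to construct $\rho$ explicitly as a strictly convex combination of the elements of a Castaing representation, and then to verify pointwise that such a combination lands in the relative interior. First I would pass from $A$ to its closure $\overline A$: by Proposition~14.2 in~\cite{rockafellar1998variational} this is an $\FF$-measurable closed-valued correspondence, and for finite-dimensional convex sets one has $\mathrm{ri}\,A(\omega)=\mathrm{ri}\,\overline{A(\omega)}\subseteq A(\omega)$, so it suffices to produce a measurable selection of $\mathrm{ri}\,\overline A$. Applying Theorem~14.5 in~\cite{rockafellar1998variational}, fix a Castaing representation $\{\phi_k\}_{k\in\N}\subset L^0(\FF;\R^n)$ of $\overline A$.

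Next, define measurable weights
$$
  w_k(\omega) = \frac{2^{-k}\,(1+\|\phi_k(\omega)\|)^{-1}}{\sum_{j\in\N}2^{-j}\,(1+\|\phi_j(\omega)\|)^{-1}},
$$
which are strictly positive, sum pointwise to $1$, and satisfy $\sum_k w_k(\omega)\|\phi_k(\omega)\|<\infty$. Set
$$
  \rho(\omega):=\sum_{k\in\N} w_k(\omega)\,\phi_k(\omega).
$$
The series converges absolutely, and $\rho$ is $\FF$-measurable as a pointwise limit of measurable partial sums. Since $\overline{A(\omega)}$ is convex and closed and the tail of the series is itself a convex combination of points of $\overline{A(\omega)}$, one checks that $\rho(\omega)\in\overline{A(\omega)}$.

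The remaining step, which I expect to be the principal obstacle, is the pointwise verification that $\rho(\omega)\in\mathrm{ri}\,\overline{A(\omega)}$. Fix $\omega$ and set $d=\dim\mathrm{aff}\,\overline{A(\omega)}\leq n$. Because $\{\phi_k(\omega)\}_{k\in\N}$ is dense in $\overline{A(\omega)}$, its affine hull equals $\mathrm{aff}\,\overline{A(\omega)}$, so one can extract a finite index set $K\subset\N$ of size $d+1$ such that $\{\phi_k(\omega)\,|\,k\in K\}$ is affinely independent and spans $\mathrm{aff}\,\overline{A(\omega)}$. Write $\mu=\sum_{k\in K}w_k(\omega)\in(0,1)$ and decompose
$$
  \rho(\omega)=\mu\,y+(1-\mu)\,z,\qquad y=\sum_{k\in K}\tfrac{w_k(\omega)}{\mu}\phi_k(\omega),\quad z=\sum_{k\notin K}\tfrac{w_k(\omega)}{1-\mu}\phi_k(\omega).
$$
Here $y$ is a strict convex combination of affinely spanning points, hence an interior point of the simplex they generate inside $\mathrm{aff}\,\overline{A(\omega)}$, and in particular $y\in\mathrm{ri}\,\overline{A(\omega)}$; while $z\in\overline{A(\omega)}$ by convexity and closedness. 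The line-segment principle for relative interiors then gives $\rho(\omega)\in\mathrm{ri}\,\overline{A(\omega)}=\mathrm{ri}\,A(\omega)$, finishing the proof. The only delicate point is the extraction of the set $K$ at each $\omega$, but since it is used purely pointwise it imposes no measurability burden on $\rho$, whose measurability comes from the global formula above.
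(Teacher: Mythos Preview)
Your proof is correct and follows essentially the same idea as the paper's: form a strict convex combination of a Castaing representation of $\overline{A}$ and observe that this lands in $\mathrm{ri}\,\overline{A}=\mathrm{ri}\,A$. The only cosmetic difference is how summability is arranged: the paper first restricts to the bounded correspondence $\overline{A}\cap B_{\|\psi\|+1}(0)$ (for some selection $\psi$) so that fixed weights $2^{-k}$ suffice, whereas you keep the full Castaing representation and absorb unboundedness into $\omega$-dependent weights; your verification that $\rho(\omega)\in\mathrm{ri}\,\overline{A(\omega)}$ via an affinely spanning finite subfamily and the line-segment principle is exactly the check the paper leaves to the reader.
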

\begin{proof}
By above, the measurable correspondence $\overline{A}$ admits a measurable selection $\psi$. Choose a Castaing representation $\c=\{\phi_n\,|\,n\in\N\}$ for the measurable correspondence $\overline{A}\cap B_{\|\psi\|+1}(0)$. One may now simply check that the random vector $\rho=\sum_{k=1}^\infty 2^{-k}\phi_k$ is the sought for measurable selection.
\end{proof}

\begin{corollary}
  A convex valued correspondence $A:\Omega\rightrightarrows\R^n$ admits a Castaing representation $\c$ with all elements $\phi\in\c$ selections for the relative interior of $A$.
\end{corollary}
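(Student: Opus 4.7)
My plan is to build the desired Castaing representation by perturbing an arbitrary Castaing representation of $\overline{A}$ toward a measurable selection of $\ri A$, using the line-segment principle for convex sets.

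First, since $\overline{A}:\Omega\rightrightarrows\R^n$ is a closed-valued measurable correspondence (closedness of a measurable correspondence being preserved under closure, per Proposition~14.2 in~\cite{rockafellar1998variational}), I would invoke Theorem~14.5 in~\cite{rockafellar1998variational} to obtain a Castaing representation $\{\psi_k\,|\,k\in\N\}$ of $\overline{A}$. Independently, the preceding lemma gives me a measurable selection $\rho \in L^0(\FF;\ri A)$.

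Next, for each pair $(k,n)\in\N\times\N$ I would define
$$
  \phi_{k,n} = \Big(1-\tfrac{1}{n}\Big)\psi_k + \tfrac{1}{n}\rho.
$$
These are clearly $\FF$ measurable as $\R^n$-valued random vectors. The key convex-analytic input is the line-segment principle: if $x\in\overline{C}$ and $y\in\ri C$ for a convex set $C\subseteq\R^n$, then every point of the half-open segment $(x,y]$ lies in $\ri C$ (see Theorem~6.1 in~\cite{rockafellar1998variational}). Applied pointwise in $\omega$ with $C=A(\omega)$, $x=\psi_k(\omega)\in\overline{A(\omega)}$ and $y=\rho(\omega)\in\ri A(\omega)$, this gives $\phi_{k,n}(\omega)\in\ri A(\omega)$ for every $\omega$ and every $n\geq 1$. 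Hence the countable collection $\c=\{\phi_{k,n}\,|\,k,n\in\N\}$ consists entirely of measurable selections of $\ri A$.

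Finally I need to check that $\c$ is a Castaing representation of $\overline{A}$. On the one hand, every $\phi_{k,n}(\omega)$ lies in $\overline{A(\omega)}$, so the closure of $\{\phi_{k,n}(\omega)\,|\,k,n\in\N\}$ is contained in $\overline{A(\omega)}$. On the other hand, fixing $\omega$ and $k$, we have $\phi_{k,n}(\omega)\to\psi_k(\omega)$ as $n\to\infty$, so each $\psi_k(\omega)$ lies in this closure; since $\{\psi_k(\omega)\,|\,k\in\N\}$ is already dense in $\overline{A(\omega)}$, the closure of $\{\phi_{k,n}(\omega)\}$ equals $\overline{A(\omega)}$, as required. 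The only subtlety is the pointwise application of the line-segment principle, which needs $\ri A(\omega)\neq\varnothing$; this is automatic for a nonempty convex set in $\R^n$, and nonemptyness of $A(\omega)$ is ensured by the existence of the selection $\rho$.
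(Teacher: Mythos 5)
Your proof is correct and follows essentially the same route as the paper: take a Castaing representation of $\overline{A}$, a measurable selection $\rho$ of $\ri A$, and form the countable family of convex combinations $\frac1n\rho+(1-\frac1n)\phi$, which lie in $\ri A$ by the line-segment principle and accumulate back to the original dense family as $n\to\infty$. The only quibble is that the line-segment principle is Theorem~6.1 of Rockafellar's \emph{Convex Analysis} rather than of the variational analysis reference, but this does not affect the argument.
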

\begin{proof}
  Let $\c$ be the Castaing representation of the correspondence $\omega\mapsto\overline{A(\omega)}$ and let $\rho$ be a measurable selection of $\textrm{ri}\,A$. Then the set 
$$
  \widehat\c = \left\{\frac1n\rho + \left(1-\frac1n\right)\phi\,\Big|\,\phi\in\c,\,\,n\in\N\right\}
$$
has the desired properties. Note, in particular, that $\ri A\subseteq A\subseteq\overline A$.
\end{proof}

\begin{corollary}\label{corollary:relative interior}
  Let $A$ be a convex valued measurable correspondence. Then the correspondence $\textrm{ri}\,A$ is measurable.
\end{corollary}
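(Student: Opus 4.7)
The plan is to lean entirely on the preceding corollary, which produces a countable family $\widehat\c=\{\phi_k:k\in\N\}$ of $\FF$-measurable selections of $\ri A$ whose pointwise closure is $\overline{A(\omega)}$. Once this family is in hand, the measurability of $\ri A$ reduces to a countable union of preimages of open sets under the $\phi_k$.

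The first step is to upgrade the density statement: although $\widehat\c(\omega)$ is a priori dense only in $\overline{A(\omega)}$, the inclusion $\ri A(\omega)\subseteq\overline{A(\omega)}=\overline{\widehat\c(\omega)}$ together with the fact that every $\phi_k(\omega)$ already lies in $\ri A(\omega)$ shows that $\widehat\c(\omega)$ is in fact dense in $\ri A(\omega)$ itself. Nothing more than set-theoretic manipulation is needed here, so this is not the hard part.

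The second step is to exploit the density to obtain the countable characterization of measurability. For any open set $V\subseteq\R^n$, I would establish
$$V\cap\ri A(\omega)\neq\varnothing\quad\Longleftrightarrow\quad \exists k\in\N:\,\phi_k(\omega)\in V,$$
where $(\Leftarrow)$ is immediate since $\phi_k(\omega)\in\ri A(\omega)$, and $(\Rightarrow)$ follows because $V$ is open and $\widehat\c(\omega)$ is dense in $\ri A(\omega)$. Consequently
$$\{\omega\in\Omega\mid V\cap\ri A(\omega)\neq\varnothing\}=\bigcup_{k\in\N}\phi_k^{-1}(V)\in\FF,$$
by $\FF$-measurability of each selection $\phi_k$. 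Testing against open balls is enough, so this yields measurability of the correspondence $\omega\mapsto\ri A(\omega)$.

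The only point requiring any care is the direction $(\Rightarrow)$ of the equivalence, which is the reason the preceding corollary was constructed in the form it was: having a Castaing-type family that both selects into $\ri A$ \emph{and} is dense in the closure is exactly what makes the argument collapse to a one-liner. No further approximation or selection theorem is required.
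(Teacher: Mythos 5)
Your proof is correct, and it rests on the same key ingredient as the paper's: the preceding corollary, which supplies a Castaing-type family $\widehat\c=\{\phi_k\}\subset L^0(\FF;\ri A)$ that is dense in $\overline{A(\omega)}$. The two arguments diverge only in the final step. The paper identifies $\ri A(\omega)$ with $\conv\{\phi_k(\omega)\,|\,k\in\N\}$ and invokes the measurability of convex hulls of countable families of selections (Exercise 14.12(a) in the cited reference); this tacitly uses the small additional fact that the convex hull of a subset of $\ri A(\omega)$ which is dense in $\overline{A(\omega)}$ is all of $\ri A(\omega)$. You instead verify the hit-set definition of measurability directly: since each $\phi_k(\omega)$ lies in $\ri A(\omega)$ and the family is dense in $\overline{A(\omega)}\supseteq\ri A(\omega)$, an open $V$ meets $\ri A(\omega)$ if and only if it contains some $\phi_k(\omega)$, whence the hit-set equals $\bigcup_{k}\phi_k^{-1}(V)\in\FF$. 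Your route is marginally more elementary and self-contained, bypassing both the convex-hull identity and the external exercise, while the paper's version is shorter on the page because it delegates that work to a reference; both are sound.
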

\begin{proof}
  Let $\c$ be a Castaing representation with $\c\subset L^0(\FF;\textrm{ri}\,A)$. Then by Exercise 14.12(a) in~\cite{rockafellar1998variational} also its convex hull
$$
  \omega\mapsto\textrm{conv}\{\phi(\omega)\,|\,\phi\in\c\}=\textrm{ri}\,A(\omega)
$$
is measurable.
\end{proof}

\begin{lemma}\label{lemma:zasto ne radi}
  Let $A:\Omega\rightrightarrows\R^n$ be a convex valued $\FF$ measurable correspondence. Let $\rho$ be a measurable selection of $\textnormal{int}\,A$, which we assume to be non-empty. Then there exists a random variable $r>0$ such that $B_{r(\omega)}(\rho(\omega))\subset A(\omega)$ a.s.
\end{lemma}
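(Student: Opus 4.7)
The plan is to take $r(\omega)$ to be essentially half the Euclidean distance from $\rho(\omega)$ to the complement of the closure $\overline A(\omega)$.

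First, I would reduce to the closed case. Since $A$ is convex with non-empty interior, a standard fact from finite-dimensional convex analysis gives $\textnormal{int}\,A=\textnormal{int}\,\overline A$, and $\overline A$ is again $\FF$-measurable by Proposition~14.2 in~\cite{rockafellar1998variational}. Hence it suffices to produce a measurable $r>0$ with $B_{r(\omega)}(\rho(\omega))\subseteq\textnormal{int}\,\overline A(\omega)$, since $\textnormal{int}\,\overline A(\omega)=\textnormal{int}\,A(\omega)\subseteq A(\omega)$.

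Next, I would set
$$
  d(\omega):=\inf\bigl\{\|\rho(\omega)-y\|\ :\ y\in\R^n\setminus\overline A(\omega)\bigr\},\qquad r(\omega):=\min\bigl\{1,\tfrac12 d(\omega)\bigr\}.
$$
Positivity of $d(\omega)$ a.s.\ is immediate from $\rho(\omega)\in\textnormal{int}\,\overline A(\omega)$, and a direct triangle-inequality argument shows $B_{r(\omega)}(\rho(\omega))\subseteq\textnormal{int}\,\overline A(\omega)\subseteq A(\omega)$ a.s. The main obstacle is the measurability of $d$: the set-valued map $\omega\mapsto\R^n\setminus\overline A(\omega)$ is open-valued and is not a ``measurable correspondence'' in the sense used above, so the standard distance-function measurability results cannot be applied directly.

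To circumvent this, I would exploit density of $\Q^n$ inside the open set $\R^n\setminus\overline A(\omega)$ to rewrite
$$
  d(\omega)=\inf\bigl\{\|\rho(\omega)-q\|\ :\ q\in\Q^n,\ q\notin\overline A(\omega)\bigr\}.
$$
For each fixed $q\in\Q^n$ the set $\{\omega:q\in\overline A(\omega)\}=\bigcap_{k\in\N}\{\omega:B_{1/k}(q)\cap\overline A(\omega)\neq\varnothing\}$ belongs to $\FF$ by the definition of a measurable correspondence applied to $\overline A$. Consequently
$$
  d(\omega)=\inf_{q\in\Q^n}\Bigl(\|\rho(\omega)-q\|\cdot\indic_{\{q\notin\overline A(\omega)\}}+(+\infty)\cdot\indic_{\{q\in\overline A(\omega)\}}\Bigr)
$$
is a countable infimum of $\FF$-measurable functions, hence itself $\FF$-measurable, which finishes the proof.
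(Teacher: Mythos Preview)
Your proof is correct and follows essentially the same strategy as the paper: both express the desired radius as the distance from $\rho(\omega)$ to the complement (equivalently, the boundary) of $A(\omega)$ and establish measurability by rewriting this distance as a countable infimum over $q\in\Q^n$ of $\FF$-measurable functions. The only differences are cosmetic: you first pass explicitly to $\overline A$ and then use the simpler expression $\inf_{q\notin\overline A}\|q-\rho\|$, whereas the paper works directly with $A$ and the formula $\tfrac12\inf_{q\in\Q^n}\bigl[\,|q-\rho|-d(q,A)+\infty\cdot\indic_A(q)\,\bigr]$, invoking measurability of $\omega\mapsto d(q,A(\omega))$ from Theorem~14.3(j) of~\cite{rockafellar1998variational}.
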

\begin{proof}
  To prove the existence of the random variable $r$, note that for any $r$ such that $B_r(x)\subset A$, by the triangle inequality, we have for every $q\in\Q^n$
$$
  r\leq |q-x|-d(q,A) + \infty\indic_{A}(q),
$$
where $d(q,K)$ is the distance of the point $q$ to the correspondence $A$. The indicator is the classical one; it has value 1 if the argument is true and 0 otherwise. This term in the estimate above says that for each $\omega$ we consider only $q$ that are not in $A$. The expression on the right hand side is $\FF$ measurable by~\cite{rockafellar1998variational}, Theorem 14.3 (j) and Example 14.7. So, 
$$
  r = \frac12 \inf_{q\in\Q^n}\left[ |q-x|-d(q,A)+\infty\indic_A(q)\right]
$$
satisfies the desired properties.
\end{proof}

An analogous statement can be shown also for convex correspondences that do not have an interior.
\begin{lemma}\label{lemma relative interior distance}
  Let $A$ be a convex valued correspondence and let $\rho$ be the measurable selection of $\ri A$. Then there exists a random variable $r>0$, such that $$B_{r(\omega)}(\rho(\omega))\cap\textnormal{aff}\,A(\omega)\subset A(\omega)$$ for almost all $\omega$, where by $\textnormal{aff}\,A(\omega)$ we denoted the affine hull of $A$.
\end{lemma}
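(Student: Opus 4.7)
The plan is to mirror the proof of Lemma~\ref{lemma:zasto ne radi}, replacing the ambient Euclidean space $\R^n$ by the random affine subspace $\textnormal{aff}\,A(\omega)$. The first ingredient is a countable family $\{\psi_\ell\}$ of $\FF$-measurable selections whose values are pointwise dense in $\textnormal{aff}\,A(\omega)$. By the Corollary preceding Corollary~\ref{corollary:relative interior}, $A$ admits a Castaing representation $\{\phi_k\}$ all of whose elements lie in $\ri A$; since $\textnormal{aff}\,\ri A=\textnormal{aff}\,A$ for convex $A$ and the values $\{\phi_k(\omega)\}$ are dense in $\ri A(\omega)$, enumerating all finite rational affine combinations $\sum_i q_i\phi_{k_i}$ (with $q_i\in\Q$ and $\sum_i q_i=1$) yields the desired $\{\psi_\ell\}$.

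Following the preceding lemma, and writing $\overline A$ for the closure of $A$ (an $\FF$-measurable closed-valued correspondence by Proposition~14.2 in~\cite{rockafellar1998variational}), I would set
\[
  r(\omega) \;=\; \tfrac12\, \inf_{\ell\in\N}
    \bigl[\,|\psi_\ell(\omega) - \rho(\omega)| - d\bigl(\psi_\ell(\omega), \overline{A(\omega)}\bigr)
      + \infty\,\indic_{\overline{A(\omega)}}\bigl(\psi_\ell(\omega)\bigr)\,\bigr].
\]
Measurability of $r$ then follows from Theorem~14.3(j) and Example~14.7 of~\cite{rockafellar1998variational}, exactly as in the preceding lemma. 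Pointwise positivity is obtained by applying the proof of that lemma to the closed convex set $\overline{A(\omega)}$ inside its own affine hull $\textnormal{aff}\,\overline A=\textnormal{aff}\,A$: since $\rho\in\ri\overline A$, there is $\epsilon(\omega)>0$ with $B_\epsilon(\rho)\cap\textnormal{aff}\,A\subset\overline A$, and the triangle-inequality argument forces $|\psi_\ell-\rho|-d(\psi_\ell,\overline A)\geq\epsilon$ whenever $\psi_\ell\notin\overline A$, so $r(\omega)\geq\epsilon(\omega)/2>0$.

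To verify $B_{r(\omega)}(\rho(\omega))\cap\textnormal{aff}\,A(\omega)\subset A(\omega)$, I argue by contradiction: suppose $y\in B_r(\rho)\cap\textnormal{aff}\,A$ with $y\notin A$. If $y\in\textnormal{aff}\,A\setminus\overline A$, this set is relatively open in $\textnormal{aff}\,A$, so density of $\{\psi_\ell\}$ produces $\psi_{\ell_n}\to y$ with $\psi_{\ell_n}\notin\overline A$; the defining inequality for $r$ gives $|\psi_{\ell_n}-\rho|-d(\psi_{\ell_n},\overline A)\geq 2r$, and passing to the limit yields $|y-\rho|\geq 2r$, contradicting $|y-\rho|<r$. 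If instead $y\in\overline A\setminus A$, then $y$ lies in the relative boundary of $\overline A$ (because $\overline A\setminus A\subset\overline A\setminus\ri\overline A$), and since $\rho\in\ri\overline A$ the standard convex-analytic fact that the ray from $\rho$ through $y$ exits $\overline A$ precisely at $y$ shows that
\[
  \tilde y \;:=\; y + \tfrac{r}{2}\,\frac{y-\rho}{|y-\rho|}
\]
lies in $\textnormal{aff}\,A\setminus\overline A$; computing $|\tilde y-\rho|=|y-\rho|+r/2<3r/2<2r$ and applying the first case to $\tilde y$ produces the required contradiction.

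The main obstacle is this final verification: the infimum formula naturally controls the distance from $\rho$ to the exterior $\textnormal{aff}\,A\setminus\overline A$ of the closure, whereas the statement demands an inclusion in $A$, which may be strictly smaller than $\overline A$ on the relative boundary. The ``push a tiny bit past $y$'' construction in the second case converts a missing relative-boundary point into an exterior point along the outward ray from $\rho$, thereby reducing the boundary case to the already-handled exterior case and allowing the formula written in terms of $\overline A$ to deliver the inclusion in $A$ itself.
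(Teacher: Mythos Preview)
Your argument is correct, but the paper takes a much shorter route. Rather than rebuilding the infimum formula inside the random affine subspace, the paper observes that the translated span $\widehat A(\omega)=\textnormal{aff}\,A(\omega)-\rho(\omega)$ is a measurable linear-subspace-valued correspondence, and so is its orthogonal complement $\widehat A^\perp$. One then applies Lemma~\ref{lemma:zasto ne radi} directly to the ``thickened'' correspondence $A+\widehat A^\perp$: since $\rho\in\ri A$, one has $\rho\in\textnormal{int}(A+\widehat A^\perp)$, so that lemma yields a measurable $r>0$ with $B_r(\rho)\subset A+\widehat A^\perp$; intersecting with $\textnormal{aff}\,A$ and using $(A+\widehat A^\perp)\cap\textnormal{aff}\,A=A$ gives the claim.

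Your approach, by contrast, transplants the proof of Lemma~\ref{lemma:zasto ne radi} into the affine hull by replacing $\Q^n$ with a countable measurable dense family $\{\psi_\ell\}$ in $\textnormal{aff}\,A$. This is longer but has the virtue of being explicit about the relative-boundary case $y\in\overline A\setminus A$, which the paper's terse proof of Lemma~\ref{lemma:zasto ne radi} does not spell out. The orthogonal-complement trick buys brevity and a clean reuse of the previous lemma as a black box; your route buys a self-contained argument that makes the $A$ versus $\overline A$ distinction visible.
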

\begin{proof}
Note first, that the affine hull $\textnormal{aff}\,A$ is a measurable correspondence by Exercise 14.12(c) in~\cite{rockafellar1998variational}. Then also correspondence $\widehat A(\omega) = \textnormal{aff}\,A(\omega) - \rho(\omega)$ is measurable by Proposition~14.11(c) in~\cite{rockafellar1998variational}. Also, the values of the correspondence $\widehat A$ are linear subspaces of $\R^n$. Thus also the correspondence $\widehat A^\perp$, whose value for each $\omega$ is the orthogonal complement of $\widehat A(\omega)$, is measurable by Exercise~14.12(f) in~\cite{rockafellar1998variational}. The statement of the lemma now follows by applying the previous lemma to the measurable correspondence $A+\widehat A^\perp$.
\end{proof}

%  _   _   _   _   _   _   _   _   _   _   _   _  
%   \_/ \_/ \_/ \_/ \_/ \_/ \_/ \_/ \_/ \_/ \_/ \_/ 
% \_/ \_/ \_/ \_/ \_/ \_/ \_/ \_/ \_/ \_/ \_/ \_/ \
%   \_/ \_/ \_/ \_/ \_/ \_/ \_/ \_/ \_/ \_/ \_/ \_/

\bibliographystyle{plain}
\bibliography{literature}

\end{document}